\newcommand{\set}[1]{\left\{#1\right\}}
\newcommand{\size}[1]{\left| #1 \right|}
\newcommand{\order}[1]{O\hspace{-0.06cm}\left(#1\right)}
\newcommand{\Par}[1]{{\tt par}\left(#1\right)}
\newtheorem{lemma}{\bf Lemma}
\newtheorem{theorem}[lemma]{\bf Theorem}
\newtheorem{corollary}[lemma]{\bf Corollary}
\newtheorem{proposition}[lemma]{\bf Proposition}
\newtheorem{claim*}[lemma]{\bf Claim}
\Crefname{algocf}{Algorithm}{Algorithms}
\newcommand{\ms}[1]{{\ifthenelse{\equal{#1}{}}{{ ms}}{{ ms}\xspace\left(#1\right)}}} 
\newcommand{\li}[1]{{\ifthenelse{\equal{#1}{}}{\ell}{\ell\xspace\left(#1\right)}}} 
\newcommand{\posms}[1]{{\ifthenelse{\equal{#1}{}}{{ p}}{{ p}\xspace\left(#1\right)}}} 
\newcommand{\dist}[2]{{\tt dist}\xspace(#1, #2)}
\newcommand{\depth}[1]{{\tt depth}\xspace(#1)}
\newcommand{\comp}[1]{{\ifthenelse{\equal{#1}{}}{{\tt \mu}}{{\tt \mu}\xspace\left(#1\right)}}}
\newcommand{\mcc}[2]{{\ifthenelse{\equal{#1}{}}{{\tt mcc}}{{\tt mcc}\xspace\left(#1, #2\right)}}}
\newcommand{\minimize}[1]{{\ifthenelse{\equal{#1}{}}{{\Gamma}}{{\Gamma}\xspace\left(#1\right)}}}
\title{Efficient Enumerations for Minimal Multicuts and Multiway Cuts}  
\author[1]{Kazuhiro Kurita}
\author[2]{Yasuaki Kobayashi}
\affil[1]{National Institute of Informatics, Tokyo, Japan, \texttt{kurita@nii.ac.jp}}
\affil[2]{Kyoto University, Kyoto, Japan, \texttt{kobayashi@iip.ist.i.kyoto-u.ac.jp}}
\date{\empty}
\begin{document}

\maketitle

\begin{abstract}
    Let $G = (V, E)$ be an undirected graph and let $B \subseteq V \times V$ be a set of terminal pairs.
    A node/edge multicut is a subset of vertices/edges of $G$ whose removal destroys all the paths between every terminal pair in $B$.
    The problem of computing a {\em minimum} node/edge multicut is NP-hard and extensively studied from several viewpoints.
    In this paper, we study the problem of enumerating all {\em minimal} node multicuts.
    We give an incremental polynomial delay enumeration algorithm for minimal node multicuts, which extends an enumeration algorithm due to Khachiyan et al. (Algorithmica, 2008) for minimal edge multicuts.
    
    Important special cases of node/edge multicuts are node/edge {\em multiway cuts}, where the set of terminal pairs contains every pair of vertices in some subset $T \subseteq V$, that is, $B = T \times T$.
    We improve the running time bound for this special case: We devise a polynomial delay and exponential space enumeration algorithm for minimal node multiway cuts and a polynomial delay and space enumeration algorithm for minimal edge multiway cuts.
\end{abstract}

\section{Introduction}
    Let $G = (V, E)$ be an undirected graph and let $B$ be a set of pairs of vertices of $V$.
    We call a pair in $B$ a {\em terminal pair} and the set of vertices in $B$ is denoted by $T(B)$.
    A {\em node multicut} of $(G, B)$ is a set of vertices $M \subseteq V \setminus T(B)$ such that there is no path between any terminal pair of $B$ in the graph obtained by removing the vertices in $M$.
    A {\em edge multicut} of $(G, B)$ is defined as well: the set of edges whose removal destroys all the paths between every terminal pair.
    The minimum node/edge multicut problem is of finding a smallest cardinality node/edge multicut of $(G, B)$.
    When $B = T \times T$ for some $T \subseteq V$, the problems are particularly called the minimum node/edge multiway cut problems, and a multicut of $(G, B)$ is called a {\em multiway cut} of $(G, T)$.
    
    These problems are natural extensions of the classical minimum $s$-$t$ separator/cut problems, which can be solved in polynomial time using the augmenting path algorithm.
    Unfortunately, these problems are NP-hard~\cite{Dahlhaus::1994} even for planar graphs and for general graphs with fixed $|T| \ge 3$.
     Due to numerous applications (e.g. \cite{Fireman::2007,Kappes::2011,Stone::1977}), a lot of efforts have been devoted to solving these problems from several perspectives such as approximation algorithms~\cite{Garg::1996,Garg::1997,Arora::1999,Calinescu::2000,Karger::2004}, parameterized algorithms~\cite{Marx::2014,Cygan::2013,Guillemot::2011,Marx::2006,Xiao::2010}, and restricting input~\cite{Guo::2008,Bateni::2012,Chen::2004,Dahlhaus::1994,Klein::2012,Marx::2012}.
    
    In this paper, we tackle these problems from yet another viewpoint, in which our focus in this paper is {\em enumeration}.
    Since the problems of finding a {\em minimum} node/edge multicut/multiway cut are all intractable, we rather enumerate {\em minimal} edge/node multicuts/multiway cuts instead.
    We say that a node/edge multicut $M$ of $(G, B)$ is minimal if $M'$ is not a node/edge multiway cut of $(G, B)$ for every proper subset $M' \subset M$, respectively.
    Minimal node/edge multiway cuts are defined accordingly.
    Although finding a minimal node/edge multicut is easy, our goal is to enumerate {\em all} the minimal edge/node multicuts/multiway cuts of a given graph $G$ and terminal pairs $B$.
    In this context, there are several results related to our problems.
    
    There are linear delay algorithms for enumerating all minimal $s$-$t$ (edge) cuts \cite{Provan::1994,Tsukiyama::1980}, which is indeed a special case of our problems, where $T$ contains exactly two vertices $s$ and $t$. 
    Here, an enumeration algorithm has {\em delay} complexity $f(n)$ if the algorithm outputs all the solutions without duplication and for each pair of consecutive two outputs (including preprocessing and postprocessing), the running time between them is upper bounded by $f(n)$, where $n$ is the size of the input. 
    For the node case, the problem of enumerating all minimal $s$-$t$ (node) separators has received a lot of attention and numerous efforts have been done for developing efficient algorithms \cite{Kloks::1998,Shen::1997,Takata::2010} due to many applications in several fields \cite{Bouchitte:2002,Feng::2014,Fomin:2015}.
    The best known enumeration algorithm for minimal $s$-$t$ separators was given by Tanaka \cite{Takata::2010}, which runs in $O(nm)$ delay and $O(n)$ space, where $n$ and $m$ are the number of vertices and edges of an input graph, respectively.
    
    Khachiyan et al.~\cite{Khachiyan::2008} studied the minimal edge multicut enumeration problem.
    They gave an efficient algorithm for this problem, which runs in {\em incremental polynomial time}~\cite{Johnson::1988}, that is, 
    if $\mathcal M$ is a set of minimal edge multicuts of $(G, B)$ that are generated so far, 
    then the algorithm decides whether there is a minimal edge multicut of $G$ not included in $\mathcal M$
    in time polynomial in $|V| + |E| + |\mathcal M|$.
    Moreover, if such a minimal edge multicut exists, the algorithm outputs one of them within the same running time bound.
    As we will discuss in the next section, this problem is a special case of the node counterpart and indeed a generalization of the minimal edge multiway cut enumeration problem.
    Therefore, this algorithm also works for enumerating all minimal edge multiway cuts.
    However, there can be exponentially many minimal edge multicuts in a graph.
    Hence, the delay of their algorithm cannot be upper bounded by a polynomial in terms of input size.
    To the best of our knowledge, there is no known non-trivial enumeration algorithm for minimal node multiway cuts.
    
    Let $(G = (V, E), B)$ be an instance of our enumeration problems. In this paper, we give polynomial delay or incremental polynomial delay algorithms.
    \begin{theorem}\label{thm:exp-space}
        There is an algorithm enumerates all the minimal node and edge multiway cuts of $(G, B)$ in $\order{|T(B)|\cdot |V| \cdot |E|}$ and $\order{|T(B)| \cdot |V| \cdot |E|^2}$ delay, respectively.
    \end{theorem}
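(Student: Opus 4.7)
The plan is to reduce minimal multiway cut enumeration to a recursive sequence of minimal $s$-to-terminal-set separator enumerations, one per terminal. Fix an arbitrary ordering $t_1, \ldots, t_k$ of $T(B)$, where $k = |T(B)|$.

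First I would establish a canonical recursive decomposition of minimal node multiway cuts. Given a minimal node multiway cut $M$ of $(G, B)$, let $C_1$ denote the connected component of $t_1$ in $G - M$ and set $S_1 := N_G(C_1) \subseteq M$. The central structural claim is: (a) $S_1$ is a minimal $t_1$-to-$\{t_2, \ldots, t_k\}$ node separator of $G$, and (b) $M \setminus S_1$ is a minimal node multiway cut of the induced subgraph on $V \setminus (C_1 \cup S_1)$ with terminal set $\{t_2, \ldots, t_k\}$. For (a), any $v \in S_1$ is critical in $M$, so removing $v$ merges the components of $G - M$ adjacent to $v$; one of these is $C_1$ (containing $t_1$), and by criticality another must contain some terminal $t_b$ with $b \ne 1$, giving a $t_1$-to-$t_b$ path witnessing criticality of $v$ in $S_1$. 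For (b), any $v \in M \setminus S_1$ has no neighbor in $C_1$, so the pair reconnected by removing $v$ lies entirely in $\{t_2, \ldots, t_k\}$, and the reconnecting path stays in components of $G - M$ disjoint from $C_1$, which is precisely what is needed in the smaller instance. Iterating yields a bijection between minimal node multiway cuts and sequences $(S_1, \ldots, S_{k-1})$ where $S_i$ is a minimal $t_i$-to-$\{t_{i+1}, \ldots, t_k\}$ node separator in the $i$-th graph of a shrinking nested chain.

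Given this bijection, the enumeration algorithm is a recursive DFS over the tree of such sequences: at depth $i$ we enumerate the candidate separators $S_i$ using Tanaka's algorithm~\cite{Takata::2010} (after merging $\{t_{i+1}, \ldots, t_k\}$ into a single super-terminal) at $O(|V|\cdot|E|)$ delay and $O(|V|)$ space, and for each $S_i$ recurse on the far-side subgraph. The recursion depth is at most $k - 1$, and a standard pipelined execution of the recursive calls (keeping only one root-to-leaf path of the enumeration tree active and buffering partial outputs in coroutines) guarantees that the amortized delay between consecutive outputs is the sum of the per-level delays, giving $O(|T(B)| \cdot |V| \cdot |E|)$. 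Uniqueness of the canonical encoding rules out duplication. For the edge version, I would reduce to the node case via subdivision: replace each edge $uv$ of $G$ with a path $u - e_{uv} - v$ through a new ``edge-vertex'' $e_{uv}$. Minimal edge multiway cuts of $(G, B)$ then correspond bijectively to minimal node multiway cuts of the subdivided graph $G'$ that are supported on edge-vertices only; running the node algorithm on $G'$ with this restriction (enforced inside the separator-enumeration subroutine by forbidding original non-terminal vertices) and accounting for the increased size of $G'$ plus the overhead of the restriction yields delay $O(|T(B)| \cdot |V| \cdot |E|^2)$.

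The main obstacle is expected to be the delay analysis of the recursion: ensuring that the $k$-level recursion does not compound multiplicatively requires a careful coroutine-style interleaving in which each level's Tanaka subroutine advances to its next separator only when the deeper levels have exhausted theirs, so that per-output delays add rather than multiply. A secondary obstacle is the edge case: confining minimal separators of $G'$ to edge-vertices is not automatic (original vertices of small degree can be strictly more efficient in a separator), so Tanaka's subroutine must be adapted to honor this restriction and a separate correctness argument is needed to show that every minimal edge-vertex-supported separator is generated. A further care point in (a) is the presence of ``floating'' components of $G - M$ that contain no terminal; these do not obstruct the argument because they can never be the component that reconnects with $C_1$ through a critical $v$, but the write-up must address them explicitly.
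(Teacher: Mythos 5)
Your recursive ``peel off one terminal at a time'' decomposition is not a bijection, and the backward direction fails in a way that breaks the whole approach. Consider $G$ on vertices $\{t_1, v, u, t_2, t_3\}$ with edges $t_1v$, $vu$, $ut_2$, $ut_3$ and terminal set $T = \{t_1,t_2,t_3\}$. Both $\{v\}$ and $\{u\}$ are minimal $t_1$-to-$\{t_2,t_3\}$ separators. If you pick $S_1 = \{v\}$, then $C_1 = \{t_1\}$, the far-side graph is $G[\{u,t_2,t_3\}]$, and its unique minimal multiway cut is $M_2 = \{u\}$. Your recursion therefore emits $M = S_1 \cup M_2 = \{v,u\}$, but this is not a minimal multiway cut of $G$: $\{u\}$ alone already separates all three terminals, so $v$ is not critical in $M$. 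The forward direction of your decomposition (from a minimal $M$ to a canonical sequence $(S_1,\dots)$) is indeed sound --- your arguments for (a) and (b) check out, and the floating-component remark is handled correctly --- but surjectivity onto all separator sequences is false. Criticality of $v \in S_1$ as a \emph{separator} vertex only guarantees a $t_1$-to-$t_j$ path in $G - (S_1 \setminus \{v\})$; that path is free to route through vertices of $M_2$, so $v$ need not be critical in $M = S_1 \cup M_2$. This is exactly what happens above. Adding a final minimality filter does not rescue the delay bound either: the branching tree can contain many separator sequences whose composite cut is non-minimal, and nothing in your coroutine argument bounds the time spent traversing such dead branches between two genuine outputs.

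Beyond the correctness gap, the approach is genuinely different from the paper's. The paper does not recurse on separators at all; it sets up a \emph{solution graph} on the set of minimal node multiway cuts, defines a local move $M \mapsto \comp{M^{i,v}}$ (shift a cut vertex $v$ into the component $C_i$ and re-minimize) giving each solution only $O(kn)$ neighbors, proves strong connectivity via a potential function $\dist{M}{M'}$, and then runs BFS with a dictionary of already-seen solutions (Algorithm~\ref{algo:traversal} plus Algorithm~\ref{algo:enmc}, Lemmas~\ref{lem:node:iff}--\ref{lem:dec:depth}, Theorem~\ref{theo:enmc}). This is why the theorem tolerates exponential space --- the dictionary is the bottleneck --- and why the paper separately develops a reverse-search, polynomial-space algorithm only for the \emph{edge} case in Section~\ref{sec:edge}. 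Your proposal, had it worked, would have given polynomial space for node multiway cuts as well, strictly improving Theorem~\ref{thm:exp-space} and resolving an issue the paper explicitly flags as open; that asymmetry is itself a warning sign. Finally, for the edge case the paper goes through the line-graph construction of Proposition~\ref{prop:reduction}, not through edge subdivision; your subdivision idea requires a non-trivial adaptation of Takata's subroutine to forbid original vertices, and you correctly flag that this restriction is not automatic, but that issue is moot given the failure of the node-case bijection it would sit on top of.
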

    
    The algorithm in Theorem~\ref{thm:exp-space} requires exponential space to avoid redundant outputs.
    For the edge case, we can simultaneously improve the time and space consumption.
    \begin{theorem}\label{thm:poly-space}
        There is an algorithm enumerates all the minimal edge multiway cuts of $(G, B)$ in $\order{|T(B)| \cdot |V|\cdot|E|}$ delay in polynomial space.
    \end{theorem}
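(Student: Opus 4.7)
The plan is to exploit the following structural characterization: in a connected graph $G$, the minimal edge multiway cuts of $(G, T(B))$ are in one-to-one correspondence with partitions $V = V_1 \sqcup \cdots \sqcup V_k$ of $V$ into $k = |T(B)|$ parts, where each $V_i$ is connected in $G$ and contains exactly one terminal. The cut is then the set of edges whose endpoints lie in different parts. This follows because minimality forces every cut edge to sit between two terminal-bearing components, and in a connected graph no terminal-free component can persist: its entire boundary would consist of edges whose removal from the cut still separates all terminals, contradicting minimality.

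Given this bijection, I would enumerate ordered partitions via a recursive \emph{peeling} algorithm. Fix a total order $t_1, \ldots, t_k$ on the terminals, so that each minimal multiway cut corresponds to a unique ordered sequence $(V_1, \ldots, V_k)$ in which $V_i$ is the part containing $t_i$. At level $i$, working with the current subgraph $H = G[V \setminus (V_1 \cup \cdots \cup V_{i-1})]$ and remaining terminals $\{t_i, \ldots, t_k\}$, the algorithm enumerates every candidate set $V_i \subseteq V(H)$ that is connected in $H$, contains $t_i$, excludes all other remaining terminals, and satisfies the terminal-coverage condition that each connected component of $H[V(H) \setminus V_i]$ contains at least one of $\{t_{i+1}, \ldots, t_k\}$. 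For each such $V_i$, the algorithm recurses on $H[V(H) \setminus V_i]$ with terminal set $\{t_{i+1}, \ldots, t_k\}$ and prepends $V_i$ to each recursive output.

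The main lemma establishes a bijection between the valid $V_i$'s and the set-inclusion-minimal $t_i$-$s'$ edge cuts in the auxiliary graph $H'$ obtained from $H$ by contracting $\{t_{i+1}, \ldots, t_k\}$ into a single super-vertex $s'$. In one direction, for a valid $V_i$ the boundary $\partial V_i$ is a $t_i$-$s'$ cut in $H'$ whose complement is connected in $H'$ (each component of $H[V(H) \setminus V_i]$ carries at least one remaining terminal, all identified to $s'$), so it is minimal. In the other direction, set-inclusion minimality of a cut $C$ in $H'$ forces $H' \setminus C$ to have exactly two components, and uncontracting $s'$ recovers a valid $V_i$. The sub-enumeration at each level therefore reduces to enumerating all minimal $s$-$t$ edge cuts of $H'$, which admits a polynomial-delay, polynomial-space algorithm via a flashlight search on vertices: branch on whether each vertex lies on the $t_i$-side or on the $s'$-side, and prune by incremental BFS checks that both growing sides remain extendable to connected subgraphs.

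Because every minimal multiway cut yields exactly one ordered sequence $(V_1, \ldots, V_k)$ under the fixed terminal order, the recursion produces each solution once with no deduplication, and total space is polynomial. A DFS traversal of the recursion tree has depth $|T(B)|$, and each step invokes the minimal $s$-$t$ cut sub-routine once at $\order{|V| \cdot |E|}$ delay, giving the claimed $\order{|T(B)| \cdot |V| \cdot |E|}$ total delay. The main obstacle I anticipate is realizing the minimal $s$-$t$ edge cut sub-routine within the $\order{|V| \cdot |E|}$ delay and polynomial space budget: this requires amortizing the flashlight search and maintaining the connectivity of the two growing sides incrementally, so that each branching decision can be verified in $\order{|V| + |E|}$ time. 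Establishing the bijection lemma and the overall recursion correctness is delicate but follows cleanly from the minimality characterization.
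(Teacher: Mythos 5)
Your proposal takes a genuinely different route from the paper, though both rest on the same structural characterization of minimal edge multiway cuts as partitions of $V$ into $k$ connected terminal-bearing blocks (the paper's Lemma~\ref{lem:con}). The paper applies the reverse-search paradigm: it greedily constructs a root partition, defines a depth measure $\sum_v (\mathcal P_M(v) - \mathcal P_R(v))$, specifies a parent function via a carefully chosen \emph{pivot} vertex, enumerates by generating candidate children and filtering with $\Par{M'} = M$, and invokes Uno's alternative-output technique to turn the amortized bound into worst-case delay. You instead fix a terminal order and perform a recursive peeling: at level $i$, enumerate all valid $t_i$-blocks $V_i$ by reducing to minimal $s$-$t$ edge cut enumeration in the graph obtained by contracting $\{t_{i+1},\dots,t_k\}$ to a single vertex, then recurse on the residue. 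Your bijection between valid $V_i$'s and minimal $t_i$-$s'$ cuts is sound (it is essentially the edge analogue of the $a$-$b$ separator reduction the paper uses in Section~\ref{sec:multicut}), and your observation that the recursion never dead-ends is correct — the paper's root construction (Lemma~\ref{lem:root}) shows every partial peel extends. Your plan is more modular, black-boxing a known linear-delay minimal $s$-$t$ cut enumerator, while the paper's is more self-contained. The piece you leave hand-wavy, and would have to flesh out for a complete proof, is the delay accounting: the $k$ nested interruptible $s$-$t$-cut enumerations form a single combined search tree of depth $O(kn)$ (a vertex may be re-decided at each of the $k$ peeling levels before it lands in some $V_i$) with $O(m)$ work per branching decision, and you must argue that perfect pruning at every node gives $O(knm)$ worst-case rather than merely amortized delay — this is where the paper instead reaches for the alternative-output trick.
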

    
    For the most general problem among them (i.e., the minimal node multicut enumeration problem), we give an incremental polynomial time algorithm.
    \begin{theorem}\label{thm:incremental}
        There is an algorithm of finding, given a set of minimal node multicuts $\mathcal M$ of $(G, B)$, a minimal node multicut $M$ of $(G, B)$ with $M \notin \mathcal M$ if it exists and runs in time $\order{|\mathcal M|\cdot poly(n)}$.
    \end{theorem}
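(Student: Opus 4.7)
The plan is to adapt Khachiyan et al.'s incremental polynomial algorithm for minimal edge multicuts to the node setting. First, I would cast minimal node multicuts as minimal transversals of a hypergraph $\mathcal{H}$ on ground set $V \setminus T(B)$ whose hyperedges are the internal vertex sets of simple paths between terminal pairs in $B$. A routine check shows that $M \subseteq V \setminus T(B)$ is a minimal node multicut of $(G, B)$ iff $M$ is a minimal transversal of $\mathcal{H}$. Although $\mathcal{H}$ may have exponentially many hyperedges, it admits a polynomial-time containment oracle: given $X \subseteq V \setminus T(B)$, deciding whether $X$ contains some hyperedge reduces to checking whether $G[X \cup T(B)]$ contains any path between a pair in $B$, which requires only $|B|$ connectivity tests.

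Equipped with the oracle, the extension problem---find a minimal transversal of $\mathcal{H}$ outside $\mathcal{M}$, or certify that none exists---reduces to finding a set $A \subseteq V \setminus T(B)$ such that $A \cap M \neq \emptyset$ for every $M \in \mathcal{M}$ and $(V \setminus T(B)) \setminus A$ is a node multicut. Given such $A$, greedily trimming $(V \setminus T(B)) \setminus A$ yields a minimal multicut $M^\ast \subseteq (V \setminus T(B)) \setminus A$ disjoint from $A$, so $M^\ast \neq M$ for every $M \in \mathcal{M}$ because each $M$ meets $A$ outside $M^\ast$. Conversely, if a new minimal multicut $M^\ast$ exists, then by incomparability of distinct minimal transversals one can pick $v_M \in M \setminus M^\ast$ for each $M$, and $A = \{v_M : M \in \mathcal{M}\}$ satisfies both conditions since $M^\ast \subseteq (V \setminus T(B)) \setminus A$ is already a multicut.

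The principal obstacle is to find such $A$ (or certify non-existence) within the target running time $\order{|\mathcal{M}| \cdot \text{poly}(n)}$. My plan is to mimic Khachiyan et al.\ by iterating over candidate sets constructed from $\mathcal{M}$ and testing each with the containment oracle; for the edge case, a structural argument bounds the number of candidates examined by a polynomial in $|\mathcal{M}|$, and the analogous bound should transfer to the node case once the special role of the terminal vertices---which lie outside the ground set $V \setminus T(B)$ but serve as endpoints of every hyperedge-defining path---is handled carefully in the counting. Each candidate check runs the polynomial-time containment oracle and the greedy trimming, giving the desired total bound.
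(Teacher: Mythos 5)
Your proposal takes a genuinely different route from the paper. You recast minimal node multicuts as minimal transversals of the ``path hypergraph'' and try to adapt Khachiyan et al.'s dualization machinery, whereas the paper uses the \emph{supergraph approach}: it defines a directed solution graph on all minimal node multicuts, introduces a distance function $\dist{M}{M'}$ built from the component structure, and shows (Lemmas~\ref{lem:multicut:decrease:noteminal} and \ref{lem:multicut:decrese:terminal}) that from any $M$ one can move strictly closer to any other $M'$ by a neighborhood step; the delicate case where $G[N[v]\cup C]$ contains a terminal pair is resolved by reducing the local subproblem to a star demand graph and then to minimal $v$-$v_t$ separator enumeration via Takata's algorithm (Lemma~\ref{lem:multicut:absep}), which keeps each neighborhood step in incremental polynomial time.

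The problem with your argument is that its crux is missing. Everything up through the reduction to ``find a set $A$ hitting every $M\in\mathcal M$ with $(V\setminus T(B))\setminus A$ a multicut, or certify none exists'' is fine and standard (it is essentially the statement that $\mathcal M$ is not the full list of minimal transversals iff such an $A$ exists). But that extension problem is, for an arbitrary hypergraph given by a membership/containment oracle, $\NP$-hard in general, and it is exactly what the entire structure theory of Khachiyan et al.\ is built to circumvent. Their polynomial bound on the number of candidate $A$'s to test is a substantive structural theorem about the hypergraph of minimal \emph{edge} multicuts; you cannot simply assert that ``the analogous bound should transfer to the node case once the special role of the terminal vertices \ldots is handled carefully in the counting.'' That sentence is where a proof is required, and none is given. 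Indeed, the node case is strictly more general than the edge case (Proposition~\ref{prop:reduction} gives a one-way reduction), and the additional structure of edge multicuts --- e.g., minimal cuts corresponding to edge sets between components, which underlies the counting in Khachiyan et al.\ --- does not obviously survive the passage to vertex deletion. Without a concrete argument bounding the candidate family for the node hypergraph, your proof does not establish the theorem.

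A smaller issue: your containment oracle as phrased checks whether $G[X\cup T(B)]$ has a terminal-pair path, i.e.\ whether $X$ \emph{is not} a co-transversal complement; be careful about which direction of the oracle you need in the search for $A$. This is fixable, but it should be stated precisely since the correctness of the greedy trimming and the certification of non-existence both hinge on it.
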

    
    The first and second results simultaneously improve the previous incremental polynomial running time bound obtained by applying the algorithm of Khachiyan et al.~\cite{Khachiyan::2008} to the edge multiway cut enumeration and extends enumeration algorithms for minimal $s$-$t$ cuts~\cite{Provan::1994,Tsukiyama::1980} and minimal $a$-$b$ separators \cite{Takata::2010}\footnote{However, our algorithm requires exponential space for minimal node multiway cuts, whereas Takata's algorithm \cite{Takata::2010} runs in polynomial space.}.
    The third result extends the algorithm of Khachiyan et al. to the node case.
    Since enumerating minimal node multicuts is at least as hard as enumerating minimal node multiway cuts and enumerating minimal node multiway cuts is at least as hard as enumerating minimal edge multiway cuts, this hierarchy directly reflects on the running time of our algorithms.
    
    The basic idea behind these results is that we rather enumerate a particular collection of partitions/disjoint subsets of $V$ than directly enumerating minimal edge/node multicuts/multiway cuts of $(G, B)$.
    It is known that an $s$-$t$ edge cut of $G$ is minimal if and only if the bipartition $(V_1, V_2)$ naturally defined from the $s$-$t$ cut induces connected subgraphs of $G$, that is, $G[V_1]$ and $G[V_2]$ are connected~\cite{Diestel::2012}.
    For minimal $a$-$b$ separators, a similar characterization is known using full components (see \cite{Golumbic:2004}, for example).
    These facts are highly exploited in enumerating minimal $s$-$t$ cuts \cite{Provan::1994,Tsukiyama::1980} or minimal $a$-$b$ separators \cite{Takata::2010}, and can be extended for our cases (See Sections~\ref{sec:multicut}, \ref{sec:node}, and \ref{sec:edge}).
    To enumerate such a collection of partitions/disjoint subsets of $V$ in the claimed running time, we use three representative techniques: the {\em proximity search paradigm} due to Conte and Uno~\cite{Conte::2019} for the exponential space enumeration of minimal node multiway cuts, the {\em reverse search paradigm} due to Avis and Fukuda~\cite{Avis::1996} for polynomial space enumeration of minimal edge multiway cuts, and the {\em supergraph approach}, which is appeared implicitly and explicitly in the literature~\cite{Cohen::2008,Conte::2019,Khachiyan::2008,Schwikowski::2002}, for the incremental polynomial time enumeration of minimal node or edge multicuts.
    These approaches basically define a (directed) graph on the set of solutions we want to enumerate.
    If we appropriately define some adjacency relation among the vertices (i.e. the set of solutions) so that the graph is (strongly) connected, then we can enumerate all solutions from a specific or arbitrary solution without any duplication by traversing this (directed) graph.
    The key to designing the algorithms in Theorem~\ref{thm:exp-space} and \ref{thm:poly-space} is to ensure that every vertex in the graphs defined on the solutions has a polynomial number of neighbors.

    We also consider a generalization of the minimal node multicut enumeration, which we call the minimal Steiner node multicut enumeration.
    We show that this problem is at least as hard as the minimal transversal enumeration on hypergraphs.

\section{Preliminaries}\label{sec:preli}

In this paper, we assume that a graph $G = (V, E)$ is connected and has no self-loops and no parallel edges.
Let $X \subseteq V$. We denote by $G[X]$ the subgraph of $G$ induced by $X$.
The neighbor set of $X$ is denoted by $N_G(X)$ (i.e. $N_G(X) = \{y \in V \setminus X: x \in X \land \{x, y\} \in E \}$ and the closed neighbor set of $X$ is denoted by $N_G[X] = N(X) \cup X$.
When $X$ consists of a single vertex $v$, we simply write $N_G(v)$ and $N_G[v]$ instead of $N_G(\{v\})$ and $N_G[\{v\}]$, respectively.
If there is no risk of confusion, we may drop the subscript $G$.
For a set of vertices $U \subseteq V$ (resp. edges $F \subseteq E$), the graph obtained from $G$ by deleting $U$ (resp. $F$) is denoted by $G - U$ (resp. $G - F$).

Let $B$ be a set of pairs of vertices in $V$.
We denote by $T(B) = \{s, t: \set{s, t} \in B\}$.
A vertex in $T(B)$ is called a \emph{terminal}, a pair in $B$ is called a set of \emph{terminal pairs}, and $T(B)$ is called a \emph{terminal set} or \emph{terminals}. 
When no confusion can arise, we may simply use $T$ to denote the terminal set.
A set of edges $M \subseteq E$ is an \emph{edge multicut} of $(G, B)$ if no pair of terminals in $B$ is connected in $G - M$.
When $B$ is clear from the context, we simply call $M$ an edge multicut of $G$.
An edge multicut $M$ is \emph{minimal} if every proper subset $M' \subset M$ is not an edge multicut of $G$.
Note that this condition is equivalent to that $M \setminus \{e\}$ is not an edge multicut of $G$ for any $e \in M$.  
Analogously, a set of vertices $X \subseteq V \setminus T$ is a \emph{node multicut} of $G$ if there is no paths between any terminal pair of $B$ in $G - X$. The minimality for node multicuts is defined accordingly.

The {\em demand graph for $B$} is a graph defined on $T(B)$ in which two vertices $s$ and $t$ are adjacent to each other if and only if $\set{s, t} \in B$.
When $B$ contains a terminal pair $\set{s, t}$ for any distinct $s, t \in T(B)$, that is, the demand graph for $B$ is a complete graph, a node/edge multicut is called a {\em node/edge multiway cut} of $G$.

Let $G = (V, E)$ be a graph and let $B$ be a set of terminal pairs.
The graph $G'$ obtained from the line graph of $G$ by adding a terminal $t'$ for each $t \in T$ and making $t'$ adjacent to each vertex corresponding to an edge incident to $t$ in $G$.
\begin{proposition}\label{prop:reduction}
    Let $M \subseteq E$. Then, $M$ is an edge multicut of $G$ if and only if $M$ is a node multicut of $G'$. 
\end{proposition}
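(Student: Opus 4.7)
The plan is to prove both directions by converting between $s$-$t$ paths in $G - M$ and $s'$-$t'$ paths in $G' - M$, using the canonical bijection between edges of $G$ and non-terminal vertices of $G'$ (inherited from the line graph construction). Throughout, I will identify each $e \in E$ with the corresponding vertex $v_e$ of $L(G) \subseteq V(G')$, so that $M \subseteq E$ makes sense as a subset of $V(G') \setminus \{t' : t \in T\}$.

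First I would prove the forward direction by contrapositive. Assume $M$ is not a node multicut of $G'$, so there is a terminal pair $\{s, t\} \in B$ and a path $P = (s', v_{e_1}, v_{e_2}, \ldots, v_{e_k}, t')$ in $G' - M$. Since the only neighbors of $s'$ in $G'$ are vertices $v_e$ with $e$ incident to $s$ in $G$, the edge $e_1$ is incident to $s$; symmetrically $e_k$ is incident to $t$. Furthermore, consecutive vertices $v_{e_i}, v_{e_{i+1}}$ are adjacent in $L(G)$, so $e_i$ and $e_{i+1}$ share a common endpoint in $G$. Hence the sequence $e_1, e_2, \ldots, e_k$ yields a walk from $s$ to $t$ in $G$ using only edges outside $M$, which shows $M$ is not an edge multicut of $G$.

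For the converse, again by contrapositive, assume $M$ is not an edge multicut of $G$, so there is a simple path $s = u_0, u_1, \ldots, u_k = t$ in $G - M$ with $\{s, t\} \in B$, and let $e_i = \{u_{i-1}, u_i\}$. Then $e_1$ is incident to $s$ and $e_k$ is incident to $t$, so $v_{e_1}$ is adjacent to $s'$ and $v_{e_k}$ is adjacent to $t'$ in $G'$. For each consecutive pair $e_i, e_{i+1}$, they share the vertex $u_i$, so $v_{e_i}$ and $v_{e_{i+1}}$ are adjacent in $L(G)$. Hence $s', v_{e_1}, \ldots, v_{e_k}, t'$ is a walk in $G'$, and all internal vertices $v_{e_i}$ lie outside $M$ (since $e_i \notin M$), so this walk exists in $G' - M$, proving $M$ is not a node multicut of $G'$.

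The only mild subtlety, rather than an obstacle, is the bookkeeping around the interface between $L(G)$-vertices and the added terminal vertices $t'$: one must be careful that a path in $G'$ connecting $s'$ and $t'$ must begin and end with an $L(G)$-vertex corresponding to an edge incident to the respective terminal in $G$, which is exactly what the construction guarantees since $s'$ is adjacent \emph{only} to such vertices. Once this is noted, both translations between paths are immediate.
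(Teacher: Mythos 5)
The paper states this proposition without a proof, treating it as a routine consequence of the line-graph construction, so there is no official argument to compare against; your argument is correct and is the natural one. Two small points of care are worth adding explicitly. First, in the forward direction you write the $s'$--$t'$ path in $G'$ as $(s', v_{e_1}, \ldots, v_{e_k}, t')$, implicitly assuming it contains no intermediate terminal vertices. A priori it could pass through some $u'$ with $u \in T$; but if it goes $\ldots, v_{e_i}, u', v_{e_{i+1}}, \ldots$, then $e_i$ and $e_{i+1}$ are both incident to $u$ in $G$, hence share an endpoint, so $u'$ can be shortcut (or the same ``consecutive edges share an endpoint'' reasoning applies across $u'$). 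This should be stated, since it is the only place the special structure of the added terminals is actually used beyond the endpoints. Second, the claim that the sequence $e_1, \ldots, e_k$ ``yields a walk from $s$ to $t$'' is slightly loose: consecutive edges sharing endpoints does not mean the sequence is itself an $s$--$t$ walk (one may have to traverse an edge back and forth). The clean way to put it is that the subgraph of $G$ spanned by $\{e_1, \ldots, e_k\}$ is connected and contains both $s$ and $t$, so $s$ and $t$ lie in the same component of $G - M$. Neither point affects the correctness of the conclusion.
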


    

By Proposition~\ref{prop:reduction}, designing an enumeration algorithm for minimal node multicuts/multiway cuts, it allows us to enumerate minimal edge multicuts/multiway cuts as well.
However, the converse does not hold in general.

\section{Incremental polynomial time enumeration of minimal node multicuts}\label{sec:multicut}

In this section, we design an incremental polynomial time enumeration algorithm for minimal node multicuts. 
Let $G = (V, E)$ and let $B$ be a set of terminal pairs.

For a (not necessarily minimal) node multicut $M$ of $G$, there are connected components $C_1, C_2, \ldots, C_\ell$ in $G - M$ such that each component contains at least one terminal but no component has a terminal pair in $B$.
Note that there can be components of $G - M$ not including in $\{C_1, \cdots, C_\ell\}$.
The following lemma characterizes the minimality of node multicut in this way.

\begin{lemma}
\label{lem:node:multicut}
    A node multicut $M \subseteq V \setminus T$ of $G$ is minimal if and only if there are $\ell$
    connected components $C_1, C_2, \ldots, C_{\ell}$ in $G - M$, each of which includes at least one terminal of $T$, such that    
    (1) there is no component which includes both vertices in a terminal pair and
    (2) for any $v \in M$, there is a terminal pair $(s_i, t_i)$ such that both components including $s_i$ and $t_i$ have a neighbor of $v$.
\end{lemma}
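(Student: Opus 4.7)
The plan is to prove both directions by unpacking what "minimal multicut" means in terms of paths between terminals and then translating that to the connected-component picture of $G - M$.

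For the forward direction, I would start by assuming $M$ is a minimal node multicut and take $C_1, \ldots, C_\ell$ to be exactly those connected components of $G - M$ that contain at least one terminal. Condition (1) is then immediate: if some component contained both $s$ and $t$ for a pair $\{s,t\} \in B$, then $s$ and $t$ would be connected in $G - M$, contradicting that $M$ is a multicut. For condition (2), I would use minimality: for any $v \in M$, the set $M \setminus \{v\}$ is not a multicut, so there exists a pair $\{s_i, t_i\} \in B$ and an $s_i$-$t_i$ path $P$ in $G - (M \setminus \{v\})$. Since $s_i$ and $t_i$ are separated in $G - M$, the path $P$ must pass through $v$, so it decomposes into an $s_i$-$u$ subpath and a $w$-$t_i$ subpath, where $u$ and $w$ are neighbors of $v$ and both subpaths avoid $M$ entirely. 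This means $u$ lies in the component of $G - M$ containing $s_i$ and $w$ lies in the component containing $t_i$, giving condition (2).

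For the backward direction, I would take a node multicut $M$ satisfying (1) and (2) and show that $M \setminus \{v\}$ fails to be a node multicut for every $v \in M$. Given $v \in M$, condition (2) produces a pair $\{s_i, t_i\} \in B$ together with neighbors $u, w$ of $v$ lying in the components of $G - M$ containing $s_i$ and $t_i$, respectively. Concatenating an $s_i$-$u$ path inside that component, the edges $\{u,v\}$ and $\{v,w\}$, and a $w$-$t_i$ path in the other component yields an $s_i$-$t_i$ walk in $G - (M \setminus \{v\})$; note that this uses $v \notin T$, which holds because $M \subseteq V \setminus T$. Hence $M \setminus \{v\}$ fails to separate $\{s_i, t_i\}$, so $M$ is minimal.

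The argument is essentially bookkeeping about paths and components, so there is no single hard step; the only point requiring care is making sure that, when ``adding $v$ back'' to reconnect $s_i$ and $t_i$, the two path pieces stay inside the correct components of $G - M$ and therefore do not accidentally pick up other vertices of $M$. This is exactly what condition (2) guarantees by asking for neighbors of $v$ in the specific components containing $s_i$ and $t_i$, rather than just somewhere in $G - M$.
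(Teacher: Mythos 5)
Your proof is correct and takes essentially the same approach as the paper's: both directions come down to the observation that putting $v$ back into $G - M$ merges exactly those components of $G - M$ that contain a neighbor of $v$, and a terminal pair becomes reconnected precisely when $v$ has neighbors in both of their components. The paper phrases the forward direction by contrapositive (if no such pair exists for $v$, removing $v$ from $M$ keeps it a multicut, contradicting minimality), while you argue directly by decomposing a witnessing $s_i$-$t_i$ path at $v$; these are logically interchangeable and the rest of the reasoning matches.
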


\begin{proof}
    Suppose that $M$ is a minimal node multicut of $G$.
    For each $s_i$, we let $C^s_i$ be the connected component of $G - M$ containing $s_i$ and for each $t_i$, let $C^t_i$ be the connected component of $G - M$ containing $t_i$.
    Note that some components $C^s_i$ and $C^t_j$ may not be distinct.
    Define the set of $\ell$ components $\{C_1, \ldots, C_\ell\}$ as $\{C^s_i, C^t_i : 1 \le i \le k\}$.
    Since $M$ is a multiway cut, $s_i$ and $t_i$ are contained in distinct components for every $1 \le i \le k$.
    If there is a vertex $v \in M$ such that for every terminal pair $(s_i, t_i)$, at least one of $C^s_i \cap N(v)$ and $C^t_i \cap N(v)$ is empty, then we can remove $v$ from $M$ without introducing a path between some terminal pair, which contradicts to the minimality of $M$.
    Therefore, the set of components satisfies both (1) and (2).
    
    Suppose for the converse that components $C_1, \ldots, C_\ell$ satisfy conditions (1) and (2).
    Since every $v \in M$ has a neighbor in some components $C_i$ and $C_j$ such that $C_i$ and $C_j$ respectively contain $s$ and $t$ for some terminal pair $(s, t) \in B$.
    Then, $G[V \setminus (M \setminus \{v\})]$ has a path between $s$ and $t$, which implies that $M$ is a minimal node multicut of $G$.
\end{proof}

From a minimal node multicut $M$ of $G$, we can uniquely determine the set $\mathcal C$ of $\ell$ components satisfying the conditions in Lemma~\ref{lem:node:multicut}, and vice-versa.
Given this, we denote by $\mathcal C_{M}$ the set of components corresponding to a minimal multicut $M$.
From now on, we may interchangeably use $M \subseteq V \setminus T$ and $\mathcal C_{M}$ as a minimal node multicut of $G$.
For a (not necessarily minimal) node multicut $M$ of $G$, we also use $\mathcal C_M$ to denote the set of connected components $\{C_1, \ldots, C_\ell\}$ of $G - M$ such that each component contains at least one terminal but no component has a terminal pair.

\DontPrintSemicolon
\begin{algorithm}[t]
    \caption{Traversing a solution graph $\mathcal G$ using a breadth-first search. }
    \label{algo:traversal}
    \Procedure{{\tt Traversal}($\mathcal G$)}{
        $S \gets$ an arbitrary solution\;
        $\mathcal Q, \mathcal U \gets \{S\}, \emptyset$\;
        \While{$\mathcal Q \neq \emptyset$}{
            Let $S$ be a solution in $\mathcal Q$\;
            Output $S$\tcc*{We do not output here for minimal node multicuts}
            Delete $S$ from $\mathcal Q$\;
            \For{$S' \in {\tt Neighborhood}(S, \mathcal U)$}{
                \lIf{$S' \notin \mathcal U$}{
                    $\mathcal Q, \mathcal U \gets \mathcal Q \cup \{S'\}, \mathcal U \cup \{S'\}$
                }
            }
        }
    }
\end{algorithm}

We enumerate all the minimal node multicuts of $G$ using the supergraph approach~\cite{Cohen::2008,Conte::2019,Khachiyan::2008,Khachiyan::2006}.
To this end, we define a directed graph on the set of all the minimal node multicuts of $G$, which we call a {\em solution graph}. 
The outline of the supergraph approach is described in Algorithm~\ref{algo:traversal}. 
The following ``distance'' function plays a vital role for our enumeration algorithm: For (not necessarily minimal) node multicuts $M$ and $M'$ of $G$, 
\[
    \dist{M}{M'} = \sum_{C' \in \mathcal C_{M'}} \size{C' \setminus \mcc{C'}{M}},
\]
where $\mcc{C'}{M}$ is the component $C$ of $G - M$ minimizing $|C' \setminus C|$.
If there are two or more components $C$ minimizing $|C' \setminus C|$, we define $\mcc{C'}{M}$ as the one having a smallest vertex with respect to some prescribed order on $V$ among those components.
It should be mentioned that the function {\tt dist} is not the actual distance in the solution graph which we will define later.
Note moreover that this value can be defined between two non-minimal node multicuts as $\mathcal
C_M$ is well-defined for every node multicut $M$ of $G$. 
Let $M$, $M'$, and $M''$ be (not necessarily minimal) node multicuts of $G$.
Then, we say that $M$ is {\em closer than $M'$ to $M''$} if $\dist{M}{M''} < \dist{M'}{M''}$.

\begin{lemma}
\label{lem:multicut:identity}
    Let $M$ and $M'$ be minimal node multicuts of $G$.
    Then, $M$ is equal to $M'$ if and only if $\dist{M}{M'} = 0$. 
\end{lemma}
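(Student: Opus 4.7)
\emph{Forward direction ($\Rightarrow$).} If $M = M'$ then $\mathcal{C}_M = \mathcal{C}_{M'}$, and for each $C' \in \mathcal{C}_{M'}$ the choice $C = C'$ (which is itself a component of $G - M$) gives $|C' \setminus C| = 0$, so every summand in the definition of $\dist{M}{M'}$ vanishes.

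\emph{Converse ($\Leftarrow$).} I plan to assume $\dist{M}{M'} = 0$ and prove $M = M'$ by double inclusion. The key preliminary observation is that the hypothesis forces $C' \subseteq \mcc{C'}{M}$ for every $C' \in \mathcal{C}_{M'}$. Since $C'$ contains at least one terminal $t$, the component $\mcc{C'}{M}$ of $G - M$ must then be the unique component through $t$, and in particular it lies in $\mathcal{C}_M$.

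To establish $M' \subseteq M$, I would take any $v \in M'$ and apply Lemma~\ref{lem:node:multicut} to the minimal multicut $M'$: this yields a terminal pair $(s,t) \in B$, components $C'_s, C'_t \in \mathcal{C}_{M'}$ containing $s$ and $t$, and neighbors $u_s \in C'_s$, $u_t \in C'_t$ of $v$. By the observation above, $C'_s \subseteq D_s$ and $C'_t \subseteq D_t$ for distinct components $D_s, D_t \in \mathcal{C}_M$ (distinct because $M$ separates $s$ from $t$). If $v \notin M$, then the edges $vu_s$ and $vu_t$ both live in $G - M$ and would force $v$ into both $D_s$ and $D_t$, a contradiction.

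For the reverse inclusion $M \subseteq M'$, take $v \in M$ and apply Lemma~\ref{lem:node:multicut} to the minimal multicut $M$, obtaining a terminal pair $(s,t)$, components $D_s, D_t \in \mathcal{C}_M$ containing $s$ and $t$, and neighbors $u_s \in D_s$, $u_t \in D_t$ of $v$. Concatenating an $s$-to-$u_s$ path inside $D_s$, the vertex $v$, and a $u_t$-to-$t$ path inside $D_t$ yields an $s$-$t$ walk in $G$ whose only vertex in $M$ is $v$. By the already-proved $M' \subseteq M$, every vertex of this walk other than $v$ also lies outside $M'$; if $v \notin M'$ as well, the whole walk lies in $G - M'$, contradicting that $M'$ cuts $(s,t)$. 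The main subtlety is the asymmetry of $\dist{\cdot}{\cdot}$---the hypothesis only controls how components of $G - M'$ sit inside $G - M$---so the $M \subseteq M'$ half is the harder one and relies crucially on bootstrapping from $M' \subseteq M$ to promote the $M$-avoiding path through $v$ into an $M'$-avoiding one.
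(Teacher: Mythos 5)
Your proof is correct, and its overall structure is genuinely different from the paper's. The paper proves the converse by showing that the component collections themselves coincide: for an arbitrary $C' \in \mathcal C_{M'}$ it supposes $\mcc{C'}{M} \setminus C' \neq \emptyset$, picks a vertex $v$ there adjacent to $C'$, observes $v \in M'$, and then invokes condition (2) of Lemma~\ref{lem:node:multicut} on $v$ to force two terminals of some pair into a single component of $G - M$, a contradiction. That yields $C' = \mcc{C'}{M}$ for every $C'$, and the conclusion $M = M'$ then comes from the bijection between a minimal node multicut and its family $\mathcal C_M$ stated just before the lemma. You instead argue by double inclusion of the cut sets themselves. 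Your $M' \subseteq M$ half is in the same spirit as the paper's contradiction (both hinge on condition (2) of Lemma~\ref{lem:node:multicut} together with the containment $C' \subseteq \mcc{C'}{M}$), but your $M \subseteq M'$ half is a new step: you build an $s$-$t$ walk through $v \in M$ that avoids $M \setminus \{v\}$, and then bootstrap from the first inclusion to conclude it would avoid $M'$ entirely if $v \notin M'$. The trade-off is that your argument is fully self-contained — it never appeals to the $M \leftrightarrow \mathcal C_M$ bijection — at the cost of carrying out the second inclusion explicitly, whereas the paper's version is shorter but implicitly leans on that surrounding uniqueness discussion. Your observation that the asymmetry of $\dist{\cdot}{\cdot}$ makes $M \subseteq M'$ the harder half is accurate and worth recording.
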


\begin{proof}
    If $M = M'$, then $\dist{M}{M'}$ is obviously equal to zero. 
    Conversely, suppose $\dist{M}{M'} = 0$.
    Then, for every $C' \in \mathcal C_{M'}$, $C'$ is entirely contained in a component $C$ in $G - M$. 
    Let $C' \in \mathcal C_{M'}$ and let $C$ be the component of $G - M$ with $C' \subseteq C$.
    Suppose for the contradiction that there is a vertex $v$ in $C \setminus C'$.
    Since $G[C]$ is connected, we can choose $v$ so that it has a neighbor in $C'$.
    Then, $v$ belongs to $M'$.
    By Lemma~\ref{lem:node:multicut}, there are two components $C_1'$ and $C_2'$ in $\mathcal C_{M'}$ such that $C_1'$ and $C_2'$ respectively have terminals $s$ and $t$ with $\{s, t\} \in B$ and $v$ has a neighbor in both $C_1'$ and $C_2'$.
    Since $C'_1$ and $C'_2$ are contained in some components $C_1$ and $C_2$ of $G - M$, respectively, there is a path between $s$ and $t$ in $G - M$, a contradiction.
\end{proof}

From a node multicut $M$ of $G$, a function $\comp{}$ maps $M$ to an arbitrary minimal node multicut $\comp{M} \subseteq M$.
Clearly, this function computes a minimal node multicut of $G$ in polynomial time. 

\begin{lemma}
\label{lem:multicut:minimize}
    Let $M$ be a node multicut of $G$ and $M'$ a minimal node multicut of $G$. 
    Then, $\dist{\comp{M}}{M'} \le \dist{M}{M'}$ holds. 
\end{lemma}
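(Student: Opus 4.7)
The plan is to exploit the containment $\comp{M} \subseteq M$, which makes the graph $G - \comp{M}$ "larger" than $G - M$: since we are removing fewer vertices, every connected component of $G - M$ must be entirely contained in some connected component of $G - \comp{M}$. This monotonicity is the key structural observation, and the rest of the proof is a term-by-term comparison of the sums defining the distance function.

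Concretely, I would proceed as follows. Fix an arbitrary $C' \in \mathcal C_{M'}$ and let $C = \mcc{C'}{M}$ be the component of $G - M$ that attains the minimum in the definition of $\dist{M}{M'}$, so that $|C' \setminus C| \le |C' \setminus \widetilde C|$ for every component $\widetilde C$ of $G - M$. By the observation above, there is a unique component $D$ of $G - \comp{M}$ with $C \subseteq D$. The inclusion $C \subseteq D$ immediately gives
\[
    |C' \setminus D| \;\le\; |C' \setminus C|.
\]
Now $\mcc{C'}{\comp{M}}$ is defined as the component of $G - \comp{M}$ minimizing $|C' \setminus \cdot|$ over \emph{all} components of $G - \comp{M}$, so in particular
\[
    |C' \setminus \mcc{C'}{\comp{M}}| \;\le\; |C' \setminus D| \;\le\; |C' \setminus \mcc{C'}{M}|.
\]
Summing this inequality over all $C' \in \mathcal C_{M'}$ yields $\dist{\comp{M}}{M'} \le \dist{M}{M'}$, which is the desired conclusion.

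I do not expect any serious obstacle. The only minor subtlety is that $\mathtt{mcc}$ is defined by minimizing over \emph{all} components of the relevant graph, not only those in $\mathcal C_\bullet$, so I do not need to worry about whether $D$ contains a terminal or whether the tie-breaking rule selects $D$; I only need the inequality $|C' \setminus \mcc{C'}{\comp{M}}| \le |C' \setminus D|$, which holds by definition. The fact that $\comp{M}$ is a (minimal) node multicut, so that $\mathcal C_{\comp{M}}$ itself is well-defined, is guaranteed by the construction of $\comp{}$ described just before the statement.
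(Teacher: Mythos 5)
Your proof is correct and follows the same approach as the paper's: both rest on the observation that $\comp{M} \subseteq M$ forces every component of $G - M$ to lie inside some component of $G - \comp{M}$. You simply spell out the chain of inequalities through the intermediate component $D$, whereas the paper states the containment and jumps directly to the conclusion.
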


\begin{proof}
    Since $\comp{M} \subseteq M$, it follows that every component of $G - M$ is contained in some component of $G - \comp{M}$.
    Therefore, $|C'\setminus \mcc{C'}{M}| \ge |C'\setminus \mcc{C'}{\comp{M}}|$ for every $C' \in \mathcal C_{M'}$
\end{proof}

To complete the description of Algorithm~\ref{algo:traversal}, we need to define the neighborhood of each minimal node multicut of $G$. 
We have to be take into consideration that the solution graph is strongly connected for enumerating all the minimal node multicuts of $G$.
To do this, we exploit {\tt dist} as follows.
Let $M$ and $M'$ be distinct minimal node multicuts of $G$.  
We will define the neighborhood of $M$ in such a way that it contains at least one minimal node multiway cut $M''$ of $G$ that is closer than $M$ to $M'$.
This allows to eventually have $M'$ from $M$ with Algorithm~\ref{algo:traversal}.
The main difficulty is that the neighborhood of $M$ contains such $M''$ for every $M'$, which will be described in the rest of this section.

To make the discussion simpler, we use the following two propositions. 
Here, for an edge $e$ of $G$, we let $G / e$ denote the graph obtained from $G$ by contracting edge $e$.
We use $v_e$ to denote the newly introduced vertex in $G / e$.

\begin{proposition}\label{prop:multicut:reduce:adjacent}
    Let $t_1$ be a terminal adjacent to another terminal $t_2$ in $G$.
    Suppose $\{t_1, t_2\}$ is not included in $B$. 
    Then, $M$ is a minimal node multicut of $(G, B)$ if and only if 
    it is a minimal node multicut of $(G / e, B')$, where $e = \{t_1, t_2\}$ and $B'$ is obtained by replacing $t_1$ and $t_2$ in $B$ with the new vertex $v_e$ in $G / e$.
\end{proposition}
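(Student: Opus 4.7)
The plan is to show directly that the contraction of the edge $e = \{t_1, t_2\}$ preserves exactly the relevant connectivity information in the graph, so that the node multicut property and its minimality transfer in both directions. Observe first that because $M \subseteq V \setminus T(B)$, neither $t_1$ nor $t_2$ belongs to $M$; hence $M$ is a subset of $V(G/e) \setminus T(B')$ as well, so the two statements are at least about the same candidate set.

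The key ingredient I would prove is a reachability lemma: for any two vertices $a, b \in V(G) \setminus \{t_1, t_2\}$ that are not in $M$, there is an $a$--$b$ path in $G - M$ if and only if there is an $a$--$b$ path in $(G/e) - M$; and for any $a \in V(G) \setminus \{t_1,t_2\}$ not in $M$, there is an $a$--$t_i$ path in $G - M$ ($i \in \{1,2\}$) if and only if there is an $a$--$v_e$ path in $(G/e) - M$. Both directions follow from the standard observation that contracting an edge whose endpoints survive the deletion does not change reachability among the remaining vertices: a path using $e$ becomes a path through $v_e$, a path avoiding $e$ survives verbatim, and conversely any walk through $v_e$ can be lifted to a walk in $G - M$ by inserting either the single edge $e$ or the vertex $t_1$ or $t_2$ as needed, using $t_1, t_2 \notin M$.

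From this reachability correspondence the multicut equivalence is immediate. For every $\{s,t\} \in B$, since $\{t_1,t_2\} \notin B$ we have $\{s,t\} \neq \{t_1,t_2\}$, so the pair $\{s,t\}$ maps to a well-defined pair in $B'$ (with $t_1$ or $t_2$ replaced by $v_e$ if it appears), and the reachability lemma says $s$ and $t$ are connected in $G - M$ iff their images are connected in $(G/e) - M$. Conversely, every pair in $B'$ comes from a pair in $B$. Hence $M$ is a node multicut of $(G,B)$ iff $M$ is a node multicut of $(G/e,B')$.

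Finally, minimality transfers by applying the multicut equivalence to $M \setminus \{v\}$ for each $v \in M$: $M \setminus \{v\}$ is a node multicut of $(G,B)$ iff it is one of $(G/e,B')$, so $M$ is minimal on one side iff it is minimal on the other. The only subtlety I anticipate is keeping the bookkeeping straight when a terminal pair in $B$ involves $t_1$ or $t_2$ (which becomes a pair involving $v_e$ in $B'$); the hypothesis $\{t_1,t_2\} \notin B$ is exactly what prevents the degenerate situation of a pair collapsing to a loop $\{v_e, v_e\}$ in $B'$, and should be invoked explicitly at that point.
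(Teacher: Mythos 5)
Your proof is correct, and since the paper omits the proof of this proposition entirely (treating it as routine), there is no paper argument to contrast with; yours is exactly the natural one. The reachability correspondence between $G-M$ and $(G/e)-M$ is precisely the right core lemma, and you correctly isolate the two places where the hypotheses are actually used: $t_1,t_2\notin M$ (since they are terminals) makes the lifting of walks through $v_e$ back to walks in $G$ legitimate, and $\{t_1,t_2\}\notin B$ prevents the degenerate pair $\{v_e,v_e\}$ in $B'$. Two small points worth making explicit if you write this up: (i) the candidate sets coincide exactly, $V(G/e)\setminus T(B')=V(G)\setminus T(B)$, because $t_1,t_2\in T(B)$ are replaced by $v_e\in T(B')$ while the rest of $T(B)$ is unchanged; and (ii) in the reachability lemma you implicitly use that $t_1$ and $t_2$ lie in the same component of $G-M$ (they are adjacent and neither is deleted), which is what lets you say ``$a$ reaches $t_i$ for either $i$'' on the $G$ side is equivalent to ``$a$ reaches $v_e$'' on the $G/e$ side. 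The minimality transfer by considering $M\setminus\{v\}$ for each $v\in M$ is valid because multicuts are upward closed, so minimality reduces to checking single-vertex removals, as the paper already notes in the preliminaries.
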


If $G$ has an adjacent terminal pair in $B$, then obviously there is no node multicut of $G$.
By Proposition~\ref{prop:multicut:reduce:adjacent}, $G$ has no any adjacent terminals.

\begin{proposition}\label{prop:multicut:reduce:node}
    If there is a vertex $v$ of $G$ such that $N(v)$ contains a terminal pair $\set{s, t}$, 
    then for every node multicut $M$ of $G$, we have $v \in M$. 
    Moreover, $M$ is a minimal node multicut of $(G, B)$ if and only if $M \setminus \set{v}$ is a minimal node multicut of $(G - \set{v}, B)$. 
\end{proposition}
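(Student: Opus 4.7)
The plan is to prove the two assertions in turn, with the first assertion doing essentially all the work.

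For the first part, I would observe that by the standing assumption following Proposition~\ref{prop:multicut:reduce:adjacent}, no two terminals of $G$ are adjacent. Since $s, t \in N(v)$ with $\{s,t\} \in B$, the vertex $v$ cannot itself be a terminal (otherwise it would be adjacent to the terminal $s$, contradicting the assumption). Hence $v \in V \setminus T$, which makes $v$ eligible to appear in a node multicut. The path $s$-$v$-$t$ has only one internal vertex, namely $v$, so to destroy it any node multicut $M$ of $(G,B)$ must contain $v$.

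For the equivalence of minimality, I would set $M' = M \setminus \{v\}$ and exploit the immediate identity
\[
    G - M = (G - \{v\}) - M'.
\]
This identity, together with the fact that $v$ is not a terminal, shows that $M$ is a node multicut of $(G, B)$ if and only if $M'$ is a node multicut of $(G - \{v\}, B)$. To transfer minimality I would argue in both directions using the first part. Assume first that $M$ is minimal for $(G,B)$; if some $u \in M'$ were such that $M' \setminus \{u\}$ is still a node multicut of $(G - \{v\}, B)$, then $(M' \setminus \{u\}) \cup \{v\} = M \setminus \{u\}$ would be a node multicut of $(G, B)$ strictly smaller than $M$, contradicting minimality of $M$. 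Conversely, assume $M'$ is minimal for $(G - \{v\}, B)$; any proper subset $N \subsetneq M$ that is still a node multicut of $(G, B)$ must contain $v$ by the first part, so $N \setminus \{v\} \subsetneq M'$ is a node multicut of $(G - \{v\}, B)$, contradicting minimality of $M'$.

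The argument is essentially routine once the first part is in hand; the only subtle point, and hence the main (minor) obstacle, is justifying that $v \notin T$, which is why I would invoke the previous reduction explicitly. Everything else follows from the transparent identification of $G - M$ with $(G - \{v\}) - M'$.
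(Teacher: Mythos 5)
Your proof is correct. The paper states this proposition without proof (treating it as routine), and your argument—first establishing $v \notin T$ via the standing no-adjacent-terminals assumption so that $v$ is eligible for membership in a node multicut, then forcing $v \in M$ via the length-two path $s$-$v$-$t$, and finally using the identity $G - M = (G - \set{v}) - (M \setminus \set{v})$ together with the first part to transfer both the multicut property and minimality in each direction—is exactly the verification the authors left implicit.
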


From the above two propositions, we assume that there is no pair of adjacent terminals and no vertex including a terminal pair in $B$ as its neighborhood. 
To define the neighborhood of $M$, we distinguish two cases.

\begin{figure}[t]
    \centering
    \includegraphics[width=0.7\textwidth]{./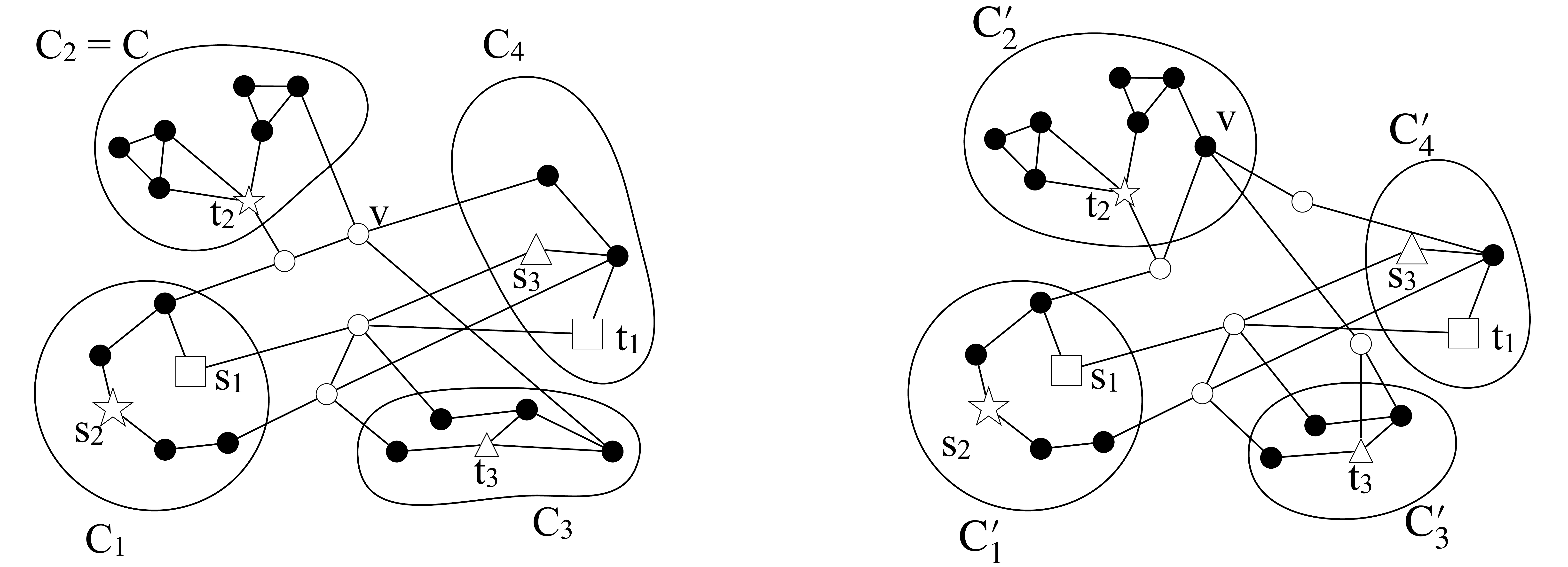}
    \caption{This figure illustrates an example of Lemma~\ref{lem:multicut:decrease:noteminal}. White circles represent vertices in node multicuts and pairs of stars, squares, and triangles represent terminal pairs.
    The left and right pictures depict a minimal node multicut $M$ and a node multicut $M''$.}\label{fig:nodemulticut1}
\end{figure}

\begin{lemma}\label{lem:multicut:decrease:noteminal}
    Let $M$ and $M'$ be distinct minimal node multicuts of $G$, let $C' \in \mathcal C_{M'}$, and let $C = \mcc{C'}{M}$.
    Suppose there is a vertex $v \in N(C) \cap C' \subseteq M$ such that $G[N[v] \cup C]$ has no any terminal pair in $B$.
    Let $T_v = N(v) \cap T$.
    Then $M'' = (M \setminus \{v\}) \cup (N(T_v \cup \{v\}) \setminus C)$ is a node multicut of $G$.
    Moreover, $\comp{M''}$ is closer than $M$ to $M'$.
\end{lemma}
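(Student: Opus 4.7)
My approach is to split the proof into two parts: first I will verify that $M''$ is a node multicut of $G$, then I will show $\dist{\comp{M''}}{M'} < \dist{M}{M'}$. By Lemma~\ref{lem:multicut:minimize}, the second part reduces to establishing $\dist{M''}{M'} < \dist{M}{M'}$.

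For the multicut property, the plan is to argue by contradiction. Suppose a path $P$ in $G - M''$ connects a terminal pair $\set{s,t} \in B$. Since every vertex of $V \setminus M'' \setminus \set{v}$ lies in $V \setminus M$, any such $P$ avoiding $v$ would also lie in $G - M$, contradicting that $M$ is a node multicut; hence $P$ must traverse $v$, and so $s,t$ lie in the connected component $C^\ast$ of $G - M''$ containing $v$. The key step is to identify $C^\ast = C \cup \set{v} \cup T_v$: the inclusion $\supseteq$ is immediate from $v \in N(C)$ and $T_v \subseteq N(v)$; for $\subseteq$, every non-terminal neighbor of $v$ lies in $N(v) \setminus T_v \subseteq N(T_v \cup \set{v}) \setminus C \subseteq M''$, and for each terminal $t \in T_v$ any non-$v$ neighbor is either in $C$ or lies in $N(T_v \cup \set{v}) \setminus C \subseteq M''$, using the standing assumption that $G$ has no adjacent terminals to rule out further terminal neighbors of $t$. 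Thus $s, t \in C^\ast \subseteq N[v] \cup C$, contradicting the hypothesis that $G[N[v] \cup C]$ has no terminal pair in $B$.

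For the distance bound, I will compare the summands $|\tilde C' \setminus \mcc{\tilde C'}{\cdot}|$ over $\tilde C' \in \mathcal{C}_{M'}$. For $\tilde C' = C'$, using $C^\ast$ as a candidate for $\mcc{C'}{M''}$ yields $|C' \setminus C^\ast| \le |C' \setminus C| - 1$; the strict decrease comes from $v \in (C' \cap C^\ast) \setminus C$. For $\tilde C' \neq C'$, let $\tilde C = \mcc{\tilde C'}{M}$. If $\tilde C$ is non-adjacent to $v$ in $G$, then $\tilde C \cap M'' = \emptyset$ and $\tilde C$ persists intact as a component of $G - M''$, leaving its contribution unchanged; if $\tilde C = C$, then $C^\ast \supseteq \tilde C$ is a valid candidate in $G - M''$, again giving no increase.

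The main obstacle will be the remaining case where $\tilde C \neq C$ is a component of $G - M$ adjacent to $v$. Then $\tilde C$ is fragmented in $G - M''$: its terminals in $T_v \cap \tilde C$ migrate into $C^\ast$, while non-terminal neighbors of $T_v \cup \set{v}$ lying in $\tilde C$ are shifted into $M''$, so no single candidate component of $G - M''$ need cover as much of $\tilde C'$ as $\tilde C$ did. I plan to handle this by a per-$\tilde C'$ choice between $C^\ast$ and the largest residual sub-component of $G[\tilde C \setminus M'']$, together with careful bookkeeping: the ``lost'' vertices of $\tilde C' \cap \tilde C$ all lie in $N(T_v \cup \set{v})$, so they are neighbors of $v$ or of a terminal in $T_v$, and their total can be controlled via the minimality of $M$ through Lemma~\ref{lem:node:multicut}. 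If pointwise comparison still falls short for some $\tilde C'$, passing to $\comp{M''}$ via Lemma~\ref{lem:multicut:minimize} may reabsorb the loss by merging sub-components once redundant cut vertices are dropped, and this is where I expect the most delicate accounting to lie.
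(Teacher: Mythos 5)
Your first part (that $M''$ is a node multicut) is correct and matches the paper's argument: any $s$-$t$ path in $G-M''$ must use $v$, so $s,t$ land in $N[v]\cup C$, contradicting the hypothesis.

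For the second part, however, your case analysis for $\tilde C' \neq C'$ has real gaps, and you acknowledge you have no complete argument for the hard case. Two concrete problems. First, the intermediate claim ``if $\tilde C$ is non-adjacent to $v$, then $\tilde C \cap M'' = \emptyset$'' is false: $M''$ also adds $N(T_v)\setminus C$, and $\tilde C$ can perfectly well contain neighbors of a terminal $t\in T_v$ even when it has no neighbor of $v$. Second, you assert that the ``lost'' vertices of $\tilde C' \cap \tilde C$ lie in $N(T_v\cup\{v\})$ and try to control them via minimality of $M$ or by passing to $\comp{M''}$; but $\comp{M''}\subseteq M''$ only gives $\dist{\comp{M''}}{M'}\le\dist{M''}{M'}$ (Lemma~\ref{lem:multicut:minimize}), so it cannot repair a situation where $\dist{M''}{M'}$ itself fails to drop below $\dist{M}{M'}$. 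The key observation you are missing, and which the paper exploits, is that $\tilde C'\cap N[T_v\cup\{v\}]=\emptyset$: since $v\in C'$, $N[v]\subseteq C'\cup M'$; since $T_v\subseteq N(v)$ and terminals avoid $M'$, $T_v\subseteq C'$, whence also $N[T_v]\subseteq C'\cup M'$; and $\tilde C'\neq C'$ is disjoint from both $C'$ and $M'$. Consequently no vertex of $\tilde C'$ is deleted when passing from $M$ to $M''$, and $\tilde C'\cap \tilde C \subseteq \tilde C\setminus N[T_v\cup\{v\}]$; the paper then argues this survives inside a single component of $G-M''$, so the contribution of $\tilde C'$ does not increase. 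Without this neighborhood-disjointness fact there is no way to bound the per-component loss, so as written your proposal does not establish the strict decrease of $\dist{\cdot}{M'}$.
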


\begin{proof}
    First, we show that $M''$ is a node multicut of $G$.
    To see this, suppose that there is a path in $G - M''$ between a terminal pair $\{s, t\} \in B$.
    Since $M$ is a node multicut and $M \setminus \{v\} \subseteq M''$, such an $s$-$t$ path must pass through $v$.
    As $N(T_v \cup \{v\}) \setminus C \subseteq M''$, $s$ and $t$ are contained in $T_v \cup \set{v} \cup C$, which contradicts to the fact that $G[N[v] \cup C]$ has no terminal pair in $B$.

    Next, we show that $M''$ is closer than $M$ to $M'$. 
    Recall that $v$ is contained in $N(C) \cap C'$. 
    Since $M''$ does not contain any vertex of $C$, there is a component $C''$ of $G - M''$ that contains $C \cap C'$.
    Thus, we have $C' \setminus \mcc{C'}{M''} \subseteq C' \setminus \mcc{C'}{M}$. 
    To prove that $M''$ is closer than $M$ to $M'$, it suffices to show that $\size{D' \setminus \mcc{D'}{M''}} \le \size{D' \setminus \mcc{D'}{M}}$ for each $D' \in \mathcal C_{M'} \setminus \{C'\}$. 
    To this end, we show that $G - M''$ has a component including $D' \cap \mcc{D'}{M}$, which implies $D' \cap \mcc{D'}{M} \subseteq D' \cap \mcc{D'}{M''}$.
    Let $D = \mcc{D'}{M}$.
    Observe that $D' \cap N[T_v \cup \{v\}] = \emptyset$.
    This follows from the facts that $v \in C'$ and $T_v \cap M' = \emptyset$.
    By the construction of $M''$, there is a component $D''$ in $G - M''$ with $D \setminus N[T_v \cup \{v\}] \subseteq D''$.
    Note that $D \setminus N[T_v \cup \{v\}]$ can be empty.
    In this case, as $D \subseteq N[T_v \cup \{v\}]$ and $D' \cap N[T_v \cup \{v\}]$, we have $D \cap D' = \emptyset$.
    Thus, such a component $D''$ entirely contains $D \cap D'$.
    
    Therefore, $M''$ is closer than $M$ to $M'$ and hence by Lemma~\ref{lem:multicut:minimize}, the lemma follows.
\end{proof}

Figure~\ref{fig:nodemulticut1} illustrates an example of $M$ and $M''$ in Lemma~\ref{lem:multicut:decrease:noteminal}.

If there is a terminal pair $\{s, t\} \in B$ in $G[C \cup N[v]]$, $M''$ defined in Lemma~\ref{lem:multicut:decrease:noteminal} is not a node multicut of $G$ since $s$ and $t$ are contained in the connected component $C \cup T_v \cup \{v\}$ of $G - M''$ (see Figure~\ref{fig:nodemulticut2}). 
In this case, we have to separate all terminal pairs in this component. 

\begin{lemma}\label{lem:multicut:decrese:terminal}
    Let $M$ and $M'$ be distinct two minimal node multicuts of $G$, let $C' \in \mathcal C_{M'}$, and let $C = \mcc{C'}{M}$.
    Suppose there is a vertex $v \in N(C) \cap C' \subseteq M$ such that $G[N[v] \cup C]$ contains some terminal pair in $B$.    
    Let $T_v = N(v) \cap T$.
    Then, $M'' = (M \setminus \set{v}) \cup (N(T_v \cup \set{v}) \setminus C) \cup (C \cap M')$ is a node multicut and $\comp{M''}$ is closer than $M$ to $M'$. 
\end{lemma}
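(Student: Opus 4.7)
The plan is to parallel the argument of Lemma~\ref{lem:multicut:decrease:noteminal}, accounting for the extra summand $(C \cap M')$ in $M''$ that is forced on us because $G[N[v] \cup C]$ contains a terminal pair. Two claims must be established: $M''$ is a node multicut, and $\comp{M''}$ is strictly closer than $M$ to $M'$.

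For the multicut property, I argue by contradiction: suppose some pair $\{s,t\} \in B$ has a path in $G - M''$. Since $M \setminus \{v\} \subseteq M''$ and $M$ is a multicut, the path must visit $v$, so $s$ and $t$ lie in the component $Z$ of $v$ in $G - M''$. Tracing through the definition, $Z \subseteq \{v\} \cup T_v \cup C$ (because $N(T_v \cup \{v\}) \setminus C$ and $N(C) \setminus \{v\}$ are both inside $M''$) and further $Z \cap C \subseteq C \setminus M'$ (because $C \cap M' \subseteq M''$). Case (i): $\{s,t\} \subseteq T_v \subseteq N[v]$ violates Proposition~\ref{prop:multicut:reduce:node}. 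Case (ii): $\{s,t\} \subseteq C$ contradicts $M$ being a multicut. Case (iii), WLOG $s \in T_v$ and $t \in Z \cap C$: here $T_v \cup \{v\} \subseteq C'$ (since $v \in C'$, $T_v \subseteq N(v)$, and $T_v \cap M' = \emptyset$), so $s \in C'$; the $v$-to-$t$ subpath sits entirely inside $\{v\} \cup T_v \cup (C \setminus M')$, so it avoids $M'$ and is a $v$-to-$t$ path in $G - M'$, forcing $t \in C'$ and placing the pair $\{s,t\}$ inside $C' \in \mathcal C_{M'}$, a contradiction.

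For the closeness claim, I apply Lemma~\ref{lem:multicut:minimize} to reduce to showing $\dist{M''}{M'} < \dist{M}{M'}$, and split the sum over $\mathcal C_{M'}$. For each $D' \in \mathcal C_{M'} \setminus \{C'\}$, the argument from Lemma~\ref{lem:multicut:decrease:noteminal} adapts: $D' \cap M' = \emptyset$ makes the newly added term $(C \cap M')$ disjoint from $D'$, and $T_v \cup \{v\} \subseteq C' \ne D'$ forces $D' \cap N[T_v \cup \{v\}] = \emptyset$ (a non-$M'$ neighbor of $T_v \cup \{v\}$ must lie in $C'$ by component preservation in $G - M'$), so $|D' \setminus \mcc{D'}{M''}| \le |D' \setminus \mcc{D'}{M}|$. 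For $D' = C'$, I use the component $Z$ of $v$ in $G - M''$ as a witness: $\{v\} \cup T_v \subseteq Z \cap C'$, with $v \in C' \setminus C$ contributing at least one new vertex, while any non-$M'$ neighbor of $v$ or of $T_v$ in $C$ lies in $C \cap C'$ (again by the component-preservation observation in $G - M'$) and so the corresponding component of $G[C \setminus M']$ is absorbed into $Z$.

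The main technical obstacle is securing the strict decrease in the $C'$ term when $(C \cap M')$ genuinely refines $C \cap C'$ into several connected components of $G[C \setminus M']$: pieces not adjacent to $\{v\} \cup T_v$ survive as separate components of $G - M''$, and could shrink $\mcc{C'}{M''}$ relative to $C$. Closing this gap requires carefully combining the strict contribution of $\{v\} \cup T_v \subseteq C' \setminus C$ with the fact that $v \in N(C)$ guarantees at least one piece of $C \cap C'$ joins $Z$ whenever $v$ admits a non-$M'$ neighbor in $C$; residual cases are absorbed via Lemma~\ref{lem:multicut:minimize}, since $\comp{M''}$ can drop redundant vertices in $(C \cap M') \cap M''$, remerging the split pieces.
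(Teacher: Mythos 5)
Your argument for the multicut property closely tracks the paper's (which is terse: it merely invokes Proposition~\ref{prop:multicut:reduce:node} plus the fact that $M'$ is a multicut), but fills in the details carefully and correctly. In particular, your case~(iii) observation that a surviving $s$-$t$ path through $v$ would live entirely inside $\set{v} \cup T_v \cup (C \setminus M') \subseteq G - M'$ and hence contradict $M'$ being a multicut is exactly the right way to make the paper's one-line claim rigorous.

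For the closeness claim, you flag precisely the weak point: the paper simply asserts ``since $C' \cap M'$ is empty, there is a component $C''$ of $G - M''$ that contains $C \cap C'$,'' but this does not follow. Removing $C \cap M'$ from $G[C]$ can split $C \cap C'$ into several components, and since $N(C)\setminus\set{v} \subseteq M \setminus \set{v} \subseteq M''$ and $N(T_v \cup \set{v})\setminus C \subseteq M''$, the only bridge between pieces of $C \cap C'$ in $G - M''$ is through $\set{v} \cup T_v$ — which need not reach all of them. You are right that this is not self-evident, and the paper does not argue it. Where your proposal falls short is the final sentence: the claim that ``residual cases are absorbed via Lemma~\ref{lem:multicut:minimize}, since $\comp{M''}$ can drop redundant vertices'' is not an argument. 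Lemma~\ref{lem:multicut:minimize} only gives $\dist{\comp{M''}}{M'} \le \dist{M''}{M'}$; it is useless unless you already know $\dist{M''}{M'} < \dist{M}{M'}$. Moreover $\comp{}$ is an \emph{arbitrary} minimalization, so you cannot rely on it removing exactly the vertices of $C \cap M'$ that would remerge the split pieces. Similarly, ``$v \in N(C)$ guarantees at least one piece of $C \cap C'$ joins $Z$'' does not hold in general — the $C$-neighbor of $v$ may lie in $C \cap M'$ or in $C \setminus C'$, contributing nothing to $|C' \cap Z|$. So while you correctly reproduce the paper's strategy and correctly diagnose the gap, the gap remains open in your write-up; closing it would require a direct argument that some single component of $G - M''$ meets $C'$ in strictly more than $|C' \cap C|$ vertices (and likewise that no term $|D' \setminus \mcc{D'}{M''}|$ increases), which neither the paper nor your proposal currently supplies.
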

\begin{proof}
    The first part of this lemma is similar to Lemma~\ref{lem:multicut:decrease:noteminal}.
    However, $G[N[v] \cup C]$ contains a terminal pair.
    Since $C$ contains no terminal pair in $B$, by Proposition~\ref{prop:multicut:reduce:node}, exactly one of a vertex in such a terminal pair is contained in $T_v$.
    Thus, as $M'$ is a node multicut of $G$, $C \cap M'$ separates such a pair and hence $M''$ is a node multicut of $G$.
    
    The second part of this lemma is also similar to Lemma~\ref{lem:multicut:decrease:noteminal}.
    Observe that the increment of $M''$ compared with one in Lemma~\ref{lem:multicut:decrease:noteminal} is $C \cap M'$.
    Since $C' \cap M'$ is empty, there is a component $C''$ of $G - M''$ that contains $C \cap C'$ and hence we have $C' \setminus \mcc{C'}{M''} \subseteq C' \setminus \mcc{C'}{M}$.
    Moreover, it holds that $|D' \setminus \mcc{D'}{M''}| \le |D' \setminus \mcc {D'}{M}|$ for every $D' \in C_{M} \setminus \{C'\}$.
    By Lemma~\ref{lem:multicut:minimize}, the lemma follows.
\end{proof}

Now, we formally define the neighborhood of a minimal node multicut $M$ in the solution graph.
For each component $C$ and $v \in N(C)$, the neighborhood of $M$ contains $\comp{(M \setminus \{v\}) \cup (N(T_v \cup \{v\}) \setminus C)}$ if $G[N[v] \cup C]$ has no any terminal pair in $B$ and $\comp{(M \setminus \{v\}) \cup (N(T_v \cup \{v\}) \setminus C) \cup (C \cap M')}$ otherwise.
By Lemmas~\ref{lem:multicut:decrease:noteminal} and \ref{lem:multicut:decrese:terminal}, this neighborhood relation ensures that the solution graph is strongly connected, which allows us to enumerate all the minimal node multicuts of $G$ from an arbitrary one using Algorithm~\ref{algo:traversal}.
However, there is an obstacle: We have to generate the neighborhood without knowing $M'$ for the case where $G[N[v] \cup C]$ has a terminal pair.
To this end, we show that computing the neighborhood for this case can be reduced to enumerating the minimal $a$-$b$ separators of a graph.

Suppose that $G[N[v] \cup C]$ has a terminal pair.
Let $M'$ be an arbitrary node multicut of $G$.
An important observation is that $C \cap M'$ is a node multicut of $(G[C \cup T_v \cup \{v\}], \{\{s, t\} : \{s, t\} \in B, s \in T_v, t \in C\})$. 
Since $C$ is a component of $G - M$, by Proposition~\ref{prop:multicut:reduce:node}, one of the terminals in each terminal pair contained in $G[N[v] \cup C]$ belongs to $N(v) \cap T$ and the other one belongs to $C$.
Thus, every path between those terminal pairs pass through $v$ in $G[C \cup T_v \cup \{v\}]$ and $v \notin M'$.
It implies that $C \cap  M'$ is a node multicut of $(G[C \cup \{v\}], \{\{v, t\} : \{s, t\} \in B, t \in C\})$.
Let $H = G[C \cup \{v\}]$ and let $B' = \{\{v, t\} : \{s, t\} \in B, t \in C\})$.
Moreover, if we have two distinct minimal node multicuts $M_1$ and $M_2$ of $(H, B')$, minimal node multicuts $\comp{(M \setminus \{v\}) \cup (N(T_v \cup \{v\})\setminus C) \cup (C \cap M_i)}$, for $i = 1, 2$, are distinct since function $\comp{}$ does not remove any vertex in $M_1$ and $M_2$.

Now, our strategy is to enumerate minimal node multicuts $\comp{C \cap M'}$ of $(H, B')$ for all $M'$. 
This subproblem is not easier than the original problem at first glance.
However, this instance $(H, B')$ has a special property that the demand graph for $B'$ forms a star.
From this property, we show that this problem can be reduced to the minimal $a$-$b$ separator enumeration problem. 

\begin{figure}[t]
    \centering
    \includegraphics[width=0.7\textwidth]{./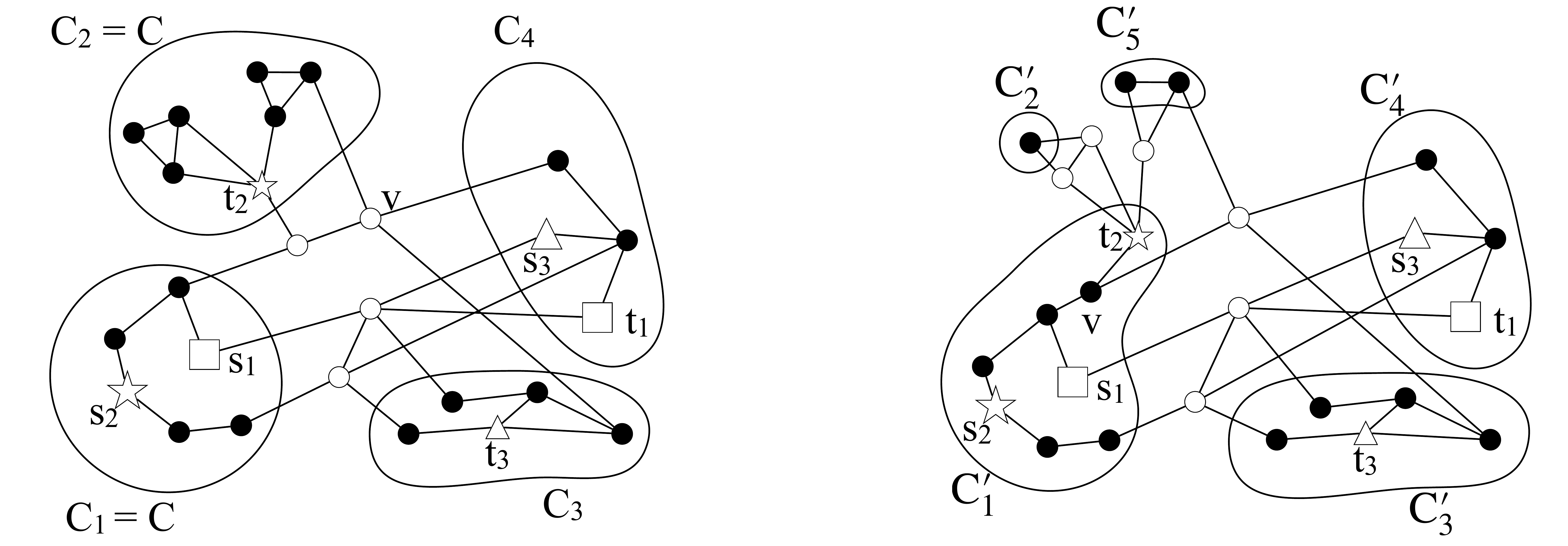}
    \caption{This figure illustrates an example of Lemma~\ref{lem:multicut:decrese:terminal}. $(M \setminus \set{v}) \cup  (N(T_v \cup \set{v}) \setminus C_1))$ is not a node multicut of $G$, and then we additionally have to separate a pair of terminals (represented by stars) in the component $C'_1 = C_1 \cup T_v \cup \{v\}$. }\label{fig:nodemulticut2}
\end{figure}

\begin{lemma}\label{lem:multicut:absep}
    Let $H = G[C \cup \{v\}]$ and let $B' = \set{\set{v, t}: \set{s,t}\in B, t \in C}$.
    Let $H'$ be the graph obtained from $H$ by identifying all the vertices of $T(B') \setminus \{v\}$ into a single vertex $v_t$.
    Then, $M \subseteq (C \cup \set{v}) \setminus T(B')$ is minimal node multicut of $(H, B')$ if and only if
    $M$ is a minimal $v$-$v_t$ separator of $H'$.
\end{lemma}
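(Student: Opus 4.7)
The plan is to exploit the fact that the demand graph for $B'$ is a star centered at $v$, so that a node multicut of $(H, B')$ is precisely a set of vertices separating $v$ from every other terminal in $T(B')$, which in turn is equivalent to separating $v$ from the single merged vertex $v_t$ in $H'$. The proof will therefore establish a bijective correspondence between paths in $H-M$ starting at $v$ and ending at some $t \in T(B')\setminus\{v\}$ on the one hand, and $v$-$v_t$ paths in $H'-M$ on the other.

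First, I would verify that the candidate sets match on both sides: by hypothesis $M \subseteq (C \cup \{v\}) \setminus T(B')$, so $M$ contains neither $v$ nor any vertex identified into $v_t$. Hence $M$ is a legitimate $v$-$v_t$ separator candidate in $H'$, and conversely every $v$-$v_t$ separator of $H'$ avoids $v$ and $v_t$, hence avoids $T(B')$ in $H$.

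Next, for the ``multicut'' equivalence, I would argue both directions via path translation. In the forward direction, suppose $M$ is a node multicut of $(H, B')$ but there is a $v$-$v_t$ path $P$ in $H'-M$; lifting $P$ to $H$ and truncating at the first vertex of $T(B')\setminus\{v\}$ it meets yields a $v$-$t$ path for some terminal $t$ in $H - M$, contradicting the multicut property. For the converse direction, any $v$-$t$ path in $H-M$ with $t \in T(B')\setminus\{v\}$ projects to a $v$-$v_t$ walk in $H'-M$ (well-defined since $M$ contains no terminal of $T(B')$), and any walk contains a path, contradicting the separator property. Hence $M$ is a node multicut of $(H, B')$ iff $M$ is a $v$-$v_t$ separator of $H'$.

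Finally, minimality transfers through the same correspondence: $M$ is minimal iff for every $w \in M$, the set $M\setminus\{w\}$ fails to separate, i.e.\ admits a $v$-$t$ path (resp.\ $v$-$v_t$ path) in the appropriate graph, and by the equivalence just established these two failure conditions coincide. The main subtlety I expect is the lifting step in the forward direction of the multicut equivalence: I must be careful that the lifted walk can be truncated to a genuine simple path reaching some terminal before re-entering a vertex, but since we truncate at the first terminal hit this is immediate. Everything else is a direct unfolding of definitions.
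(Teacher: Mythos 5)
Your proof is correct, but it proceeds along a genuinely different route than the paper's. The paper invokes the full-component characterization of minimal separators (that $M$ is a minimal $v$-$v_t$ separator of $H'$ iff $H'-M$ has components $C_v \ni v$ and $C_{v_t}\ni v_t$ with $N_{H'}(C_v)=N_{H'}(C_{v_t})=M$) and then matches $C_v$ and $C_{v_t}$ explicitly against the $\ell$ components supplied by Lemma~\ref{lem:node:multicut}; minimality is argued in each direction by verifying condition~(2) of that lemma directly. You instead split the claim into two cleaner pieces: first you establish, for \emph{every} candidate $M\subseteq (C\cup\{v\})\setminus T(B')$, that $M$ is a node multicut of $(H,B')$ iff $M$ is a $v$-$v_t$ separator of $H'$, via a lifting/projection of paths (here your care with the truncation step and with the ``walk contains a path'' step is exactly right, and the observation that $M$ avoids all terminals is the hypothesis that makes the projection land inside $H'-M$); second, you transfer minimality by noting that, since both properties are upward-closed, minimality is equivalent to the single-vertex-removal test, and the first step applies verbatim to each $M\setminus\{w\}$. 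Your version is more elementary in that it never needs the separator characterization and never touches Lemma~\ref{lem:node:multicut}; what the paper's version buys is a direct structural picture of the two full components, which it reuses in the surrounding argument. One small stylistic nit: when you write ``$M$ is minimal iff for every $w\in M$, $M\setminus\{w\}$ fails to separate,'' you should also keep the conjunct that $M$ itself separates; you established it in the preceding paragraph, so nothing is missing, but stating it would make the last step airtight.
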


\begin{proof}
    In this proof, we use a well-known characterization of minimal separators: $M$ is a minimal $v$-$v_t$ separator of $H'$ if and only if $H' - M$ has two components $C_{v}$ and $C_{v_t}$ such that $N_{H'}(C_v) = N_{H'}(C_{v_t}) = S$, $v \in C_v$, and $v_t \in C_{v_t}$, which is a special case of Lemma~\ref{lem:node:multicut}.

    Suppose that $M$ is a minimal node multicut of $(H, B')$.
    As $M$ is minimal, $H - M$ has components $C_1, \ldots, C_\ell$ as in Lemma~\ref{lem:node:multicut}.
    Suppose that $v \in C_1$. Let $C_v = C_1$.
    Since all the vertices of $T(B') \setminus \{v\}$ are identified into $v_t$,
    $\bigcup_{2\le i\le \ell} C_i \setminus T(B')$ forms a component $C_{v_t}$ in $H'$.
    Moreover, by condition (2) of Lemma~\ref{lem:node:multicut}, $u \in M$ has a neighbor in $C_1$ and in $C_i$ for some $2 \le i \le \ell$.
    Thus, $v$ has a neighbor in $C_v$ and in $C_{v_t}$, which implies that $N_{H'}(C_v) = N_{H'}(C_{v_t}) = M$.
    
    For the converse, we suppose that $M$ is a minimal $v$-$v_t$ separator of $H'$.
    We first show that $M$ is a node multicut of $(H, B')$.
    To see this, suppose for contradiction that there is a path between $v$ and $t$ in $H - M$.
    We choose a shortest one among all $v$-$t$ paths for $t \in T(B') \setminus \{v\}$.
    Since this path has no any terminal vertex of $T(B)$ except for its end vertices $v$ and $t$, this is also a $v$-$v_t$ path in $H'$, a contradiction.
    
    To see the minimality of $M$, let $C_1, \ldots, C_\ell$ be the components of $H - M$, each of which contains at least one terminal vertex of $T(B')$.
    Suppose that $v \in C_1$.
    Let $u \in M$ be arbitrary.
    Since $M$ is a minimal $v$-$v_t$ separator of $H'$, $v$ has a neighbor in $C_1$.
    Consider a shortest $u$-$v_t$ path in $H'[C_{v_t}]$.
    Since this path has no any terminal except for $v_t$,
    it is contained in a component $C_i$ for some $2 \le i \le \ell$.
    Therefore $M$ satisfies condition (2) of Lemma~\ref{lem:node:multicut}, and hence $M$ is a minimal node multicut of $(H, B')$.
\end{proof}

By Lemma~\ref{lem:multicut:absep}, we can enumerate $\comp{C \cap M'}$ for every minimal node multicut $M'$ of $G$ by using the minimal $a$-$b$ separator enumeration algorithm of Takata~\cite{Takata::2010}.
Moreover, as observed above, for any distinct minimal $v$-$v_t$ separators $S_1$ and $S_2$ in $H'$, 
we can generate distinct minimal node multicuts $\comp{(M \setminus \{v\}) \cup (N(T_v \cup \{v\})\setminus C) \cup S_i}$ of $G$.

The algorithm generating the neighborhood of $M$ is described in Algorithm~\ref{algo:inc:enmc}.

\begin{algorithm}[t]
    \caption{Computing the neighborhood of a minimal node multicut $M$ of $(G, B)$.}
    \label{algo:inc:enmc}
    \Fn(){${\tt Neigborhood}(M, \mathcal M)$}{
        $\mathcal S \gets \emptyset$\;
        \For{$v \in M$}{
            \For{$C \in \mathcal C_M$}{
                $T_v \gets N(v) \cap T$\;
                $M'' \gets (M \setminus \set{v}) \cup (N(T_v \cup \{v\}) \setminus C))$\;
                \If{$G[C \cup N[v]]$ has no terminal pairs}{
                    \lIf{$\comp{M''} \not\in \mathcal M$}{Output $\comp{M''}$}
                    $\mathcal S \gets \mathcal S \cup \set{\comp{M''}}$\;
                }\Else{
                    Run Takata's algorithm \cite{Takata::2010} for $(H', v, v_t)$ in Lemma~\ref{lem:multicut:absep}\;
                    \ForEach{Output $M'$ of minimal $v$-$v_t$ separator in $H'$}{
                        \lIf{$\comp{M'' \cup M'} \not\in \mathcal M$}{Output $\comp{M'' \cup M'}$}
                        $\mathcal S \gets \mathcal S \cup \set{\comp{M'' \cup M'}}$\label{algo:enumab}\;
                    }
                }
            }
        }
        \Return $\mathcal S$\;
    }
\end{algorithm}

\begin{theorem}\label{theo:multicut}
    Algorithm~\ref{algo:traversal} with {\tt Neigborhood} in Algorithm~\ref{algo:inc:enmc} enumerates all the minimal multicuts of $G$ in incremental polynomial time.
\end{theorem}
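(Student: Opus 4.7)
The plan is to establish the theorem in two stages: correctness (every minimal node multicut is eventually output without duplication) and the incremental polynomial time bound. The former reduces to showing that (i) every multicut produced by \texttt{Neighborhood}$(M, \mathcal M)$ is a minimal node multicut of $(G,B)$, and (ii) the solution graph defined by this neighborhood relation is strongly connected.

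For (i), when $G[N[v]\cup C]$ contains no terminal pair, Lemma~\ref{lem:multicut:decrease:noteminal} already shows that $(M \setminus \{v\}) \cup (N(T_v\cup\{v\}) \setminus C)$ is a node multicut, so applying $\comp{}$ yields a minimal one. In the other branch, the subroutine enumerates minimal $v$-$v_t$ separators $S$ of $H'$; by Lemma~\ref{lem:multicut:absep} each such $S$ is a minimal node multicut of $(H, B')$, and substituting $S$ for the role of $C\cap M'$ in the proof of Lemma~\ref{lem:multicut:decrese:terminal} shows that $(M\setminus\{v\})\cup(N(T_v\cup\{v\})\setminus C)\cup S$ is a node multicut of $G$, hence $\comp{}$ of it is minimal. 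The check against $\mathcal M$ in Algorithm~\ref{algo:inc:enmc} rules out duplicate outputs.

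For (ii), I would prove that for any two distinct minimal node multicuts $M$ and $M'$, the set \texttt{Neighborhood}$(M, \emptyset)$ contains some minimal node multicut strictly closer to $M'$ under $\dist{\cdot}{\cdot}$. By Lemma~\ref{lem:multicut:identity}, $\dist{M}{M'} > 0$, so there exists $C' \in \mathcal C_{M'}$ with $C'\setminus\mcc{C'}{M}\neq\emptyset$. Setting $C = \mcc{C'}{M}$ and using connectedness of $G[C']$, one finds a vertex $v \in (C'\setminus C) \cap N(C)$; since $C$ is a component of $G - M$ and $v \notin C$, necessarily $v\in M$. Now either Lemma~\ref{lem:multicut:decrease:noteminal} or Lemma~\ref{lem:multicut:decrese:terminal} applies. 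In the first case the claimed neighbor is explicitly produced; in the second, the relevant piece $C \cap M'$ is a minimal node multicut of $(H,B')$, so by Lemma~\ref{lem:multicut:absep} it is a minimal $v$-$v_t$ separator of $H'$ and will be generated by Takata's subroutine, yielding the desired neighbor. Combined with Lemma~\ref{lem:multicut:minimize}, this gives an inductive argument on $\dist{M}{M'}$ showing BFS reaches every minimal node multicut.

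For the incremental polynomial time bound, the main point is that every piece of \texttt{Neighborhood} runs with polynomial delay: the outer loops over $v \in M$ and $C\in\mathcal C_M$ have $O(|V|)$ iterations each, computing $\comp{M''}$ takes polynomial time, and Takata's algorithm enumerates minimal $v$-$v_t$ separators with $O(|V|\cdot|E|)$ delay. Thus, when processing a solution $M$ pulled from the queue, one can test the first $|\mathcal M|+1$ generated neighbors against $\mathcal M$ in time polynomial in $|V|+|E|+|\mathcal M|$: either a previously unseen minimal node multicut is found (and output), or every neighbor of $M$ lies in $\mathcal M$, in which case we move to the next element of the queue. Strong connectivity of the solution graph guarantees that as long as some minimal node multicut outside $\mathcal M$ exists, at least one $M\in\mathcal M$ has a neighbor outside $\mathcal M$, so a new solution is produced after examining at most $|\mathcal M|$ elements of the queue. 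The main obstacle, and the step that requires the most care, is verifying that the subroutine in the terminal-pair branch truly enumerates all the $S$ needed to reach every target $M'$; this is precisely the content of Lemma~\ref{lem:multicut:absep} and the observation that $v\notin M'$, so that $C\cap M'$ (rather than anything involving $v$) is the appropriate separator to recover.
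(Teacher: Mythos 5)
Your proof is correct and follows essentially the same approach as the paper's: argue that every output of the neighborhood routine is a minimal node multicut via Lemmas~\ref{lem:multicut:decrease:noteminal}, \ref{lem:multicut:decrese:terminal}, and \ref{lem:multicut:absep}; show strong connectivity of the solution graph by exhibiting a neighbor of $M$ closer to any target $M'$ under $\dist{\cdot}{\cdot}$; and bound the time to the next output using the polynomial delay of Takata's subroutine together with the fact that distinct separators yield distinct minimal multicuts after applying $\comp{}$. The paper's own proof is much terser, simply invoking the preceding lemmas; you fill in the details, such as establishing that a vertex $v \in (C' \setminus C) \cap N(C)$ exists and belongs to $M$.

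One minor imprecision worth flagging: you assert that ``$C \cap M'$ is a minimal node multicut of $(H, B')$,'' but in general $C \cap M'$ is only a (not necessarily minimal) node multicut of $(H, B')$; what Takata's subroutine recovers is a minimal $v$-$v_t$ separator $S \subseteq C \cap M'$, and the argument that $\comp{(M\setminus\{v\})\cup(N(T_v\cup\{v\})\setminus C)\cup S}$ is still strictly closer to $M'$ needs the monotonicity given by Lemma~\ref{lem:multicut:minimize}. Similarly, your statement that testing the first $|\mathcal M|+1$ generated neighbors decides whether ``every neighbor of $M$ lies in $\mathcal M$'' is a bit loose, since repeats across different $(v, C)$ pairs are possible; the cleaner bookkeeping is per $(v, C)$ pair, using that the $\comp{M''\cup S_i}$ are pairwise distinct for distinct separators $S_i$ within a single Takata call. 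Neither issue changes the substance of the argument, which matches the paper's.
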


\begin{proof}
    The correctness of the algorithm follows from the observation that the solution graph is strongly connected.
    Therefore, we consider the delay of the algorithm.
    
    Let $\mathcal M$ be a set of minimal node multicuts of $G$ that are generated so far.
    Let $M$ and $M'$ be arbitrary minimal node multicuts of $G$ with $M \in \mathcal M$ and $M' \notin \mathcal M$.
    By Lemmas~\ref{lem:multicut:decrease:noteminal} and \ref{lem:multicut:decrese:terminal}, Algorithm~\ref{algo:inc:enmc} finds either a minimal node multicut $\comp{M''}$ of $G$ not included in $\mathcal M$ or a minimal node multicut of $G$ that is closer than $M$ to $M'$.
    Moreover, since $\dist{M}{M'} \le n$, the algorithm outputs at least one minimal node multicut that is not contained in $\mathcal M$ in time $O(|\mathcal M|\cdot poly(n))$ if it exists.
\end{proof}

Note that, in Algorithm~\ref{algo:inc:enmc}, we use Takata's algorithm to enumerating minimal $v$-$v_t$ separators of $H'$.
To bound the delay of our algorithm, we need to process lines 13-14 for each output of Takata's algorithm.

\section{Polynomial delay enumeration of minimal node multiway cuts}\label{sec:node}

This section is devoted to designing a polynomial delay and exponential space enumeration algorithm for minimal node multiway cuts.
Let $G = (V, E)$ be a graph and let $T$ be a set of terminals.
We assume hereafter that $k = |T|$.
We begin with a characterization of minimal node multiway cuts as Lemma~\ref{lem:node:multicut}.

\begin{lemma}\label{lem:node:iff}
    A node multiway cut $M \subseteq V \setminus T$ is minimal if and only if there are $k$ connected components $C_1, C_2, \ldots, C_k$ of $V \setminus M$ such that
    (1) for each $1 \le i \le k$, $C_i$ contains $t_i$ and (2) for every $v \in M$, there is a pair of indices $1 \le i < j \le k$ with $N(v)\cap C_i \neq \emptyset$ and $N(v) \cap C_j \neq \emptyset$.
\end{lemma}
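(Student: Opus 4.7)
The plan is to derive this lemma directly as the specialization of Lemma~\ref{lem:node:multicut} to the multiway-cut setting, that is, to the case $B = T \times T$. The key observation is that under this choice of $B$ the family of ``terminal components'' that Lemma~\ref{lem:node:multicut} produces must consist of exactly $k$ components, one for each terminal, which recasts its two conditions into the form stated here.

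For the forward implication, suppose $M$ is a minimal node multiway cut and view it as a minimal node multicut for the instance $(G, T \times T)$. Applying Lemma~\ref{lem:node:multicut} yields components $C_1, \ldots, C_\ell$ of $G - M$, each containing at least one terminal, such that no single component contains a terminal pair. Because every two distinct terminals form a pair in $B = T \times T$, no component contains two terminals; combined with the condition that each $C_i$ contains at least one terminal, this forces $\ell = k$ and, after reindexing, $t_i \in C_i$ for every $i$, giving condition~(1). Condition (2) of Lemma~\ref{lem:node:multicut} supplies, for each $v \in M$, a terminal pair $(s, t) \in B$ whose two containing components both intersect $N(v)$; since these components are precisely some $C_i$ and $C_j$ with $i \ne j$, condition (2) of the present lemma follows.

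For the reverse implication, suppose $M \subseteq V \setminus T$ and pairwise distinct components $C_1, \ldots, C_k$ of $G - M$ satisfy (1) and (2). Since each $t_i$ lies in its own component $C_i$, no two terminals are connected in $G - M$, so $M$ is a node multiway cut. For minimality, fix any $v \in M$; condition (2) gives indices $i < j$ with $N(v) \cap C_i \neq \emptyset$ and $N(v) \cap C_j \neq \emptyset$. Then $G - (M \setminus \{v\})$ contains a path from $t_i$ to $t_j$ obtained by concatenating a path inside $C_i$ from $t_i$ to a neighbor of $v$, the vertex $v$ itself, and a path inside $C_j$ from a neighbor of $v$ to $t_j$. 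Hence $M \setminus \{v\}$ fails to separate $t_i$ from $t_j$, so $M$ is minimal.

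I do not anticipate a real obstacle; the only mild point of care is the bijection between the components promised by Lemma~\ref{lem:node:multicut} and the terminals, which has to use both that every $C_i$ contains a terminal and that $B = T \times T$ prohibits any component from containing two. Once that bookkeeping is done, both directions reduce to straightforward path arguments already implicit in the proof of Lemma~\ref{lem:node:multicut}.
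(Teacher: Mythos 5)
Your proof is correct and takes essentially the same approach as the paper, which also obtains this lemma as the specialization of Lemma~\ref{lem:node:multicut} to $B = T \times T$, observing that no component can contain two terminals so exactly $k$ terminal components arise. You merely spell out the bookkeeping in more detail and reprove the reverse direction directly rather than reading it off the ``if and only if'' of Lemma~\ref{lem:node:multicut}.
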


\begin{proof}
    Since every node multiway cut of $(G, T)$ is a node multicuts of $(G, T \times T)$, by Lemma~\ref{lem:node:multicut}, $M$ is a minimal node multiway cut of $(G, T)$ if and only if there are $k$ components of $G - M$, each of which, say $C_i$, has exactly one terminal $t_i$ and, for any $v \in M$, there is a pair of terminals $t_i, t_j \in T$ such that $v$ has a neighbor in $C_i$ and $C_j$, which proves the lemma.
\end{proof}

From a minimal node multiway cut $M$ of $G$, one can determine a set of $k$ connected components $C_1, \ldots, C_k$ in Lemma~\ref{lem:node:iff}.
Conversely, from a set of connected components $C_1, \ldots, C_k$ satisfying (1) and (2), one can uniquely determine a minimal node multiway cut $M$.
Given this, we denote by $\mathcal C_{M}$ a set of $k$ connected components associated to $M$.

The basic strategy to enumerate minimal node multiway cuts is the same as one used in the previous section: We define a solution graph that is strongly connected.
Let $M$ be a minimal node multiway cut of $G$ and let $\mathcal C_{M} = \{C_1, \ldots, C_k\}$.
For $1 \le i \le k$, $v \in M$ with $N(v) \cap (T \setminus \{t_i\}) = \emptyset$, let $M^{i, v} = (M \setminus \{v\}) \cup (\bigcup_{j \neq i} N(v) \cap C_j)$.
Intuitively, $M^{i,v}$ is obtained from $M$ by moving $v$ to $C_i$ and then appropriately removing vertices in $N(v)$ from $C_j$.
The key to our polynomial delay complexity is the size of the neighborhood of each $M$ is bounded by a polynomial in $n$, whereas it can be exponential in the case of minimal node multicut.

\begin{lemma}\label{lem:node:neighbor_cut}
    If $M$ is a minimal node multiway cut of $G$, then so is $M^{i, v}$.
\end{lemma}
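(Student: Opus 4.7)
The plan is to verify the two conditions of Lemma~\ref{lem:node:iff} for $M^{i,v}$: I would first exhibit $k$ terminal-containing connected components of $G - M^{i,v}$ and check they are pairwise separated, then verify the essentialness condition for every vertex of $M^{i,v}$. Write $W := \bigcup_{j \ne i}(N(v) \cap C_j)$, so that $M^{i,v} = (M \setminus \{v\}) \cup W$. Comparing $G - M^{i,v}$ with $G - M$, the only vertex that reappears is $v$, and by hypothesis $N(v) \cap T \subseteq \{t_i\}$; moreover every neighbor of $v$ lying inside some $C_j$ with $j \ne i$ belongs to $W \subseteq M^{i,v}$. Therefore $v$'s only neighbors in $G - M^{i,v}$ lie in $C_i$ or in non-terminal components of $G - M$. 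Consequently, the connected component $C'_i$ of $t_i$ in $G - M^{i,v}$ contains $C_i \cup \{v\}$, while for $j \ne i$ the component $C'_j$ of $t_j$ is contained in $C_j \setminus N(v)$. These $k$ sets are pairwise disjoint and each contains exactly one terminal, so $M^{i,v}$ is a node multiway cut.

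For minimality I would verify, for every $u \in M^{i,v}$, the existence of distinct $a, b$ with $N(u) \cap C'_a \ne \emptyset$ and $N(u) \cap C'_b \ne \emptyset$, splitting the cut into the old part $M \setminus \{v\}$ and the added part $W$. If $u \in M \setminus \{v\}$, the minimality of $M$ applied via Lemma~\ref{lem:node:iff} supplies indices $a \ne b$ with $N(u) \cap C_a \ne \emptyset$ and $N(u) \cap C_b \ne \emptyset$; I would show those witnesses survive the passage from $C_a,C_b$ to $C'_a,C'_b$, using $C_i \subseteq C'_i$ and tracing, inside $C_j$, a substitute neighbor of $u$ into the piece of $C_j \setminus N(v)$ containing $t_j$. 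If $u \in W$ with $u \in C_j$, then $u$ is adjacent to $v \in C'_i$, and since $u \notin T$ and $G[C_j]$ is connected, $u$ has some neighbor in $C_j$; I would locate one that lies in $C'_j$.

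The main obstacle I expect is precisely this last step: after deleting $N(v) \cap C_j$ from $C_j$, some pieces of $C_j$ may detach from $t_j$, and a neighbor of $u$ could land in such a detached fragment rather than in $C'_j$. The plan to handle it is a contradiction argument: assume some $u^\star \in M^{i,v}$ fails the two-component condition; then $M^{i,v} \setminus \{u^\star\}$ would still separate all terminal pairs, and I would lift this local redundancy back to $M$---by using the explicit way $W$ was constructed from $N(v)$ to trade $u^\star$ for a suitable smaller subset of $N(v)$---to exhibit a proper multiway cut inside $M$, contradicting the minimality of $M$. This contradiction argument, together with the component analysis in the first paragraph, would then deliver both conditions of Lemma~\ref{lem:node:iff} and hence the minimality of $M^{i,v}$.
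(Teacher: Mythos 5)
Your first paragraph is a correct and essentially equivalent proof that $M^{i,v}$ is a node multiway cut of $G$: you build the components $C'_i \supseteq C_i \cup \{v\}$ and $C'_j \subseteq C_j \setminus N(v)$ (for $j \neq i$) explicitly, where the paper argues by contradiction that any surviving terminal-to-terminal path in $G - M^{i,v}$ would have to pass through $v$ and hence through some vertex of $\bigcup_{j\neq i} N(v)\cap C_j \subseteq M^{i,v}$. Both routes are fine; yours is constructive, the paper's is shorter.

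Your second and third paragraphs, however, try to prove that $M^{i,v}$ is also \emph{minimal}, and here there is a real gap: that claim is false. Despite the phrasing ``so is,'' the paper's own proof of this lemma only establishes that $M^{i,v}$ is a node multiway cut, and the surrounding text immediately applies $\comp{}$ to $M^{i,v}$ to extract a minimal one, precisely because minimality can fail. The obstacle you correctly identified---a neighbor of $u\in W$ landing in a fragment of $C_j$ detached from $t_j$---is not a technicality to be argued around but an actual failure mode. Concretely, take $V=\{t_1,a,v,b,c,t_2\}$, $T=\{t_1,t_2\}$, and edges $t_1a$, $av$, $vb$, $bc$, $ct_2$, plus the chord $vc$. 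Then $M=\{v\}$ is a minimal node multiway cut with $C_1=\{t_1,a\}$, $C_2=\{b,c,t_2\}$, and $N(v)\cap T=\emptyset$, so $v$ is eligible with $i=1$; but $M^{1,v}=N(v)\cap C_2=\{b,c\}$ is not minimal, since $\{c\}$ alone already isolates $t_2$. Here $b$ has no neighbor in the new $t_2$-component and is exactly the redundant vertex you were worried about. Your proposed contradiction argument cannot be salvaged, because there is nothing about $b$'s redundancy in $M^{1,v}$ that contradicts the minimality of $M$ (which remains minimal). The fix is simply to drop the minimality claim from the goal; your first paragraph already does all the work the paper needs from this lemma.
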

\begin{proof}
    Suppose for contradiction that there is a path between a pair of terminals in $G - M^{i, v}$.
    Then, this path must pass through $v$ as $M$ is a node multiway cut of $G$.
    However, $M^{i, v}$ contains $\bigcup_{j \neq i} N(v) \cap C_j$, which yields a contradiction to the fact that the path connects two distinct terminals and passes through $v$.
\end{proof}

Now, we define the neighborhood of $M$ in the solution graph.
The neighborhood of $M$ consists of the set of minimal node multiway cuts $\comp{M^{i, v}}$ for every $1 \le i \le k$ and $v \in M$ with $N(v) \cap (T \setminus \{t_i\}) = \emptyset$.
To show the strong connectivity of the solution graph, we define 
\[
\dist{M}{M'} = \sum_{1 \le i \le k} \size{C'_i \setminus C_i},
\]
where $\mathcal C_M = \set{C_1, \ldots, C_k}$ and $\mathcal C_{M'} = \set{C'_1, \ldots, C'_k}$.
Note that the definition of $\tt dist$ is slightly different from one used in the previous section.
Let $M$, $M'$, $M''$ be minimal node multiway cuts of $G$.
We say that $M$ is {\em closer than $M''$ to $M$} if $\dist{M}{M'} < \dist{M''}{M'}$.

\begin{lemma}\label{lem:node:identity}
    Let $M$ and $M'$ be minimal node multiway cuts of $G$.
    Then, $\dist{M}{M'} = 0$ if and only if $M = M'$. 
\end{lemma}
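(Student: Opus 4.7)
The plan is to mirror the proof of Lemma~\ref{lem:multicut:identity}, with the refinement that in the multiway cut setting the components carrying terminals are naturally indexed by the $k$ terminals, so the asymmetric sum $\dist{M}{M'}$ becomes easier to interpret. The forward direction is immediate: if $M = M'$, then $\mathcal C_M = \mathcal C_{M'}$ and in particular $C_i = C'_i$ for every $i$, so every summand of $\dist{M}{M'}$ vanishes.

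For the converse, assume $\dist{M}{M'} = 0$. Since each term is nonnegative, $C'_i \subseteq C_i$ for every $1 \le i \le k$ (and this containment is consistent with $t_i$ lying in both). The first step is to show $M' \subseteq M$. Pick any $v \in M'$; then $v \notin T$. By Lemma~\ref{lem:node:iff} applied to $M'$, there are indices $i \neq j$ with $N(v) \cap C'_i \neq \emptyset$ and $N(v) \cap C'_j \neq \emptyset$. Combined with $C'_i \subseteq C_i$ and $C'_j \subseteq C_j$, the vertex $v$ has a neighbor in each of the two distinct components $C_i$ and $C_j$ of $G - M$. If $v$ did not belong to $M$, then $v$ would lie in a component of $G - M$, and via the edges to its neighbors in $C_i$ and $C_j$ that component would have to coincide with both $C_i$ and $C_j$, contradicting $C_i \neq C_j$. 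Hence $v \in M$, establishing $M' \subseteq M$.

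Finally, $M'$ is itself a node multiway cut of $(G, T)$ and $M' \subseteq M$, so the minimality of $M$ forces $M = M'$, completing the proof.

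The main obstacle is simply the asymmetry of $\tt dist$: unlike a genuine metric, the hypothesis $\dist{M}{M'} = 0$ only yields the one-sided containment $C'_i \subseteq C_i$, so we cannot close the argument purely combinatorially on the components; we need the auxiliary observation that a vertex of $M'$ has neighbors in two different terminal-components of $G - M$ and therefore must itself lie in $M$, after which minimality of $M$ closes the loop.
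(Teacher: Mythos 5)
Your proof is correct, but it takes a genuinely different route than the paper's. The paper argues component-by-component: from $\dist{M}{M'}=0$ it obtains $C'_i \subseteq C_i$, picks a boundary vertex $v$ of $C'_i$ lying in $C_i\setminus C'_i$, and then uses condition (2) of Lemma~\ref{lem:node:iff} to derive that $v$ could not lie in $M$ (its only possible terminal-component neighbor is $C_i$), so $C_i$ would fail to be a full component of $G-M$, contradicting the characterization of minimality. (As printed, the paper's proof has the containment and the role of $v$ reversed relative to its own definition of $\dist{\cdot}{\cdot}$, but that is the intended argument.) You instead work directly with vertices of $M'$: any $v\in M'$ has, by Lemma~\ref{lem:node:iff}(2) applied to $M'$, neighbors in two distinct $C'_i,C'_j$, which by the containments $C'_\ell\subseteq C_\ell$ land in two distinct components of $G-M$, forcing $v\in M$; hence $M'\subseteq M$, and since $M'$ is itself a multiway cut, minimality of $M$ gives $M=M'$. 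Both arguments lean on the same characterization, but your version replaces the ``$C_i$ is not a full component'' contradiction with the order-theoretic step $M'\subseteq M$ plus minimality, which is a bit shorter and avoids having to locate where a boundary vertex of $C'_i$ sits.
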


\begin{proof}
    Obviously, $\dist{M}{M'} = 0$ if $M = M'$. Thus we prove the other direction.
    
    Suppose that $\dist{M}{M'} = 0$.
    Let $\mathcal C_{M} = \set{C_1, \ldots, C_k}$ and $\mathcal C_{M'} = \set{C'_1, \ldots, C'_k}$.
    Then, we have $C_i \subseteq C'_i$ for every $1 \le i \le k$.
    Suppose for a contradiction that there is $v \in C'_i \setminus C_i$.
    Since $G[C'_i]$ is connected, we can choose $v$ in such a way that it has a neighbor in $C_i$.
    Since $M''$ is a minimal multiway node cut of $G$ and $v \in C'_i$, we have $N(v) \cap C'_j = \emptyset$ for each $j \neq i$. 
    As $C_j \subseteq C'_j$, $N(v) \cap C_j = \emptyset$ holds for any $j \neq i$. 
    Thus, we have $v \notin M$, which implies that $C_i$ is not a connected component of $G - M$.
    By Lemma~\ref{lem:node:iff}, $M$ is not a minimal node multiway cut of $G$, a contradiction.
\end{proof}

\begin{lemma}\label{lem:dec:depth}
    Let $M$ and $M'$ be distinct minimal node multiway cuts of $G$.
    Then, there is a minimal node multiway cut $M''$ of $G$ in the neighborhood of $M$ such that $M''$ is closer than $M$ to $M'$.
\end{lemma}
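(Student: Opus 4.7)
Since $M\ne M'$, Lemma~\ref{lem:node:identity} implies $\dist{M}{M'}\ge 1$, so there is an index $i$ with $C'_i\setminus C_i\ne\emptyset$. Inside the connected subgraph $G[C'_i]$, I follow a path from $t_i$ to any vertex of $C'_i\setminus C_i$ and let $v$ be the first vertex on this path lying outside $C_i$; then $v\in C'_i\setminus C_i$ and its predecessor on the path lies in $C_i\cap C'_i$, so $v$ has a neighbor in $C_i$. Two observations show that $(i,v)$ is a legal pivot for the neighborhood operation: (i) $v\in M$, because $v$ is adjacent to a vertex of $C_i$ yet $v\notin C_i$, and hence $v$ cannot lie in any component $C_\ell$ of $G-M$; (ii) $N(v)\cap(T\setminus\{t_i\})=\emptyset$, because any terminal neighbor of $v$ lies in the same $G-M'$-component as $v$, namely $C'_i$, and hence equals $t_i$.

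Set $M''=\comp{M^{i,v}}$; by Lemma~\ref{lem:node:neighbor_cut} combined with the definition of $\comp{}$, this is a minimal node multiway cut with $M''\subseteq M^{i,v}$. Writing $\mathcal C_{M''}=\{C''_1,\dots,C''_k\}$ with $t_j\in C''_j$, I compare $\dist{M''}{M'}$ with $\dist{M}{M'}$ summand by summand. For the index $i$, the set $C_i\cup\{v\}$ is disjoint from $M^{i,v}$ and connected in $G$ (since $v$ has a neighbor in $C_i$), so it lies inside one component of $G-M^{i,v}$; because $M''\subseteq M^{i,v}$, components only enlarge, and we get $C_i\cup\{v\}\subseteq C''_i$. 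Since $v\in C'_i\setminus C_i$, this yields $|C'_i\cap C''_i|\ge|C'_i\cap C_i|+1$, a strict gain of one.

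For an index $j\ne i$, I need $|C'_j\cap C''_j|\ge|C'_j\cap C_j|$, which I will establish via $C_j\cap C'_j\subseteq C''_j$. The key structural fact is that every neighbor of $v$ lies in $M'\cup C'_i$ (because $v\in C'_i$), and in particular $N(v)\cap C'_j=\emptyset$; combined with $M\cap C_j=\emptyset$, this gives $M^{i,v}\cap(C_j\cap C'_j)=\emptyset$, so each $x\in C_j\cap C'_j$ escapes $M''$. To connect such an $x$ to $t_j$ in $G-M''$, I work inside the connected subgraph $G[C_j\cup C'_j]$: its intersection with $M^{i,v}$ lies in $(N(v)\cap C_j)\cup(M\cap C'_j)\subseteq(C_j\setminus C'_j)\cup(C'_j\setminus C_j)$, hence meets $C_j\cap C'_j$ trivially, and detours can switch between the $C_j$- and $C'_j$-sides along $C_j\cap C'_j$.

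The main obstacle is completing this last step, i.e.\ verifying that the minimization operator $\comp{}$ cannot leave behind a vertex $w\in M\cap C'_j$ blocking the detour from $x$ to $t_j$. The handle is the minimality characterization in Lemma~\ref{lem:node:iff}: any vertex surviving in $M''$ must have neighbors in two distinct terminal components of $G-M''$, and for $w\in M\cap C'_j$ its only non-$M'$ neighbors sit inside $C'_j$, so tracing these neighbors together with the detours available in $C_j\setminus N(v)$ shows that a blocking $w$ can be peeled off, contradicting its inclusion in the minimal $M''$. Combining the strict gain at index $i$ with the non-loss at each $j\ne i$ gives $\dist{M''}{M'}\le\dist{M}{M'}-1$, completing the proof.
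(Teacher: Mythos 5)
Your argument mirrors the paper's own proof: pick any index $i$ with $C'_i\setminus C_i\neq\emptyset$, pick a pivot $v\in C'_i\cap N(C_i)$, set $M''=\comp{M^{i,v}}$, and show a strict gain at index $i$ and no loss at each $j\neq i$. You correctly handle the gain at $i$, and you correctly observe that $C_j\cap C'_j$ is disjoint from $M^{i,v}$. But the crux --- that every $x\in C_j\cap C'_j$ ends up in $C''_j$ --- is exactly where both you and the paper hand-wave, and it is in fact \emph{false} for an arbitrary admissible choice of $(i,v)$. The issue is that removing $N(v)\cap C_j$ can disconnect $G[C_j]$, stranding a piece of $C_j\cap C'_j$ away from $t_j$ in $G-M^{i,v}$; your proposed ``detour'' through $G[C'_j]$ can be blocked by $M\cap C'_j$, and the minimization $\comp{\cdot}$ is under no obligation to peel off the blocking vertex.

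A concrete failure: take $V=\{t_1,t_2,t_3,v,p,x,q,z\}$ with edges $t_1v,\ vp,\ pt_2,\ px,\ t_2q,\ qx,\ qz,\ zt_3$. Then $M=\{v,q\}$ gives $C_1=\{t_1\}$, $C_2=\{t_2,p,x\}$, $C_3=\{t_3,z\}$, and $M'=\{p,z\}$ gives $C'_1=\{t_1,v\}$, $C'_2=\{t_2,q,x\}$, $C'_3=\{t_3\}$; both are minimal and $\dist{M}{M'}=2$. Choosing $i=1$ (the unique admissible choice for this index) with pivot $v$ gives $M^{1,v}=\{p,q\}$, which is already minimal, so $M''=\{p,q\}$ with $C''_2=\{t_2\}$; the vertex $x\in C_2\cap C'_2$ is isolated in $G-M''$, so $x\notin C''_2$, and $\dist{M''}{M'}=2=\dist{M}{M'}$ --- no progress. (The blocking vertex $q\in M\cap C'_2$ cannot be peeled off: it still has neighbors in $C''_2$ and $C''_3$.) The lemma is saved here only because a \emph{different} move, $(i,v)=(2,q)$, does decrease the distance; a correct proof therefore cannot fix $(i,v)$ arbitrarily as you (and, as written, the paper) do --- one must either argue that \emph{some} choice in the neighborhood works, or justify a more delicate choice of pivot. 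As it stands, your per-index bookkeeping ``$|C'_j\setminus C''_j|\le|C'_j\setminus C_j|$ for all $j\neq i$'' is not established, and the final step of your proof is a genuine gap.
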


\begin{proof}
    Let $\mathcal C_M = \{C_1, \ldots, C_k\}$ and $\mathcal C_{M'} = \{C'_1, \ldots, C'_k\}$. 
    By Lemma~\ref{lem:node:identity}, there is a pair $C_i$ and $C'_i$ with $C'_i \setminus C_i \neq \emptyset$.
    As $G[C'_i]$ is connected, there exists a vertex $v$ in $C'_i \cap N(C_i)$. 
    Note that $v \in M$ as otherwise $v$ must be contained in $C_i$.
    By the definition of neighborhood, there is a minimal node multiway cut $M''$ of $G$ with $M'' = \comp{M^{i, v}}$.
    Let $C_{M''} = \set{C''_1, \ldots, C''_k}$.
    Observe that $|C'_i \setminus C''_i| < |C'_i \setminus C_i|$. 
    This follows from the fact that $C_i \cup \{v\} \subseteq C''_i$.
    Let $j \neq i$.
    By the definition of $M^{i, v}$, it holds that $C_j \setminus N(v) \subseteq C''_j$.
    Since $C'_i$ contains $v$, $C'_j$ does not contain any vertex in $N(v)$. 
    Thus, we have $C'_j \cap C''_j \subseteq C'_j \cap C_j$, and hence
    $\size{C'_j \setminus C_j} = \size{C'_j} - \size{C'_j \cap C_j }\le \size{C'_j} - \size{C'_j \cap C''_j} =\size{C_j \setminus C'_j}$,
    which completes the proof.
\end{proof}

Similarly to the previous section, by Lemma~\ref{lem:dec:depth}, we can conclude that the solution graph is strongly connected.
From this neighborhood relation, our enumeration algorithm is quite similar to one in the previous section, which is described in Algorithm~\ref{algo:enmc}.
To bound the delay of Algorithm~\ref{algo:enmc}, we need to bound the time complexity of computing $\comp{M}$.

\begin{lemma}\label{lem:comp:time}
    Let $M$ be a node multiway cut of $G$. 
    Then, we can compute $\comp{M}$ in $\order{n + m}$ time.
\end{lemma}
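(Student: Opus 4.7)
Plan. I would compute $\comp{M}$ by a single greedy ``peeling'' pass guided directly by Lemma~\ref{lem:node:iff}: repeatedly delete a vertex of $M$ that touches at most one terminal-containing component of the current subgraph, merging it with its adjacent components. When no such vertex remains, every surviving vertex of $M$ touches at least two terminal components, so the surviving set is a minimal node multiway cut contained in $M$.

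Concretely, first compute the connected components of $G - M$ with a BFS/DFS in $O(n+m)$ time. Since $M$ is a multiway cut the $k$ terminals sit in pairwise distinct components $C_1,\dots,C_k$, the \emph{terminal} components; the others are \emph{free}. For every $v \in M$, scan $N(v)$ once and record $T(v) \subseteq \{1,\dots,k\}$, the indices of terminal components adjacent to $v$, along with $v$'s adjacent free components; for each component, maintain the list of $M$-vertices adjacent to it. Enqueue every $v \in M$ with $|T(v)| \le 1$, then pop vertices one at a time: if the current $|T(v)| \le 1$, delete $v$ from $M$ and merge $\{v\}$ with all components $v$ touches into a single new component (terminal with index $i \in T(v)$ when $|T(v)|=1$, free otherwise); whenever a free component $D$ is absorbed into a terminal component $C_i$, walk $D$'s stored $M$-list and add $i$ to each of their $T$-sets.

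For correctness, each removal preserves the multiway-cut property, since the merged component acquires at most one terminal; hence the final $M' \subseteq M$ is still a node multiway cut of $G$. At termination, every $v \in M'$ has $|T(v)| \ge 2$ in the current component structure, which now equals that of $G - M'$, so by Lemma~\ref{lem:node:iff} $M'$ is minimal.

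The main obstacle is achieving $O(n+m)$ rather than the naive iterated-pass bound of $O((n+m)|M|)$ or a union-find bound of $O((n+m)\alpha(n))$. The key observation is that the $T$-sets are monotonically non-decreasing throughout the execution: a merge never separates components, so any vertex with $|T(v)| \ge 2$ at some moment stays ineligible forever, and a single initial enqueue suffices. Using pointer-based list concatenation (always splicing the smaller component's $M$-list into the larger, with cross-pointers from each component vertex back to its component header) to maintain the per-component $M$-adjacency lists and the $T$-sets, each edge of $G$ participates in a constant number of scans across all merges and each vertex of $M$ is popped at most once, yielding total time $O(n+m)$.
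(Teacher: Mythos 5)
Your plan matches the paper's: greedily peel vertices of $M$ that touch at most one terminal component, note that the number of terminal components touched by a vertex $v$ is monotone non-decreasing because removals only merge components, so a single pass suffices, and Lemma~\ref{lem:node:iff} then certifies minimality of the surviving set. The algorithmic core is the same. The gap is in the claimed $\order{n+m}$ bound. To locate which free component each free neighbour of $v$ lies in, you keep a cross-pointer from every vertex to its component header and maintain these by small-to-large splicing when two free components merge. Small-to-large rewrites a given vertex's pointer $\Theta(\log n)$ times in the worst case (each rewrite at least doubles the size of its component), so the pointer maintenance alone costs $\order{n\log n}$, and the total becomes $\order{m + n\log n}$, not $\order{n+m}$. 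Your observation that ``each edge of $G$ participates in a constant number of scans across all merges'' is correct for the edge scans, but it does not cover the cross-pointer rewrites on free-to-free merges, which are vertex work and are exactly what small-to-large fails to make constant amortised.

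The paper's $\order{n+m}$ bound rests on a slightly different invariant that sidesteps this: the only per-vertex label the algorithm ever consults is the index $i$ of the terminal component $C_i$ containing it, if any, and this label changes at most once over the whole run, from ``no terminal component'' to some fixed $i$. Consequently free components need no identities of their own, and no pointers are rewritten on a free-to-free merge. When a removal makes some free material join a terminal component $C_i$, the algorithm relabels it by a flood-fill over the currently unlabelled vertices reachable from $v$; each vertex is relabelled at most once, and each of its incident edges is examined $O(1)$ times across all such flood-fills, giving $\order{n+m}$ overall. Replacing your small-to-large cross-pointer maintenance with this one-shot relabelling is what closes the gap.
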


\begin{proof}
    We first compute the set of connected components of $G - M$.
    Let $C_i$ be the component including $t_i$ for $1 \le i \le k$.
    We build a data structure that, given a vertex $v$, reports the index $i$ if $v \in C_i$ in constant time, using a one-dimensional array.
    These can be done in linear time.
    Now, for each $v \in M$, we check if $M \setminus \{v\}$ is a node multiway cut of $G$.
    This can be done in $O(d(v))$ time using the above data structure.
    If we remove $v$ from $M$, we have to update the data structure: Some components not in $\set{C_1,\ldots, C_k}$ are merged into $C_i$.
    Each vertex is updated at most once in computing $\comp{M}$.
    Overall, we can in linear time compute $\comp{M}$.
\end{proof}

\begin{theorem}\label{theo:enmc}
    \cref{algo:traversal} with {\tt Neigborhood} in \Cref{algo:enmc} enumerates all the minimal node multiway cuts of $G$ in $\order{knm}$ delay and exponential space. 
\end{theorem}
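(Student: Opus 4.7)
The plan is to verify two things: (i) \Cref{algo:traversal} coupled with the neighborhood oracle in \Cref{algo:enmc} outputs every minimal node multiway cut of $G$ exactly once, and (ii) the time between two consecutive outputs is $O(knm)$.

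For correctness, I would argue that the solution graph on minimal node multiway cuts is strongly connected. By Lemma~\ref{lem:node:neighbor_cut}, every $\comp{M^{i,v}}$ produced by the oracle is indeed a minimal node multiway cut. By Lemma~\ref{lem:dec:depth}, for any two distinct minimal node multiway cuts $M$ and $M'$, the neighborhood of $M$ contains some $M''$ with $\dist{M''}{M'} < \dist{M}{M'}$. Iterating this and invoking Lemma~\ref{lem:node:identity} once the distance reaches zero, any $M'$ is reachable from any $M$ along a directed path in the solution graph. Hence the BFS in \Cref{algo:traversal}, started from an arbitrary minimal node multiway cut (obtained, e.g., by applying $\comp{}$ to the trivial multiway cut $V \setminus T$), visits every minimal node multiway cut, and the bookkeeping set $\mathcal{U}$ prevents any duplicated output.

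For the delay bound, I would estimate the work required to process one node $S$ of the solution graph. The neighborhood oracle ranges over pairs $(i, v)$ with $1 \le i \le k$, $v \in S$, and $N(v) \cap (T \setminus \{t_i\}) = \emptyset$, which gives at most $k|S| \le kn$ candidates. For each such pair, $M^{i,v}$ is assembled in $O(n + m)$ time by inspecting $N(v)$ and the component labels of $G - S$, and $\comp{M^{i,v}}$ is then computed in $O(n + m)$ time by Lemma~\ref{lem:comp:time}. Hence the entire neighborhood of $S$ is generated in $O(kn(n + m)) = O(knm)$ time, using the connectivity assumption $m \ge n - 1$. Duplicate testing against $\mathcal{U}$ can be done in $O(n)$ per query with a trie keyed on the characteristic vector of a cut, which is subsumed by the same bound. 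Since the main loop pops $S$, outputs it, and then computes its neighborhood before the next pop, the delay between two consecutive outputs is $O(knm)$. The exponential space bound is immediate: $\mathcal{U}$ may store all generated cuts, of which there can be exponentially many.

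The main obstacle I expect is making the per-node cost actually $O(knm)$ rather than a higher polynomial: a naive implementation would recompute the component decomposition of $G - M^{i,v}$ from scratch for each pair $(i, v)$, inflating the cost. The fix is to maintain the component labelling of $G - S$ once at the beginning of the processing of $S$, and then to probe only the modified vertices $\{v\} \cup (N(v) \cap \bigcup_{j \ne i} C_j)$ when building each $M^{i,v}$, which is exactly what the linear-time $\comp{}$ routine of Lemma~\ref{lem:comp:time} relies on. Once this bookkeeping is handled, the theorem follows by combining Lemmas~\ref{lem:node:neighbor_cut}, \ref{lem:node:identity}, \ref{lem:dec:depth}, and \ref{lem:comp:time}.
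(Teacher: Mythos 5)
Your proof is correct and follows essentially the same route as the paper: correctness via strong connectivity of the solution graph (Lemma~\ref{lem:dec:depth}), and delay by bounding the neighborhood size by $kn$ and the per-neighbor cost by $O(n+m)$ via Lemma~\ref{lem:comp:time}, plus $O(n)$ per dictionary lookup. One small point: the ``obstacle'' you raise at the end is not actually an obstacle, since even the naive recomputation you describe already costs $O(n+m)$ per pair $(i,v)$, and $kn$ pairs times $O(n+m)$ is $O(knm)$, exactly the claimed delay; the incremental bookkeeping you propose is therefore unnecessary for this bound (and is also not what Lemma~\ref{lem:comp:time} does --- that lemma simply gives a single linear-time pass to greedily minimize $M$).
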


\begin{proof}
    The correctness of the algorithm immediately follows from Lemma~\ref{lem:dec:depth}.
    Therefore, in the following, we concentrate on running time analysis. 
    
    In the first line of Algorithm~\ref{algo:traversal}, we compute an arbitrary minimal node multiway cut of $G$ in time $O(n + m)$ using the algorithm in Lemma~\ref{lem:comp:time}.
    For each output $M$, we compute the neighborhood of $M$ and check the dictionary $\mathcal U$ whether it has already been generated.
    For $1 \le i \le k$ and $v \in M$, we can compute $M^{i, v}$ and $\comp{M^{i, v}}$ in time $O(n + m)$ by Lemma~\ref{lem:comp:time}.
    Since the neighborhood of $M$ contains at most $kn$ minimal node multiway cuts of $G$ and we can check if the dictionary contains a solution in time $O(n)$,
    the delay is $O(k(n^2 + nm)) = O(knm)$.
\end{proof}

\begin{algorithm}[t]
    \caption{Computing the neighborhood of a minimal node multiway cut $M$ of $G$.}
    \label{algo:enmc}
    \Fn(){${\tt Neighborhood}(M, \mathcal M)$}{
        $\mathcal S \gets \emptyset$\;
        \For{$v \in M$}{
            \For{$C_i \in \mathcal C_M$}{
                \lIf{$N(v) \setminus C_i$ has no terminals}{
                    $\mathcal S \gets \mathcal S \cup \comp{M^{i,v}}$
                }
            }
        }
        \Return $\mathcal S$\;
    }
\end{algorithm}



\section{Polynomial space enumeration for minimal edge multiway cuts}\label{sec:edge}
In the previous section, we have developed a polynomial delay enumeration for both node multiway cuts.
\Cref{prop:reduction} and the previous result imply that the minimal edge multiway cut enumeration problem can be solved in polynomial delay and exponential space.
In this section, we design a polynomial delay and space enumeration for minimal edge multiway cuts.
Let $G = (V, E)$ be a graph and let $T$ be a set of terminals.

\begin{lemma}
\label{lem:con}
    Let $M \subseteq E$ be an edge multiway cut of $G$.
    Then, $M$ is minimal if and only if $G - M$ has exactly $k$ connected components $C_1, \ldots, C_k$, each $C_i$ of which contains $t_i$.
\end{lemma}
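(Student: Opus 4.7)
The plan is to adapt the two-direction argument of Lemma~\ref{lem:node:iff} to the edge setting. Both directions rely on the simple structural observation that restoring an edge of $M$ can only merge (at most) two components of $G - M$.

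For the forward direction, assume $M$ is a minimal edge multiway cut. Since the terminals are pairwise separated in $G - M$, each $t_i$ belongs to a distinct component $C_i$. To rule out any extra component, I would argue by contradiction: if some component $D$ of $G - M$ contained no terminal, then because $G$ is connected there is an edge $e \in M$ with one endpoint in $D$. Restoring $e$ merges $D$ with a neighboring component, but since $D$ carries no terminal the resulting component still contains at most one terminal. Hence $M \setminus \{e\}$ is still a multiway cut, contradicting minimality. So $G - M$ has exactly the $k$ components $C_1, \dots, C_k$.

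For the backward direction, assume $G - M$ has exactly $k$ components $C_1, \dots, C_k$ with $t_i \in C_i$. Take any $e = \{u,v\} \in M$; the plan is to show $M \setminus \{e\}$ is no longer a multiway cut. The endpoints $u, v$ must lie in distinct components $C_i, C_j$ (otherwise $e$ would be redundant within a single component, and removing $e$ would leave the same $k$-component structure, contradicting the intent that $M$ is a minimal cut consistent with this partition). Restoring $e$ then merges $C_i$ with $C_j$, producing a single component containing both $t_i$ and $t_j$, so $M \setminus \{e\}$ does not separate the pair $(t_i, t_j)$. Hence $M$ is minimal.

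The main obstacle I anticipate is the backward direction, specifically justifying that every $e \in M$ has endpoints in distinct components under the ``exactly $k$'' assumption; an edge with both endpoints in a single $C_i$ would be a redundancy of $M$, and this must be ruled out to make the ``if and only if'' go through cleanly---likely by appealing to the bijection between minimal edge multiway cuts and partitions $(V_1, \dots, V_k)$ of $V$ with $t_i \in V_i$ and $G[V_i]$ connected, where $M$ is the set of edges crossing the partition.
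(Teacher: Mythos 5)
Your approach is the same as the paper's: the forward direction restores an edge of $M$ incident to a hypothetical terminal-free component to contradict minimality (the paper phrases this via an edge joining some $C_i$ with $i \le k$ to a component $C_j$ with $j > k$, which exists because $G$ is connected), and the backward direction restores an arbitrary $e \in M$ to reconnect two terminals.

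The obstacle you flag in the backward direction is genuine, and the paper does not actually resolve it: its proof simply asserts that every edge of $M$ ``lies between two connected components,'' which does not follow from the stated hypothesis, and your own justification is circular (you appeal to the minimality you are trying to establish). In fact the ``if'' direction is false as literally stated. Take $G$ with vertices $t_1, u, v, t_2$ and edges $\{t_1,u\}$, $\{t_1,v\}$, $\{u,v\}$, $\{v,t_2\}$, and let $M = \{\{u,v\}, \{v,t_2\}\}$. Then $G - M$ has exactly the two components $\{t_1,u,v\}$ and $\{t_2\}$, each containing its terminal, yet $M$ is not minimal because $\{\{v,t_2\}\}$ already separates $t_1$ from $t_2$; the culprit is precisely an edge of $M$ with both endpoints in one $C_i$. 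The statement must therefore be read together with the remark following it: the intended content is the bijection between minimal edge multiway cuts and partitions of $V$ into $k$ connected blocks $C_1, \ldots, C_k$ with $t_i \in C_i$, where $M$ is \emph{exactly} the set of edges crossing the partition. Under that reading (which is the only form in which the lemma is used later), every $e \in M$ does join two distinct blocks and your backward argument closes. So: same route as the paper, and you correctly located the one step that needs repair --- but the repair is to strengthen the hypothesis, not to derive the crossing property from the ``exactly $k$ components'' condition, which cannot be done.
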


\begin{proof}  
    Suppose that $M$ is a minimal edge multiway cut of $G$.
    From the definition of edge multiway cut, $G - M$ has at least $k$ connected components $C_1, C_2, \ldots, C_{k'}$. 
    We can assume without loss of generality that each $C_i$ contains $t_i$. 
    If $M$ contains an edge of $G[C_i]$ for some $i$, we can simply remove this edge from $M$ without introducing a path between terminals, which contradicts to the minimality of $M$.
    Moreover, if $k' > k$, there is at least one edge $e$ in $M$ such that one of the end vertices of $e$ belongs to $C_i$ for some $i \le k$ and the other end vertex of $e$ belongs to $C_j$ for some $j > k$.
    This edge can be removed from $M$ without introducing a path between terminals, contradicting to the minimality of $M$.
    Therefore, the ``only if'' part follows.

    Conversely, let $C_1, C_2, \ldots C_k$ be the connected components of $G - M$ such that $C_i$ contains $t_i$ for each $1 \le i \le k$.
    Every edge $e$ in $M$ lies between two connected components, say $C_i$ and $C_j$.
    This implies that there is a path between $t_i$ and $t_j$ in $G - (M \setminus \{e\})$.
    Hence, $M$ is minimal.
\end{proof}

Note that the lemma proves in fact that there is a bijection between the set of minimal multiway cuts of $G$ and the collection of partitions of $V$ satisfying the condition in the lemma.
In what follows, we also regard a minimal multiway cut $M$ of $G$ as a partition $\mathcal P_{M} = \{C_1, C_2, \ldots C_k\}$ of $V$ satisfying the condition in Lemma~\ref{lem:con}.
We write $\mathcal P_M^{i<}$, $\mathcal P_M^{<i}$, and $\mathcal P_M^{\le i}$ to denote $\bigcup_{i < j} C_j$, $\bigcup_{j < i} C_j$, and $\bigcup_{j \le i} C_j$, respectively.
For a vertex $v \in V$, 
the position of $v$ in $\mathcal P_M$, denoted by $\mathcal P_M(v)$, is the index $1 \le i \le k$ with $v \in C_i$. 

The bottleneck of the space complexity for enumeration algorithms in the previous sections is to use a dictionary to avoid duplication. 
To overcome this bottleneck, we propose an algorithm based on \emph{the reverse search paradigm}~\cite{Avis::1996}.
Fix a graph $G = (V, E)$ and a terminal set $T \subseteq V$.
In this paradigm, we also define a graph on the set of all minimal edge multiway cuts of $G$ and a specific minimal edge multiway cut, which we call the {\em root}, denoted by $R \subseteq V$.
By carefully designing the neighborhood of each minimal edge multiway cut of $G$, the solution graph induces a {\em directed tree} from the root, which enables us to enumerate those without duplication in polynomial space.


To this end, we first define the root $\mathcal P_R = \set{C^r_1, \ldots, C^r_k}$ as follows:
Let $C^r_i$ be the component in $G - (\mathcal P_R^{<i} \cup \{t_{i+1}, \ldots, t_k\})$ including $t_i$.
Note that $\mathcal P_R^{<1}$ is defined as the empty set and hence $C^r_1$ is well-defined. 
\begin{lemma}\label{lem:root}
    The root $R$ is a minimal edge multiway cut of $G$.
\end{lemma}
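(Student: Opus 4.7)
The plan is to apply Lemma~\ref{lem:con}: I will check that $\{C^r_1, \ldots, C^r_k\}$ is a partition of $V$ into $k$ nonempty parts, each of which is connected in $G$ and contains exactly one terminal, namely $t_i$. Then $R$ can be defined as the set of edges of $G$ whose endpoints lie in distinct parts, and Lemma~\ref{lem:con} yields that $R$ is a minimal edge multiway cut.

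First I would verify that the inductive definition is well-posed. Working by induction on $i$, each $C^r_j$ with $j<i$ contains $t_j$ but none of $t_{j+1},\ldots,t_k$, since those terminals are deleted in the graph used to define $C^r_j$. Hence $\mathcal{P}_R^{<i}$ contains $t_1,\ldots,t_{i-1}$ and no terminal from $\{t_i,\ldots,t_k\}$, so in particular $t_i$ survives in $G-(\mathcal{P}_R^{<i}\cup\{t_{i+1},\ldots,t_k\})$ and $C^r_i$ is well-defined. Pairwise disjointness is immediate from $C^r_i\subseteq V\setminus\mathcal{P}_R^{<i}$, and the same computation shows $C^r_i\cap T=\{t_i\}$. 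Connectivity of $G[C^r_i]$ follows because $C^r_i$ is a connected component of a vertex-deletion subgraph of $G$, and every edge of $G$ with both endpoints in $C^r_i$ also lies in that subgraph.

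The main obstacle, and the step I would spend the most care on, is showing coverage $V=\bigcup_i C^r_i$. My plan is a contradiction argument: suppose $v\in V$ belongs to no $C^r_i$. In particular $v\notin \mathcal{P}_R^{<k}$ and $v\notin C^r_k$. Since $G$ is connected, pick $t_\ell$ at minimum distance from $v$ in $G$ and let $P$ be a shortest $v$–$t_\ell$ path; by minimality, $P$ has no internal terminal. I claim $P$ meets $\mathcal{P}_R^{<k}$: otherwise $v$ would be connected to $t_\ell$ in $G-\mathcal{P}_R^{<k}$, which forces either $v\in C^r_k$ (if $\ell=k$) or $t_\ell\in C^r_\ell\subseteq \mathcal{P}_R^{<k}$ (if $\ell<k$), both contradictions. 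Let $u$ be the first vertex of $P$ lying in $\mathcal{P}_R^{<k}$, say $u\in C^r_j$ with $j<k$, and let $u'$ be its predecessor on $P$. Then $u'\notin \mathcal{P}_R^{<k}\supseteq \mathcal{P}_R^{<j}$, and $u'$ is not a terminal (no internal terminals on $P$), so $u'$ survives in $G-(\mathcal{P}_R^{<j}\cup\{t_{j+1},\ldots,t_k\})$ and is adjacent to $u\in C^r_j$. Hence $u'\in C^r_j\subseteq \mathcal{P}_R^{<k}$, contradicting the choice of $u$.

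Having established that $\{C^r_1,\ldots,C^r_k\}$ is a partition of $V$ into $k$ connected pieces with exactly one terminal each, define $R$ as the set of edges of $G$ with endpoints in different parts. Then the connected components of $G-R$ are exactly $C^r_1,\ldots,C^r_k$, and Lemma~\ref{lem:con} concludes that $R$ is a minimal edge multiway cut of $G$.
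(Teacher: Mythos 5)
Your proof is correct and follows essentially the same strategy as the paper's: show that $\{C^r_1,\ldots,C^r_k\}$ is a partition of $V$ into connected parts, each containing exactly one terminal, and then invoke Lemma~\ref{lem:con}. You fill in more detail than the paper does (particularly on well-posedness of the inductive definition and on the coverage argument, which the paper dispatches in one terse sentence via connectedness of $G$), but the underlying argument is the same.
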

\begin{proof}
    Clearly, $C^r_i$ contains $t_i$ for all $1 \le i \le k$.
    Thus, we show that $\mathcal P_R$ is a partition of $V$.
    Let $v$ be an arbitrary vertex of $G$.
    Since $G$ is connected, $v$ is adjacent to a vertex in $C^r_i$ for some $1 \le i \le k$. This implies that $v$ is included in $C^r_j$ for some $j \le i$.
\end{proof}

Next, we define the parent-child relation in the solution graph. As in the previous sections, we define a certain measure for minimal edge multiway cuts $M$ of $G$: The \emph{depth} of $M$ as
\[
    \depth{M} = \sum_{v \in V} (\mathcal P_M(v) - \mathcal P_R(v)).
\]
Intuitively, the depth of $M$ is the sum of a ``difference'' of the indices of blocks in $\mathcal P_M$ and $\mathcal P_R$ that $v$ belongs to.
For two minimal edge multiway cuts $M$ and $M'$ of $G$, we say that $M$ is \emph{shallower than} $M'$ if $\depth{M} < \depth{M'}$.
Note that the depth of $M$ is at most $kn$ for minimal edge multiway cut $M$ of $G$.
One may think that the depth of $M$ or more specifically $\mathcal P_M(v) - \mathcal P_R(v)$ can be negative. The following two lemmas ensure that it is always non-negative.

\begin{lemma}\label{lem:shiftable}
    Let $M$ be a minimal edge multiway cut of $G$ and let $\mathcal P_M = \set{C_1, \ldots, C_k}$.
    Then, $C_i \subseteq \mathcal P_R^{\le i}$ holds for every $1 \le i \le k$. 
\end{lemma}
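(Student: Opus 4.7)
The plan is to prove the lemma by induction on $i$, with the aid of an auxiliary characterization of $\mathcal P_R^{\le i}$ that makes the conclusion nearly immediate. Letting $V_i := V \setminus \{t_{i+1}, \ldots, t_k\}$, the auxiliary claim I would establish is: $\mathcal P_R^{\le i}$ equals the set $Z_i$ of vertices that lie in a connected component of $G[V_i]$ containing at least one of $t_1, \ldots, t_i$.

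Given this, the lemma drops out quickly: for any $v \in C_i$, the subgraph $G[C_i]$ is connected, contains $t_i$, and (since the components of a minimal multiway cut put each terminal in its own block) avoids every $t_j$ with $j\neq i$, so $C_i \subseteq V_i$. Hence any path in $G[C_i]$ from $v$ to $t_i$ also lives in $G[V_i]$, placing $v$ in $Z_i = \mathcal P_R^{\le i}$.

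I would prove the auxiliary claim by induction on $i$. The base case $i=1$ is immediate from the definition of $C^r_1$. For the inductive step, assume $\mathcal P_R^{\le i-1} = Z_{i-1}$. The inclusion $\mathcal P_R^{\le i} \subseteq Z_i$ is easy: any $v \in C^r_j$ with $j \le i$ connects to $t_j$ inside $G[V_j \setminus \mathcal P_R^{\le j-1}] \subseteq G[V_i]$. The hard direction is $Z_i \subseteq \mathcal P_R^{\le i}$. Take $v \in Z_i$; if $v \in Z_{i-1}$ the induction hypothesis closes the case, so I may assume $v \in V_i \setminus Z_{i-1} = V_i \setminus \mathcal P_R^{\le i-1}$ (using the IH). Setting aside the trivial case $v = t_i$, let $D$ be $v$'s component in $G[V_{i-1}]$. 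Since $v \notin Z_{i-1}$, $D$ contains no $t_j$ with $j \le i-1$, so by IH $D \cap \mathcal P_R^{\le i-1} = \emptyset$, i.e., $D \subseteq V_i \setminus \mathcal P_R^{\le i-1}$. The pivotal observation is that $v$'s component in $G[V_i]$ must strictly enlarge $D$ (otherwise it contains no terminal at all, contradicting $v \in Z_i$); since the only vertex in $V_i \setminus V_{i-1}$ is $t_i$, there must be an edge from $t_i$ to $D$. Thus $D \cup \{t_i\}$ is connected inside $G[V_i \setminus \mathcal P_R^{\le i-1}]$ and contains both $v$ and $t_i$, so $v \in C^r_i \subseteq \mathcal P_R^{\le i}$.

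The main obstacle will be managing the interplay between the iterative, one-terminal-at-a-time definition of $C^r_i$ and the static notion of components of $G[V_i]$; the IH $\mathcal P_R^{\le i-1} = Z_{i-1}$ is what bridges these two viewpoints and lets me conclude that $D \cup \{t_i\}$ sits entirely outside $\mathcal P_R^{\le i-1}$ and hence forms (part of) the component of $t_i$ in $G[V_i \setminus \mathcal P_R^{\le i-1}]$.
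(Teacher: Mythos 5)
The proposal is correct and takes essentially the same approach as the paper: both hinge on the observation that any $v \in C_i$ reaches $t_i$ via a path inside $C_i$ that avoids all other terminals. Your auxiliary claim $\mathcal P_R^{\le i} = Z_i$, proved by induction, simply fills in the step that the paper compresses into the phrase ``by the definition of $R$, $v$ is included in $\mathcal P_R^{\le i}$,'' which does implicitly require handling the case where the path from $t_i$ to $v$ re-enters $\mathcal P_R^{<i}$.
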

\begin{proof}
    Suppose for contradiction that $v$ is a vertex in  $C_i \setminus \mathcal P_R^{\le i}$. 
    Since $v$ is included in $C_i$, there is a path between $t_i$ to $v$ in $G - (T \setminus \set{t_i})$. 
    By the definition of $R$, $v$ is included in $\mathcal P_R^{\le i}$, which contradicts to the fact that $v$ is a vertex in $C_i \setminus \mathcal R_{\le i}$. 
\end{proof}

\begin{lemma}\label{lem:sim}
    Let $M$ be a minimal edge multiway cut of $G$ and let $\mathcal P_M = \set{C_1, \ldots, C_k}$.
    Then, $\depth{M} = 0$ if and only if $M = R$.
\end{lemma}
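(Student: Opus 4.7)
The plan is to observe that each summand $\mathcal P_M(v) - \mathcal P_R(v)$ is individually nonnegative, so the sum vanishes exactly when every summand vanishes. Nonnegativity is a direct consequence of Lemma~\ref{lem:shiftable}: if $v \in C_i$, i.e., $\mathcal P_M(v) = i$, then by Lemma~\ref{lem:shiftable} we have $v \in C_i \subseteq \mathcal P_R^{\le i}$, so $v \in C^r_j$ for some $j \le i$, meaning $\mathcal P_R(v) \le i = \mathcal P_M(v)$. Hence every term in the defining sum of $\depth{M}$ is at least zero.

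The forward direction ($M = R \Rightarrow \depth{M} = 0$) is immediate, since every summand is literally zero when $\mathcal P_M = \mathcal P_R$.

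For the converse, suppose $\depth{M} = 0$. By the nonnegativity just established, we must have $\mathcal P_M(v) = \mathcal P_R(v)$ for every $v \in V$. This means $v \in C_i$ if and only if $v \in C^r_i$, i.e., $C_i = C^r_i$ for every $1 \le i \le k$, so $\mathcal P_M = \mathcal P_R$. By the bijection noted after Lemma~\ref{lem:con} between minimal edge multiway cuts and partitions of $V$ of the prescribed form, this yields $M = R$.

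There is essentially no obstacle here: the lemma is a short corollary of Lemma~\ref{lem:shiftable} together with the partition-cut correspondence, and the only subtlety to be careful about is the sign of the summands, which is exactly what Lemma~\ref{lem:shiftable} is designed to control.
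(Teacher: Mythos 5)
Your proof is correct and follows the same route as the paper: use Lemma~\ref{lem:shiftable} to show each summand $\mathcal P_M(v) - \mathcal P_R(v)$ is nonnegative, so a zero sum forces every summand to vanish, giving $C_i = C^r_i$ for all $i$. The only difference is that you explicitly invoke the partition–cut bijection after Lemma~\ref{lem:con} to conclude $M = R$, which the paper leaves implicit; otherwise the argument is identical.
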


\begin{proof}
    Obviously, the depth of $R$ is zero.
    Thus, in the following, we consider the ``only if'' part. 
    By Lemma~\ref{lem:shiftable}, every vertex $v \in C_i$ is included in $\mathcal P_R^{\le i}$. 
    This implies that $\mathcal P_M(v) - \mathcal P_R(v)$ is non-negative. 
    Since the depth of $M$ is equal to zero, we have $\mathcal P_M(v) = \mathcal P_R(v)$ for every $v \in V$.
    Hence, we have $C_i = C^r_i$ for every $1 \le i \le k$.
\end{proof}

Let $M$ be a minimal edge multiway cut of $G$.
To ensure that the solution graph forms a tree, we define the parent of $M$ which is shallower than $M$. 
Let $\mathcal P_M = \set{C_1, \ldots, C_k}$.
We say that a vertex $v \in (N(C_i) \cap \mathcal P_M^{i<}) \setminus T$ is \emph{shiftable into $C_i$} (or simply, \emph{shiftable}). .
In words, a vertex is shiftable into $C_i$ if it is non-terminal, adjacent to a vertex in $C_i$, and included in $C_j$ for some $j > i$.

\begin{lemma}\label{lem:sp}
    Let $M$ be a minimal node multiway cut of $G$ with $M \neq R$ and let $\mathcal P_M = \set{C_1, \ldots, C_k}$.
    Then, there is at least one shiftable vertex in $V \setminus M$.
\end{lemma}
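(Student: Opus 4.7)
The plan is to leverage Lemma~\ref{lem:sim} and Lemma~\ref{lem:shiftable} together with the very definition of the root partition. Since $M \neq R$, Lemma~\ref{lem:sim} gives $\depth{M} > 0$, so the sum $\sum_{v\in V}(\mathcal P_M(v)-\mathcal P_R(v))$ is strictly positive; Lemma~\ref{lem:shiftable} ensures every summand is non-negative, so there must be at least one vertex $u \in V$ with $\mathcal P_M(u) > \mathcal P_R(u)$. Set $i := \mathcal P_R(u)$; then $u \in C^r_i$.

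Next, I would invoke the defining property of $C^r_i$: it is the connected component of $t_i$ in $G - (\mathcal P_R^{<i} \cup \set{t_{i+1},\ldots,t_k})$. Hence there exists a path $P = v_0 v_1 \cdots v_\ell$ in that subgraph with $v_0 = t_i$ and $v_\ell = u$. Every internal vertex of $P$ lies in $C^r_i$, so in particular none of them is a terminal: $t_j$ for $j < i$ lies in $\mathcal P_R^{<i}$ and $t_j$ for $j > i$ is deleted, while $t_i$ appears only as $v_0$ on this simple path.

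Now I would walk along $P$ and consider the $\mathcal P_M$-indices of the vertices. We have $\mathcal P_M(v_0) = \mathcal P_M(t_i) = i$ and $\mathcal P_M(v_\ell) = \mathcal P_M(u) > i$, so let $j \ge 1$ be the smallest index with $\mathcal P_M(v_j) > i$. Then $\mathcal P_M(v_{j-1}) \le i$ by the choice of $j$, but also $\mathcal P_M(v_{j-1}) \ge \mathcal P_R(v_{j-1}) = i$ by Lemma~\ref{lem:shiftable}, so $\mathcal P_M(v_{j-1}) = i$, i.e.\ $v_{j-1} \in C_i$. Since $v_{j-1}$ and $v_j$ are adjacent in $G$, the vertex $v_j$ is a non-terminal (as argued above) in $N(C_i) \cap \mathcal P_M^{i<}$, which is exactly the definition of a vertex shiftable into $C_i$.

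The argument is essentially routine once the correct vertex $u$ is chosen and the correct path is examined; the main (small) obstacle is verifying the non-terminality of $v_j$, which reduces to checking that the path $P$ avoids $\mathcal P_R^{<i}\cup\set{t_{i+1},\ldots,t_k}$ (by construction) and that $v_j \neq t_i$ (which holds because $v_0 = t_i$ and $P$ is simple, so $j\ge 1$ forces $v_j\neq t_i$). This produces the desired shiftable vertex and completes the proof.
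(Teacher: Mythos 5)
Your proof is correct and follows the same underlying strategy as the paper's: use $\depth{M} > 0$ (Lemma~\ref{lem:sim}) together with the monotonicity $\mathcal P_M(v)\ge\mathcal P_R(v)$ from Lemma~\ref{lem:shiftable} to locate a vertex $u$ with $\mathcal P_R(u)=i<\mathcal P_M(u)$, then exploit the connectivity of $G[C^r_i]$ and the fact that $t_i$ is the unique terminal in $C^r_i$. Your path-from-$t_i$ with the first-jump criterion is a slightly cleaner packaging than the paper's, which picks a vertex $v\in C^r_i\cap C_j$ on the boundary with $C^r_i\setminus C_j$ and then case-splits on whether the adjacent vertex sits in a higher- or lower-indexed block of $\mathcal P_M$; your version avoids that split because starting at $t_i$ forces the first change of $\mathcal P_M$-index along the path to be an increase.
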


\begin{proof}
    By Lemma~\ref{lem:sim}, the depth of $M$ is more than zero. 
    This implies that there is a vertex $v \in C_j \cap C^r_i \neq \emptyset$ for some $i \neq j$. 
    Note that $v$ is not a terminal.
    By Lemma~\ref{lem:shiftable}, we have $i < j$. 
    Observe that $C^r_i \setminus C_j$ is not empty since $C^r_i$ contains terminal $t_i$ that is not contained in $C_j$. 
    Since $G[C^r_i]$ is connected, there is at least one vertex $w \in C^r_i \setminus C_j$ that is adjacent to $v$.
    If $j < \mathcal P_M(w)$, we have $w \neq t_i$ and hence $w$ is shiftable into $C_j$.
    Otherwise, $j > \mathcal P_M(w)$, we can conclude that $v$ is shiftable into $C_{\mathcal P_M(w)}$. 
    Hence the lemma follows.
\end{proof}

Let $\mathcal P_M = \set{C_1, \ldots, C_k}$ with $M \neq R$.
By Lemma~\ref{lem:sp}, $V \setminus M$ has at least one shiftable vertex.
The largest index $i$ of a component $C_i$ into which there is a shiftable vertex is denoted by $\li{M}$. 
There can be more than one vertices that are shiftable into $C_\li{M}$.
We say that a vertex $v$ is the {\em pivot} of $M$ if $v$ is shiftable into $C_\li{M}$, and moreover, if there are more than one such vertices, we select the pivot in the following algorithmic way:
\begin{enumerate}
    \item Let $Q$ be the set of vertices, each of which is shiftable into $C_{\li{M}}$.
    \item If $Q$ contains more than one vertices, we replace $Q$ as $Q := Q \cap C_s$, where $s$ is the maximum index with $Q \cap C_s \neq \emptyset$.
    \item If $Q$ contains more than one vertices, we compute the set of cut vertices of $G[C_s]$.
    If there is at least one vertex in $Q$ that is not a cut vertex of $G[C_s]$, remove all the cut vertices of $G[C_s]$ from $Q$. Otherwise, that is, $Q$ contains cut vertices only, remove a cut vertex $v \in Q$ from $Q$ if there is another cut vertex $w \in Q$ of $G[C_s]$ such that every path between $w$ and $t_s$ hits $v$.
    \item If $Q$ contains more than one vertices, remove all but arbitrary one vertex from $Q$.
\end{enumerate}
Note that if we apply this algorithm to $Q$, $Q$ contains exactly one vertex that is shiftable into $C_{\li{M}}$.
We select the remaining vertex in $Q$ as the pivot of $M$.
Now, we define the {\em parent} of $M$ for each $M \neq R$, denoted by $\Par{M}$, as follows: Let $\mathcal P_{\Par{M}} = \set{C'_1, \ldots, C'_k}$ such that
\[
    C'_i = 
    \left\{ \begin{array}{ll}
        C_i & (i \neq \li{M}, \mathcal P_M(p)) \\
        \displaystyle C_i \cup (C_{\mathcal P_M(p)} \setminus C)    & (i = \li{M})) \\
        C    & (i = \mathcal P_M(p)),
    \end{array} \right.
\]
where $p$ is the pivot of $M$ and $C$ is the component in $G[C_{\mathcal P_M(p)} \setminus \set{p}]$ including terminal $t_{\mathcal P_M(p)}$. 
Since $p$ has a neighbor in $C_{\li{M}}$, $G[C'_{\li{M}}]$ is connected, and hence $\Par{M}$ is a minimal edge multiway cut of $G$ as well.
If $M = \Par{M'}$ for some minimal edge multiway cut $M'$ of $G$, $M'$ is called a {\em child} of $M$.
The following lemma shows that $\Par{M}$ is shallower than $M$.
\begin{lemma}\label{lem:shallower}
    Let $M$ be a minimal edge multiway cut of $G$ with $M \neq R$.
    Then, $\Par{M}$ is shallower than $M$. 
\end{lemma}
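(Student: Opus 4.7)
The plan is to compute $\depth{M} - \depth{\Par{M}}$ directly by comparing the index functions $\mathcal P_M(\cdot)$ and $\mathcal P_{\Par{M}}(\cdot)$ vertex-by-vertex, and to show the difference is strictly positive.

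First, I would enumerate which vertices actually change index when passing from $M$ to $\Par{M}$. By the definition of $\Par{M}$, for $i \notin \{\li{M},\mathcal P_M(p)\}$ we have $C'_i = C_i$, so every such vertex contributes $0$ to the difference. For $i = \li{M}$, the new block $C'_{\li{M}} = C_{\li{M}} \cup (C_{\mathcal P_M(p)}\setminus C)$ contains $C_{\li{M}}$, so every vertex of $C_{\li{M}}$ stays at index $\li{M}$ and contributes $0$. Finally, for $i = \mathcal P_M(p)$, the block $C_{\mathcal P_M(p)}$ is split: vertices in $C$ land in $C'_{\mathcal P_M(p)} = C$ and keep index $\mathcal P_M(p)$, while vertices in $C_{\mathcal P_M(p)} \setminus C$ are merged into $C'_{\li{M}}$ and their index drops to $\li{M}$.

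Hence
\[
\depth{M} - \depth{\Par{M}} \;=\; \sum_{v \in V}\bigl(\mathcal P_M(v) - \mathcal P_{\Par{M}}(v)\bigr) \;=\; \bigl|C_{\mathcal P_M(p)} \setminus C\bigr| \cdot \bigl(\mathcal P_M(p) - \li{M}\bigr).
\]
It then remains to check that both factors on the right are strictly positive. Since $p$ is shiftable into $C_{\li{M}}$, by definition $p \in \mathcal P_M^{\li{M}<}$, so $\mathcal P_M(p) > \li{M}$, giving the second factor. For the first factor, $C$ was defined as the connected component of $G[C_{\mathcal P_M(p)} \setminus \{p\}]$ containing $t_{\mathcal P_M(p)}$, so by construction $p \notin C$; therefore $p \in C_{\mathcal P_M(p)} \setminus C$ and the set is nonempty.

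Combining these observations yields $\depth{M} - \depth{\Par{M}} \ge \mathcal P_M(p) - \li{M} \ge 1 > 0$, which is exactly the claim. I do not expect a real obstacle here; the only point that requires care is verifying that vertices of $C_{\li{M}}$ and of blocks with other indices truly do not change position under the reassignment, and that $p$ lies in the ``moved'' part rather than in $C$ — both follow immediately from the definitions of $\Par{M}$ and of the pivot.
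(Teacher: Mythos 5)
Your proof is correct and follows the same basic idea as the paper --- tracking how each vertex's index changes when passing from $\mathcal P_M$ to $\mathcal P_{\Par{M}}$ --- but you carry it out with more care, explicitly computing the drop as $\size{C_{\mathcal P_M(p)}\setminus C}\cdot(\mathcal P_M(p) - \li{M})$ and checking both factors are strictly positive (the second from shiftability, the first since $p \notin C$). The paper's two-sentence proof asserts that $C'_i \subseteq C_i$ for every $i$, which is literally false for $i = \li{M}$ where $C'_{\li{M}} = C_{\li{M}} \cup (C_{\mathcal P_M(p)}\setminus C) \supseteq C_{\li{M}}$; the intended claim is that $\mathcal P_{\Par{M}}(v) \le \mathcal P_M(v)$ for every $v$ with strict inequality at $v = p$, which is exactly what your more explicit bookkeeping establishes.
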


\begin{proof}
    From the definition of shiftable vertex, it follows that $\mathcal P_M(p) > \li{M}$.
    This implies that $C'_i \subseteq C_i$ for $C_i \in \mathcal P_M$ and $C'_i \in \mathcal P_{\Par{M}}$.
\end{proof}

This lemma ensures that for every minimal edge multiway cut $M$ of $G$, we can eventually obtain the root $R$ by tracing their parents at most $kn$ times.

Finally, we are ready to design the neighborhood of each minimal edge multiway cut $M$ of $G$. 
The neighborhood of $M$ is defined so that it includes all the children of $M$ and whose size is polynomial in $n$.
Let $C$ be a set of vertices that induces a connected subgraph in $G$.
The \emph{boundary of $C$}, denoted by $B(C)$, is the set of vertices in $C$ that has a neighbor outside of $C$.

\begin{lemma}\label{lem:children}
    Let $M$ and $M'$ be minimal edge multiway cut of $G$ with $\Par{M'} = M$.
    Let $\mathcal P_{M} = \set{C_1, \ldots, C_k}$.
    Then, the pivot $p$ of $M'$ belongs to the boundary of $C_{\li{M'}}$ and is adjacent to a vertex in $C_{\mathcal P_{M'}(p)}$.
\end{lemma}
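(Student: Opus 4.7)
The plan is to unpack the definition of $\Par{M'}$ together with the pivot-selection rules, and to argue directly that the pivot $p$ is a boundary vertex of $C_{\li{M'}}$ and is adjacent to some vertex of $C_{\mathcal P_{M'}(p)}$. The statement really only asks us to verify that the parent construction ``plants'' the pivot in the right place, so the bulk of the work is bookkeeping together with one small connectivity observation.

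First, I would fix notation: let $p$ be the pivot of $M'$, let $\ell = \li{M'}$ and $j = \mathcal P_{M'}(p)$ (so $\ell < j$, since $p$ is shiftable into $C'_\ell$ and is non-terminal by definition of shiftable), and write $\mathcal P_{M'} = \{C'_1,\ldots,C'_k\}$. Reading off the definition of $\Par{M'}$, let $C$ be the component of $G[C'_j \setminus \{p\}]$ containing $t_j$; then in $\mathcal P_M$ we have $C_j = C$ and $C_\ell = C'_\ell \cup (C'_j \setminus C)$, while $C_i = C'_i$ for every other $i$. Since $p \in C'_j \setminus C$, this immediately places $p$ in $C_\ell$.

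The heart of the argument is to show that $p$ has a neighbor in $C = C_j$. I would split on whether $p$ is a cut vertex of $G[C'_j]$. If $p$ is not a cut vertex, then $G[C'_j \setminus \{p\}]$ is connected, so $C = C'_j \setminus \{p\}$; and since $G[C'_j]$ is connected with $p \in C'_j$, $p$ must have a neighbor in $C'_j \setminus \{p\} = C$. If $p$ is a cut vertex, I would use the standard fact that every component of $G[C'_j \setminus \{p\}]$ contains a neighbor of $p$, since otherwise that component would be disconnected from the rest of $G[C'_j]$, contradicting the connectivity of $C'_j$ in $G - M'$; applying this to the component $C$ gives the required neighbor. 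This is the only place where the intricate pivot-selection procedure could conceivably matter, but in fact the argument needs nothing about how $p$ was chosen beyond $p \in C'_j$.

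Having secured a neighbor of $p$ in $C_j$, the conclusion is immediate: $C_\ell$ and $C_j$ are distinct blocks of the partition $\mathcal P_M$ and hence disjoint, so $p \in C_\ell$ has a neighbor outside $C_\ell$, which means $p \in B(C_{\li{M'}})$, and the same neighbor witnesses adjacency to $C_{\mathcal P_{M'}(p)}$. The main obstacle, such as it is, lies in carefully distinguishing the parallel partitions $\mathcal P_M$ and $\mathcal P_{M'}$ and keeping the indices $\ell$ and $j$ straight; the underlying graph-theoretic content reduces to the one-line observation about neighbors of cut vertices.
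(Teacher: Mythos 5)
Your proof is correct and follows essentially the same route as the paper: identify that the parent construction places $p$ in $C_{\li{M'}}$, then produce a neighbor of $p$ inside the component of $G[C'_{\mathcal P_{M'}(p)}\setminus\{p\}]$ containing $t_{\mathcal P_{M'}(p)}$, which becomes $C_{\mathcal P_{M'}(p)}$ in $\mathcal P_M$. You are slightly more explicit than the paper in justifying why such a neighbor exists (the cut-vertex case split), a step the paper states without argument, but the underlying reasoning is the same.
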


\begin{proof}
    Let $\mathcal P_{M'} = \set{C'_1, \ldots, C'_k}$ and let $s = \mathcal P_{M'}(p)$. 
    Since $p$ is shiftable, it belongs to the boundary of $C'_{s}$.
    Moreover, $p$ belongs to $C_{\li{\mathcal P'}}$.
    Since $G[C'_s]$ is connected and has at least two vertices ($p$ and $t_s$),
    $p$ has a neighbor $w$ in $C'_s$.
    We can choose $w$ as a vertex in the component of $G[C'_s \setminus \{p\}]$ including terminal $t_s$.
    This implies that $\mathcal P_M(w) = s$ and hence $p$ belongs to the boundary of $C_{\li{M'}}$
    and is adjacent to $w \in C_s$.
\end{proof}

The above lemma implies every pivot of a child of $M$ is contained in a boundary of $C_i$ for some $1 \le i \le k$.
Thus, we define the neighborhood of $M$ as follows.
Let $\mathcal P_M = \{C_1, \ldots, C_k\}$.
For each $C_i$, we pick a vertex $v \in B(C_i)$ with $v \neq t_i$. Let $C$ be the set of components in $G[C_i \setminus \set{v}]$ which does not include $t_i$.
Note that $C$ can be empty when $v$ is not a cut vertex in $G[C_i]$. 
For each $1 \le i < j \le k$ and $N(v) \cap C_j \neq \emptyset$,
$\mathcal P_{M'} = \set{C'_1, \ldots, C'_k}$ is defined as:
\[
    C'_\ell = 
    \left\{ \begin{array}{ll}
        C_\ell & (\ell \neq i, j) \\
        C_\ell \cup (C \cup \set{v}) & (\ell = j) \\
        C_\ell \setminus (C \cup \set{v}) & (\ell = i).
    \end{array} \right.
\]
The neighborhood of $M$ contains such $M'$ if $\Par{M'} = M$ for each choice of $C_i$, $v \in B(C_i) \setminus \set{t_i}$, and $C_j$.
The heart of our algorithm is the following lemma.

\begin{algorithm}[t]
    \caption{Enumerating the minimal multiway cuts of $G$ in $\order{knm}$ delay and $\order{kn^2}$ space. }
    \label{algo:emc}
    \Procedure{\EnumMC{$G, M, d$}}{
        \lIf{$d$ is even}{Output $M$}
        \For{$C_i \in \mathcal P_M$\label{algo:loop1}}{
            \For{$v \in B(C_i)$ with $v \neq t_i$\label{algo:loop2}}{
                $\mathcal P' \gets \mathcal P$\tcp*{$\mathcal P' = \set{C'_1, \ldots, C'_k}$}
                $C'_i \gets$ the component including $t_i$ in $G[C_i \setminus \set{v}]$\;
                $C \gets C_i \setminus C'_i$\;
                \For{$j$ with $j > i$ and $N(v) \cap C_j \neq \emptyset$\label{algo:loop3}}{
                    $C'_j \gets C_j \cup C$\; 
                    \lIf{$\Par{M'} = M$\label{algo:par}}{
                        \EnumMC{$G, M', d + 1$}
                    }
                    $C'_j \gets C_j$\label{algo:loop3end}\; 
                }
                \label{algo:loop2end}
            }
        }
        \label{algo:loop1end}
        \lIf{$d$ is odd}{Output $M$}
    }
\end{algorithm}

\begin{lemma}\label{lem:correctness}
    Let $M$ be a minimal edge multiway cut of $G$.
    Then, the neighborhood of $M$ includes all the children of $M$.
\end{lemma}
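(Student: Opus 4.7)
The plan is, given an arbitrary child $M'$ of $M$, to exhibit a specific triple considered in the neighborhood construction whose output is exactly $M'$. Write $\mathcal P_M = \set{C_1, \ldots, C_k}$ and $\mathcal P_{M'} = \set{C'_1, \ldots, C'_k}$, let $p$ be the pivot of $M'$, and set $i = \li{M'}$ and $j = \mathcal P_{M'}(p)$. By the definition of shiftability, $p \notin T$ and $j > i$, and Lemma~\ref{lem:children} gives $p \in B(C_i) \setminus \set{t_i}$ together with $N(p) \cap C_j \neq \emptyset$. Hence the triple $(C_i, p, j)$ is one of those enumerated by the neighborhood construction; let $M''$ be the partition it produces. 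The algorithm filters by the condition $\Par{M''} = M$, which holds automatically once we prove $M'' = M'$, so it is enough to establish this equality.

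I would next unfold the equation $\Par{M'} = M$ to get an explicit description of the blocks of $\mathcal P_M$. Let $\tilde{C}$ denote the connected component of $G[C'_j \setminus \set{p}]$ containing $t_j$, and set $C := C'_j \setminus (\tilde{C} \cup \set{p})$. The parent definition then yields $C_\ell = C'_\ell$ for $\ell \neq i, j$, $C_j = \tilde{C}$, and $C_i = C'_i \cup \set{p} \cup C$. On the other hand, the neighborhood construction with $v = p$ sets the $i$-th block of $\mathcal P_{M''}$ to $K$, the component of $G[C_i \setminus \set{p}]$ containing $t_i$, and its $j$-th block to $C_j \cup (C_i \setminus K)$. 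Comparing with $\mathcal P_{M'}$, the identity $M'' = M'$ reduces to the single assertion $K = C'_i$; since $C'_i \subseteq K$ is automatic, this is equivalent to saying that no vertex of $C$ is adjacent in $G$ to any vertex of $C'_i$. If $C = \emptyset$ this is immediate, so I can assume $C \neq \emptyset$.

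The bulk of the work is a proof by contradiction leveraging the pivot-selection tiebreaking. Suppose $w \in C$ has a neighbor in $C'_i$. Then $w$ is non-terminal (the only terminal inside $C'_j$ is $t_j$, which lies in $\tilde{C}$), sits in $C'_j$ at position $j > i$, and is adjacent to $C'_i = C'_{\li{M'}}$; hence $w$ is shiftable into $C'_i$ and therefore, like $p$, lies in the initial candidate set $Q$ used to select the pivot of $M'$. For $p$ to survive the max-position filtering step, the maximum position meeting $Q$ must equal $j$, which keeps both $p$ and $w$ in $Q$. Because $C \neq \emptyset$, the components $\tilde{C}$ and the component of $w$ in $G[C'_j \setminus \set{p}]$ witness that $p$ is a cut vertex of $G[C'_j]$. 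If $w$ is not a cut vertex of $G[C'_j]$, the cut-vertex step removes every cut vertex of $G[C'_j]$, and in particular $p$, from $Q$; otherwise, since $w \in C$ and $t_j \in \tilde{C}$ lie in distinct components of $G[C'_j \setminus \set{p}]$, every $w$-$t_j$ path inside $G[C'_j]$ must go through $p$, so $w$ witnesses the elimination of $p$ in the same step. Either way $p$ is evicted from $Q$, contradicting its role as the pivot of $M'$.

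I expect the main obstacle to be precisely this third paragraph: the four-step pivot-selection rule is intricate, and one has to simultaneously inspect three pieces of combinatorial data---the blocks of $\mathcal P_M$, the blocks of $\mathcal P_{M'}$, and the components of $G[C'_j \setminus \set{p}]$---to verify that no unintended shiftable vertex can masquerade as the pivot. Once the rigidity of the pivot is established, the first two paragraphs reduce to bookkeeping.
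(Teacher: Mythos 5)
Your argument is correct and follows essentially the same route as the paper: identify the triple $(C_i, p, j)$ via Lemma~\ref{lem:children}, reduce $M'' = M'$ to showing no vertex of $C$ is adjacent to $C'_i$, and close via the pivot-selection tiebreak (the paper packages this last step as a standalone Claim about components of $G[C'_j \setminus \{p\}]$, which you re-derive inline). The one phrasing slip --- in your second case you argue only through the path criterion of step~3, whereas if some other vertex of $Q$ is a non-cut vertex the first sub-case fires instead --- is harmless, since $p$ is itself a cut vertex and is evicted in that sub-case as well.
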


To prove this lemma, we first show the following technical claim.
\begin{claim*}\label{lem:pivot}
    Let $p$ be the pivot of $M$ and let $s = \mathcal P_M(p)$. Then, for every connected component $C$ of $G[C_{s} \setminus \{p\}]$, either $C$ contains terminal $t_s$ or $C$ has no any shiftable vertex into $C_{\li{M}}$.
\end{claim*}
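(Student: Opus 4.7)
The plan is to argue by contradiction: I assume there exists a component $C$ of $G[C_s \setminus \{p\}]$ that does not contain $t_s$ but does contain some vertex $w$ that is shiftable into $C_{\li{M}}$, and I trace through the four steps of the pivot-selection procedure to derive a contradiction with $p$ being the chosen pivot.

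First, I would observe that $w$ belongs to $Q$ after step~1 (since $w$ is shiftable into $C_{\li{M}}$) and $w$ still belongs to $Q$ after step~2, because $w \in C \subseteq C_s$ and $p \in C_s$ witnesses that $C_s$ is the largest-index block intersecting $Q$. Thus immediately before step~3, both $w$ and $p$ belong to $Q$.

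Next, I would perform a case analysis on whether $p$ is a cut vertex of $G[C_s]$. If $p$ is not a cut vertex, then $G[C_s \setminus \{p\}]$ is connected; since $t_s \in C_s \setminus \{p\}$, every component of $G[C_s \setminus \{p\}]$ contains $t_s$, immediately contradicting the existence of $C$. Hence $p$ must be a cut vertex of $G[C_s]$. The ``hard'' case is now: since $p$ survived step~3, the rule of step~3 forces $Q$ to contain no non-cut vertex (otherwise all cut vertices, including $p$, would have been removed). In particular $w \in Q$ must itself be a cut vertex of $G[C_s]$.

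Finally, I would apply the secondary rule in step~3 to $v := p$ using $w$ as the required ``other cut vertex''. Because $C$ is a component of $G[C_s \setminus \{p\}]$ not containing $t_s$ and $w \in C$, every path from $w$ to $t_s$ inside $G[C_s]$ must hit $p$. By the rule, $p$ should then have been deleted from $Q$ at step~3, contradicting the fact that $p$ was selected as the pivot. The main obstacle I anticipate is keeping the bookkeeping of the selection procedure straight, in particular making sure that the ``$w$'' witnessing the removal of $p$ in step~3 is indeed still present in $Q$ at that point, which is exactly why I first verify that $w$ survives steps~1 and~2.
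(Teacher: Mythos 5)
Your proof is correct and follows essentially the same route as the paper: the case split on whether $p$ is a cut vertex of $G[C_s]$, the observation that a shiftable vertex $w$ in a $t_s$-free component of $G[C_s\setminus\{p\}]$ must itself be a cut vertex, and the contradiction with the step-3 removal rule (every $w$-$t_s$ path hits $p$, so $p$ should have been discarded) are exactly the paper's argument. Your version is in fact a bit more explicit than the paper's, in particular in checking that $w$ survives the first two filtering steps of the pivot-selection procedure.
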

\begin{proof}[Proof of Claim]
    If $p$ is not a cut vertex in $G[C_s]$, clearly $G[C_s \setminus \{p\}]$ has exactly one component, which indeed has terminal $t_s$.
    Suppose otherwise. If there is a component $C$ of $G[C_s \setminus \{p\}]$ that has no terminal $t_s$ and has a shiftable vertex $v$ into $C_{\li{M}}$. 
    By the definition of $p$, $v$ is also a cut vertex of $G[C_s]$.
    Then, every path between $v$ and $t_s$ hits $p$. This contradicts to the choice of $p$. 
\end{proof}

\begin{proof}[Proof of Lemma~\ref{lem:correctness}]
    Let $M'$ be an arbitrary children of $M$ and let $\mathcal P_{M} = \set{C_1, \ldots, C_k}$ and $\mathcal P_{M'} = \set{C'_1, \ldots, C'_k}$. 
    By the definition of parent, every component $C_i$ except two is equal to the corresponding component $C'_i$.
    The only difference between them is two pairs of components $(C_{\li{M'}}, C'_{\li{M'}})$ and $(C_{\mathcal P_{M'}(p)}, C'_{\mathcal P_{M'}(p)})$. 
    
    Recall that, in constructing the neighborhood of $M$, we select a component $C_i$, $v \in B(C_i)$ with $v \neq t_i$, and a component $C_j$ with $N(v) \cap C_j \neq \emptyset$.
    By Lemma~\ref{lem:children}, the pivot $p$ of $M'$ is included in the boundary of $C_{\li{M'}}$.
    Moreover, since, by Lemma~\ref{lem:children}, $p$ has a neighbor in $C'_{\mathcal P'(p)}$, 
    Thus, we can correctly select $i = \li{M'}$, $v = p$, and $j = \mathcal P_{M'}(p)$.
    
    Now, consider two components $C_{i}$ and $C_{j}$.
    By the definition of parent, $C_{i} = C'_{i} \cup (C'_{j} \setminus C_{j})$ and $C_{j}$ is the component of $G[C'_j \setminus \{v\}]$ including terminal $t_j$.
    Since $C'_{i} \cap C'_j = \emptyset$ and $C_{i} \cap C_{j} = \emptyset$, we have $C'_{i} = C_{i} \setminus (C'_j \setminus C_j)$.
    By the above claim, $C'_j \setminus C_j$ has only one shiftable vertex into $C'_{i}$, which is the pivot $v$ of $M$.
    By the definition of shiftable vertex, there are no edges between a vertex in $C'_j \setminus C_j \setminus \{v\}$ and a vertex in $C'_i$.
    This means that either $C'_j \setminus C_j \setminus \{v\}$ is empty or $v$ is a cut vertex in $G[C_{i} \setminus \{v\}]$ that separates $C'_j \setminus C_j \setminus \{v\}$ from $C'_{i}$.
    Therefore, $C'_{i}$ is the component of $G[C'_i \setminus \{v\}]$ including terminal $t_{i}$. 
    Moreover, by the definition of parent, we have $C'_j = C_j \cup (C_{i} \setminus C'_{i})$.
    Hence, the statement holds.
\end{proof}

Based on Lemma~\ref{lem:correctness}, Algorithm~\ref{algo:emc} enumerates all the minimal edge multiway cuts of $G$. Finally, we analyze the delay and the space complexity of this algorithm. To bound the delay, we use the alternative output method due to Uno~\cite{Uno::2003}.

\begin{theorem}\label{theorem:runtime}
    Let $G$ be a graph and $T$ be a set of terminals. 
    Algorithm~\ref{algo:emc} runs in $\order{knm}$ delay and $\order{kn^2}$ space, where $n$ is the number of vertices, $m$ is the number of edges, and $k$ is the number of terminals. 
\end{theorem}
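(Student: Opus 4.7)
The approach combines three ingredients: a depth bound on the recursion tree, a careful per-node work analysis, and Uno's alternative output method to translate amortized bounds into worst-case delay.

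I would first dispatch termination and uniqueness. By Lemma~\ref{lem:shallower}, every recursive call is made on a minimal edge multiway cut strictly shallower than the current one, and since $\mathcal P_M(v) - \mathcal P_R(v) \le k-1$ for every $v \in V$, the depth of any minimal edge multiway cut is at most $(k-1)n$; hence the recursion tree has depth $O(kn)$. Lemma~\ref{lem:correctness} ensures that every child of $M$ is produced as some candidate $M'$ by the triple loop, and the explicit test $\Par{M'} = M$ on line~\ref{algo:par} prevents descending into anything but a child; consequently each minimal edge multiway cut is visited, and output, exactly once.

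Next I would bound the work inside a single invocation, excluding recursive calls. The outer loop iterates $k$ times. The middle loop iterates over $B(C_i) \setminus \{t_i\}$, and since the blocks of $\mathcal P_M$ partition $V$, the middle-loop iterations summed over all $i$ number at most $n$. Within each middle iteration, computing the component $C'_i$ of $G[C_i \setminus \{v\}]$ containing $t_i$ takes $O(n+m)$ time. The inner loop then runs at most $k$ times, and each iteration performs the test $\Par{M'}=M$. This check amounts to enumerating every shiftable vertex of $M'$, computing $\li{M'}$, and then selecting the pivot by the four-step rule (which involves a single cut-vertex analysis of the relevant block); each of these subroutines runs in $O(n+m)$ time, so the whole parent check is $O(n+m)$. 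Aggregating, one invocation does $O(n \cdot k \cdot (n+m)) = O(knm)$ work in total.

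For the delay, I invoke Uno's alternative output technique: because \Cref{algo:emc} outputs $M$ before its recursive calls when $d$ is even and after them when $d$ is odd, between any two consecutive outputs the DFS traversal does work corresponding to only $O(1)$ recursion nodes. Hence the delay is $O(knm)$. The space usage is dominated by the recursion stack, which has depth $O(kn)$; each active frame stores the current partition $\mathcal P$ as a map $V \to \{1,\ldots,k\}$ together with $O(n)$ auxiliary bookkeeping, giving $O(kn \cdot n) = O(kn^2)$ additional space beyond the input graph. The main technical hurdle is verifying that $\Par{M'}=M$ can indeed be tested in $O(n+m)$ time faithfully to the multi-step pivot-selection rule of Section~\ref{sec:edge}; if that step were, say, $O(nm)$, the claimed delay would degrade by a factor of $n$ and the theorem would fail as stated.
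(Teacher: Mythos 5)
Your proposal is correct and follows essentially the same route as the paper: bound the per-node work by $O(knm)$ via the observation that the combined outer/middle loop iterates at most $n$ times (total boundary size) with an $O(k(n+m))$-time inner loop, bound the recursion depth by $O(kn)$ via Lemma~\ref{lem:shallower} and the nonnegativity from Lemma~\ref{lem:shiftable}, apply Uno's alternative output method for the delay, and multiply depth by per-frame storage for the space bound. The one place where the paper is slightly more explicit is the delay step: the paper spells out the three-case analysis of consecutive events on the Euler tour to show every window of three events contains an output, whereas you assert the $O(1)$-nodes-between-outputs property directly; this is standard but worth stating once the even/odd output convention is in play. Your flagged concern about whether $\Par{M'}=M$ can be tested in $O(n+m)$ time is legitimate -- the pivot-selection rule requires computing shiftable vertices, $\li{M'}$, and a cut-vertex decomposition of one block -- but each of these is a single DFS/BFS pass, so $O(n+m)$ holds (the paper states $O(m)$, which is the same thing for a connected graph); the paper glosses over this just as briefly as you do.
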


\begin{proof}
    Let $\mathcal T$ be the solution graph for minimal edge multiway cuts of $G$.
    First, we analyze the total running time and then prove the delay bound.
    
    Let $M$ be a minimal edge multiway cut of $G$. 
    In each node of $\mathcal T$, line~\ref{algo:loop1} guesses the component $C_i \in \mathcal P$ and line~\ref{algo:loop2} guesses the vertex in the boundary $B(C_i)$. 
    The loop block from line~\ref{algo:loop2} to line~\ref{algo:loop2end} is executed at most $n$ times in total since the total size of boundaries is at most $n$. 
    The computation of $\mathcal P_{M'}$ and $\Par{M'}$ can be done in $\order{m}$ time for each $j$ by keeping $\mathcal P_M$ with $M$.
    Thus, each node of $\mathcal T$ is processed in $\order{knm}$ time.
    Since the algorithm outputs exactly one minimal edge multiway cut of $G$ in each node of $\mathcal T$, the total computational time is $\order{knm\size{\mathcal M}}$, where $\mathcal M$ is the set of minimal edge multiway cuts of $G$.
    Moreover, we can bound the delay in $\order{knm}$ time using the alternative output method~\cite{Uno::2003} since this algorithm outputs a solution in each node in $\mathcal T$. 
    The detailed discussion is postponed to the last part of the proof.
    
    We show the space complexity bound. 
    Let $M$ be a minimal edge multiway cut of $G$ and $P$ be a path between $M$ and the root $R$ in $\mathcal T$. 
    In each node, we need to store $\mathcal P_{M'}$ and the boundary for each $C'_i \in \mathcal P_{M'}$.
    Since the size of each set in $\mathcal P_{M'}$ is $\order{n}$ and the depth of $\mathcal T$ is $kn$, the space complexity is $\order{kn^2}$. 
    
    To show the delay bound, we use the alternative output method due to~\cite{Uno::2003}. 
    We replace each edge of $\mathcal T$ with a pair of parallel edges. 
    Then, the traversal of $\mathcal T$ naturally defines an Eulerian tour on this replaced graph.
    Let $S = (n_1, \dots, n_t)$ be the sequence of nodes that appear on this tour in this order.
    Note that each leaf node appears exactly once in $S$ and each internal node appears more than once in $S$.
    From now on, we may call each $n_i$ an {\em event} and denote by $e_i$ the $i$-th event in $S$.
    Observe that if the depth of $n_i$ is even (resp. odd) in $\mathcal T$, then the first (resp. last) event in $S$ corresponding to this node outputs a solution.
    Now, let us consider three consecutive events $e_i$, $e_{i + 1}$, and $e_{i + 2}$ in $S$. 
    Since each node is processed in $O(knm)$ time, it suffices to show that at least one of these events outputs a solution.
    If at least one of these events corresponds to a leaf node, this claim obviously holds.
    Hence, we assume not in this case.
    Since each of $e_i$, $e_{i+1}$, and $e_{i+2}$ corresponds to an internal node of $\mathcal T$,
    there are three possibilities (Figure~\ref{fig:alternate}):   
    (1)   $n_{i + 1}$ is a child of $n_i$ and $n_{i + 2}$ is a child of $n_{i + 1}$. 
    (2)  $n_{i + 1}$ is a parent of $n_i$ and $n_{i + 2}$ is a parent of $n_{i + 1}$. 
    (3) $n_{i + 1}$ is a parent of both $n_i$ and $n_{i + 2}$.
    \begin{figure}
        \centering
        \includegraphics[width=0.6\textwidth]{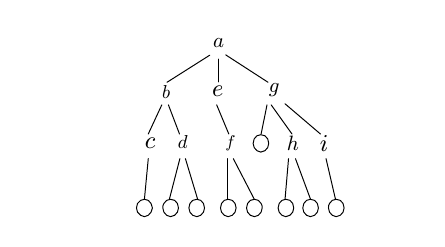}
        \caption{The figure depicts an example of the three cases of consecutive three events $e_i$, $e_{i+1}$, and $e_{i+2}$ in traversing $\mathcal T$: (1) $a,b,c$; (2) $f,e,a$; (3) $h,g,i$.}
        \label{fig:alternate}
    \end{figure}
    Note that these three nodes must be distinct since none of them is a leaf of $\mathcal T$.
    For case (1), the events $e_{i+1}$ and $e_{i+2}$ are the first events for distinct nodes $n_{i+1}$ and $n_{i+2}$, respectively.
    Since exactly one of $n_{i+1}$ and $n_{i+2}$ has even depth, therefore, either $e_{i+1}$ or $e_{i+2}$ outputs a solution.
    For case (2), the events $e_{i+1}$ and $e_{i+2}$ are the last events for those nodes, and hence exactly one of them outputs a solution as well.
    For case (3), suppose first that $n_{i + 1}$ has even depth.
    Then $n_i$ has odd depth, and hence $e_i$ is the last event for $n_i$ and hence $e_i$ outputs a solution.
    Suppose otherwise that $n_{i+1}$ has odd depth. Then, $n_{i+2}$ has even depth and $e_{i+2}$ is the first event for this node.
    This, $e_{i+2}$ outputs a solution.
    Therefore, the delay is $\order{knm}$. 
\end{proof}

\section{Minimal Steiner node multicuts enumeration}
\label{sec:steiner}
We have developed efficient enumeration algorithms for minimal multicuts and minimal multiway cuts so far.
In this section, we consider a generalized version of node multicuts, called {\em Steiner node multicuts}, and discuss a relation between this problem and the minimal transversal enumeration problem on hypergraphs.

Let $G = (V, E)$ be a graph and let $T_1, T_2, \ldots T_k \subseteq V$.
A subset $S \subseteq V \setminus (T_1 \cup T_2 \cup \cdots \cup T_k)$ is call a {\em Steiner node multicut} of $G$ if for every $1 \le i \le k$, there is at least one pair of vertices $\set{s, t}$ in $T_i$ such that $s$ and $t$ are contained in distinct components of $G - S$.
If $|T_i| = 2$ for every $1 \le i \le k$, $S$ is an ordinary node multicut of $G$.
This notion was introduced by Klein et al.~\cite{Klein::2012} and the problem of finding a minimum Steiner node multicut was studied in the literature~\cite{Bringmann::2016,Klein::2012}.

Let $H = (U, \mathcal E)$ be a hypergraph. A {\em transversal} of $H$ is a subset $S \subseteq U$ such that for every hyperedge $e \in \mathcal E$, it holds that $e \cap S \neq \emptyset$.
The problem of enumerating inclusion-wise minimal transversals, also known as dualizing monotone boolean functions, is one of the most challenging problems in this field.
There are several equivalent formulations of this problem and efficient enumeration algorithms developed for special hypergraphs.
However, the current best enumeration algorithm for this problem is due to Fredman and Khachiyan \cite{Fredman::1996}, which runs in quasi-polynomial time in the size of outputs, and no output-polynomial time enumeration algorithm is known.
In this section, we show that the problem of enumerating minimal Steiner node multicuts is as hard as this problem.

Let $H = (U, \mathcal E)$ be a hypergraph.
We construct a graph $G$ and sets of terminals as follows.
We begin with a clique on $U$.
For each $e \in \mathcal E$, we add a pendant vertex $v_e$ adjacent to $v$ for each $v \in e$ and set $T_e = \{v_e : v \in e\}$.
Note that $G$ is a split graph, that is, its vertex set can be partitioned into a clique $U$ and an independent set $\{v_e: e \in \mathcal E, v \in e\}$.
\begin{lemma}\label{lem:transversal}
    $S \subseteq U$ is a transversal of $H$ if and only if it is a Steiner node multicut of $G$.
\end{lemma}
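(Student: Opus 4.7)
The plan is to reduce both directions to a direct structural description of the components of $G - S$. Writing the pendant attached to $v$ on behalf of hyperedge $e$ as $u_{v,e}$, I first observe that in $G - S$ the surviving clique vertices $U \setminus S$ induce a single connected component (when non-empty); every pendant $u_{v,e}$ with $v \in U \setminus S$ is glued to this component via its unique neighbor $v$; and every pendant $u_{v,e}$ with $v \in S$ becomes an isolated singleton, since its only neighbor has been removed. This is the entire component structure of $G - S$, and both directions of the lemma will read off from it.

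For the ``only if'' direction (Steiner node multicut implies transversal) I would argue by contrapositive. Suppose $S \subseteq U$ is not a transversal, so there is some hyperedge $e \in \mathcal{E}$ with $e \cap S = \emptyset$. Then every $v \in e$ lies in $U \setminus S$, and hence every terminal $u_{v,e} \in T_e$ is attached to the unique component induced by $U \setminus S$. Consequently no pair of terminals in $T_e$ is separated in $G - S$, so $S$ fails the Steiner node multicut condition for the set $T_e$.

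For the ``if'' direction (transversal implies Steiner node multicut), I take an arbitrary $e \in \mathcal{E}$ and, using the transversal property, pick some $v \in e \cap S$; then $u_{v,e}$ is an isolated singleton in $G - S$. Picking any $v' \in e \setminus \{v\}$ (which exists provided $|e| \ge 2$), the pendant $u_{v',e}$ lies either in another singleton component (when $v' \in S$) or in the component containing $U \setminus S$ (when $v' \notin S$); in both cases the pair $\{u_{v,e}, u_{v',e\}} \subseteq T_e$ is separated in $G - S$, which is precisely what the Steiner multicut condition for $T_e$ requires. The only mild subtlety is the degenerate case $|e| = 1$, where $T_e$ contains no pair and the Steiner multicut condition is vacuously unsatisfiable; this is benign since one may assume (or preprocess) that every hyperedge has size at least two before applying the reduction. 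There is no real obstacle beyond this bookkeeping: the gadget was constructed precisely so that hitting $v \in e$ with $S$ is equivalent to detaching the terminal $u_{v,e}$ from the rest of $T_e$.
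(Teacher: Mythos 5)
Your proof is correct and rests on the same observation as the paper's: since $U$ is a clique and each terminal in $T_e$ is a pendant hung off a distinct vertex of $e$, two terminals of $T_e$ can only be separated by removing one of their base vertices, and conversely hitting $e$ isolates the corresponding pendant. The explicit description of the component structure of $G-S$ and the remark on the degenerate case $|e|=1$ (which the paper's proof silently assumes away) are small refinements, but the argument is the same.
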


\begin{proof}
    Suppose $S$ is a minimal transversal of $H$. Then, for each $e \in \mathcal E$, at least one vertex $v$ of $e$ is selected in $S$.
    Then, $v_e$ is an isolated vertex in $G - S$, and hence $S$ is a Steiner multicut of $G$.

    Conversely, suppose $S$ is a Steiner multicut of $G$.
    For each $T_e$, at least one pair of vertices $u_e$ and $v_e$ in $T_e$ are separated in $G - S$.
    Since $N(T_e)$ forms a clique, at least one of $u$ and $v$ is selected in $S$. Therefore, we have $S \cap e \neq \emptyset$.
\end{proof}

This lemma implies that if one can design an output-polynomial time algorithm for enumerating minimal Steiner node multicuts in a split graph, it allows us to do so for enumerating minimal transversals of hypergraphs.
For the problem of enumerating minimal Steiner edge multicuts, we could neither develop an efficient algorithm nor prove some correspondence as in Lemma~\ref{lem:transversal}.
We leave this question for future work.

\bibliographystyle{plain}
\bibliography{main}
\end{document}